\documentclass[11pt]{article}

\usepackage{amsmath}
\usepackage{amssymb}
\usepackage{amsthm}
\usepackage{mathtools,nccmath}
\usepackage[english]{babel}
\usepackage{amsthm}

\usepackage[round]{natbib}

\usepackage{bibunits}
\defaultbibliographystyle{apalike} 
\defaultbibliography{AbbrvRef}

\theoremstyle{definition}
\newtheorem{theorem}{Theorem}
\newtheorem{definition}{Definition}
\newtheorem{proposition}{Proposition}
\newtheorem{corollary}{Corollary}
\newtheorem{lemma}{Lemma}

\usepackage{graphicx}
\usepackage{subcaption}

\usepackage{booktabs}

\usepackage[colorlinks,citecolor=blue,urlcolor=blue]{hyperref}

\usepackage{fullpage}

\usepackage{algorithm}
\usepackage{algpseudocode}  
\algrenewcommand\algorithmicrequire{\textbf{Input:}}
\algrenewcommand\algorithmicensure{\textbf{Output:}}
\newcommand{\Input}{\Require}
\newcommand{\Output}{\Ensure}

\newcommand{\Bcal}{\mathcal{B}}

\newcommand{\Ecal}{\mathcal{E}}
\newcommand{\Fcal}{\mathcal{F}}

\newcommand{\Kcal}{\mathcal{K}}
\newcommand{\Lcal}{\mathcal{L}}

\newcommand{\Ncal}{\mathcal{N}}

\newcommand{\Pcal}{\mathcal{P}}

\newcommand{\Scal}{\mathcal{S}}
\newcommand{\Tcal}{\mathcal{T}}
\newcommand{\Ucal}{\mathcal{U}}

\def\R{\mathbb{R}} 
\def\E{\mathbb{E}} 
\def\P{\mathbb{P}} 

\def\bq{\boldsymbol{q}}

\def\bR{\boldsymbol{R}}

\usepackage{bm}

\def\bgamma{\boldsymbol{\gamma}}

\def\bmu{\boldsymbol{\mu}}
\def\bnu{\boldsymbol{\nu}}

\def\bomega{\boldsymbol{\omega}}

\def\diag{\textup{diag}}
\DeclareMathOperator*{\cov}{Cov}

\DeclareMathOperator{\Tr}{Tr}

\DeclareMathOperator*{\argmin}{argmin}

\DeclareMathOperator{\supp}{Supp}

\theoremstyle{definition}

\newtheorem{assumption}{Assumption}

\def\s{\Sigma}

\usepackage{enumitem,kantlipsum}

\usepackage[capitalize,noabbrev]{cleveref}

\usepackage{xcolor}

\title{
Fr\'echet regression of multivariate distributions with nonparanormal transport}
\author{Junyoung Park and Irina Gaynanova\thanks{Corresponding author: Irina Gaynanova (irinagn@umich.edu)}}

\date{\normalsize Department of Biostatistics, University of Michigan, Ann Arbor, Michigan, USA
}

\begin{document}

\maketitle
\begin{abstract}
Regression with distribution-valued responses and Euclidean predictors has gained increasing scientific relevance. While methodology for univariate distributional data has advanced rapidly in recent years, multivariate distributions, which additionally encode dependence across univariate marginals, have received less attention and pose computational and statistical challenges. In this work, we address these challenges with a new regression approach for multivariate distributional responses, in which distributions are modeled within the semiparametric nonparanormal family. By incorporating the nonparanormal transport (NPT) metric---an efficient closed-form surrogate for the Wasserstein distance---into the Fr\'echet regression framework, our approach decomposes the problem into separate regressions of marginal distributions and their dependence structure, facilitating both efficient estimation and granular interpretation of predictor effects. We provide theoretical justification for NPT, establishing its bi-Lipschitz equivalence to the Wasserstein distance and proving that it mitigates the curse of dimensionality. We further prove uniform convergence guarantees for regression estimators, both when distributional responses are fully observed and when they are estimated from empirical samples, attaining fast convergence rates comparable to the univariate case. The utility of our method is demonstrated via simulations and an application to continuous glucose monitoring data.
\end{abstract}

\noindent
\textit{Keywords}: Bures--Wasserstein metric; Curse of dimensionality; Distributional data; Gaussian copula; Optimal transport.

\begin{bibunit}

\def\spacingset#1{\renewcommand{\baselinestretch}%
{#1}\small\normalsize}

\section{Introduction}\label{sec:introduction}

Modern studies increasingly collect distributional data, where each data object consists of a sample drawn from an underlying probability distribution. Examples include mortality distributions across countries \citep{ghodratiDistributionondistributionRegressionOptimal2022}, joint distributions of financial asset returns \citep{chenSlicedWassersteinRegression2025}, and distributions of wearable device measurements \citep{matabuenaGlucodensitiesNewRepresentation2021,coulterFastVariableSelection2025}. Building on this growing availability, many scientifically relevant questions are formulated as regression with distribution-valued responses and scalar or Euclidean vector predictors. In particular, our motivating example comes from wearable device studies, where the goal is to investigate associations between multivariate distributional responses constructed from continuous glucose monitoring (CGM) measurements and blood-test biomarkers as predictors.

Multiple methods have been developed for univariate distributional responses. Early approaches employed bijective transformations of univariate distributions into Hilbert spaces, followed by the application of functional data analysis techniques \citep{petersenFunctionalDataAnalysis2016,talskaCompositionalRegressionFunctional2018}. More recently, metric-based approaches on the space of probability measures have gained substantial attention. The Wasserstein distance is a canonical choice because it respects the geometry of the distributional domain 
by measuring the optimal cost of transporting probability mass
\citep{villaniTopicsOptimalTransportation2003}; moreover, in the univariate setting, it admits a quantile-based closed form that yields efficient computation and a convenient $L^2$-space structure for theoretical analysis. Building on this tractability, Fr\'echet regression, a general regression framework for metric space-valued responses, has been successfully applied to the univariate Wasserstein setting \citep{petersenFrechetRegressionRandom2019}, with subsequent generalizations to variable selection \citep{tuckerVariableSelectionGlobal2023}, single-index model \citep{bhattacharjeeSingleIndexFrechet2023}, and non-Euclidean predictor \citep{bhattacharjeeNonlinearGlobalFrechet2025} settings.

However, significantly less progress has been made in the regression of multivariate distributions on $\R^d$ ($d \ge 2$). Multivariate generalization of the aforementioned transformations of univariate distributions is not straightforward \citep{petersenModelingProbabilityDensity2022}. Within metric-based approaches, the multivariate Wasserstein distance lacks a closed-form expression and poses substantial computational and statistical challenges. For two empirical distributions with $N$ samples, the Wasserstein distance computation is $O(N^3)$ \citep{peyreComputationalOptimalTransport2019}, while the convergence rate to its population counterpart is $O(N^{-1/\max\{4,d\}})$ \citep{fournierRateConvergenceWasserstein2015, niles-weedMinimaxEstimationSmooth2022}, reflecting the curse of dimensionality. Furthermore, Fr\'echet regression with the univariate Wasserstein distance \citep{petersenFrechetRegressionRandom2019} cannot be directly generalized to the multivariate setting due to violations of underlying theoretical assumptions, as demonstrated in \citet{fanConditionalWassersteinBarycenters2024}. While alternative distances, such as the Hellinger or Fisher--Rao metrics \citep{zhuGeodesicOptimalTransport2025}, can be used for metric-based analyses, they lack the mass-transport property of the Wasserstein distance and thus do not capture the geometry of the underlying distributional domain.

To retain geometric properties of the multivariate Wasserstein distance while alleviating its obstacles, recent regression methods follow two main lines. One line incorporates computationally cheaper surrogates for the Wasserstein distance---such as the Sinkhorn distance \citep{cuturiSinkhornDistancesLightspeed2013} or the sliced Wasserstein distance \citep{bonnotteUnidimensionalEvolutionMethods2013}---into Fr\'echet regression \citep{fanConditionalWassersteinBarycenters2024, chenSlicedWassersteinRegression2025}. 
However, these methods are sensitive to surrogate-specific hyperparameters, can remain computationally demanding to attain high numerical accuracy, and rely on restrictive assumptions on the support of the random response distribution for theoretical guarantees.
The other line significantly restricts the model space to multivariate Gaussian responses \citep{xuWassersteinFtestsFrechet2025, okanoDistributionondistributionRegressionWasserstein2024}, where the Wasserstein distance admits a closed-form expression, also called the Bures--Wasserstein (BW) metric. This yields substantial computational gains and interpretation via Gaussian parameters, but the model is often too restrictive for real-world data. Consequently, neither line is fully adequate. There remains a need for a methodology that moves beyond restrictive Gaussian assumption while maintaining computational efficiency and theoretical rigor.

In this work, we propose a new regression approach that \textit{decouples} multivariate distributional responses into tractable subcomponents. Copula models naturally support this strategy by separating a multivariate distribution into its univariate marginals and dependence structure. Specifically, we model distributions within the nonparanormal (Gaussian copula) family \citep{liuNonparanormalSemiparametricEstimation2009}, which while still restrictive, is substantially broader than Gaussian family by allowing skewed or heavy-tailed marginals, and it admits the nonparanormal transport (NPT) metric \citep{shaoFastDistanceComputation2026}, a closed-form, tuning-free surrogate of the Wasserstein distance that is faster to compute than other Wasserstein surrogates.
By combining NPT with the Fr\'echet regression framework, we decompose the regression problem into separate Fr\'echet regressions for univariate marginals and for latent Gaussian correlation matrices, equipped with the univariate Wasserstein and BW metrics, respectively. The resulting method, termed \textit{nonparanormal Fr\'echet regression}, is computationally efficient, yields granular interpretability of predictor effects on marginals and dependence, supports component-wise inference, and comes with rigorous theoretical guarantees while alleviating the curse of dimensionality associated with multivariate Wasserstein distance. %

From the computational standpoint, we take advantage of decoupling. For regression of $d$ univariate marginal components, fast computation is based on existing methods with the univariate Wasserstein distance \citep{petersenFrechetRegressionRandom2019}. For regression of the latent Gaussian component, we address a new computational challenge: while the BW metric is closed form, the parameter space must be restricted to correlation matrices, rather than general covariance matrices, preventing application of existing techniques \citep{xuWassersteinFtestsFrechet2025}. We develop a new algorithm based on projected Riemannian gradient descent, which combines the efficient Riemannian method for covariance matrices \citep{chewiGradientDescentAlgorithms2020, fanConvergenceProjectedBuresWasserstein2024} with closed-form projection onto the space of correlation matrices. %

From the theoretical standpoint, our main contributions are three-fold:
\begin{enumerate}[wide, labelindent=0pt, topsep=2pt]
    \item We provide the first rigorous justification of NPT as a surrogate for the multivariate Wasserstein distance, {moving beyond the empirical observations of \citet{shaoFastDistanceComputation2026}.} Our main result is an explicit two-sided bound establishing bi-Lipschitz equivalence between the two distances; the upper-bound on the Wasserstein distance holds under minimal regularity conditions and relies on a novel Hermite-polynomial argument. We also show that NPT shares the geometric properties of the Wasserstein distance and coincides with it exactly under identical dependence structures.
    \item We circumvent the curse of dimensionality of multivariate Wasserstein distance under semiparametric model. Given only finite samples of size $N$, the general Wasserstein error bound has order $N^{-1/d}$, and even under higher-order smoothness the exponent retains its dependence on $d$ \citep{niles-weedMinimaxEstimationSmooth2022}. In contrast, by expanding the nonparanormal family to accommodate a semiparametric estimator, we establish a finite-sample NPT error bound of order $N^{-\alpha}$, where the dimension $d$ enters the rate only through multiplicative constants, while the exponent $\alpha$ is independent of $d$, with $\alpha=1/2$ under additional regularity conditions. Crucially, we are able to transfer this rate to multivariate Wasserstein distance via our theoretical upper-bound on the Wasserstein distance through the NPT.

    \item We establish fast uniform convergence rates for the nonparanormal Fr\'echet regression estimator, covering both fully observed and empirically sampled response distributions. This fills a critical gap in the literature, as existing multivariate methods either assume fully observed distributions or rely on density estimation that may suffer from the curse of dimensionality  \citep{fanConditionalWassersteinBarycenters2024, chenSlicedWassersteinRegression2025, xuWassersteinFtestsFrechet2025}. Our primary theoretical innovation lies in the analysis of the latent correlation regression estimator under the BW metric. In the fully observed case, we obtain a sharp parametric convergence rate under weaker assumptions than both the general Fr\'echet regression theory \citep{petersenFrechetRegressionRandom2019} and the recent covariance matrix regression theory \citep{xuWassersteinFtestsFrechet2025}. We accomplish this through new empirical process and compactness arguments that utilize the correlation matrix constraints.
    In the more challenging case where distributions are estimated from samples, we develop a Hessian-perturbation argument that controls the sampling error and yields a sharp convergence rate. 
    Combining these new developments with a sharper analysis of univariate marginals, we show that our regression estimators achieve fast convergence rates in NPT, comparable to those in univariate distributional regression \citep{petersenFrechetRegressionRandom2019, zhouWassersteinRegressionEmpirical2024}. Finally, through our established bounds for NPT, these fast rates transfer directly to the multivariate Wasserstein distance.
\end{enumerate}

The rest of the paper is organized as follows. Section~\ref{sec:prelim} introduces notation and reviews preliminaries of the Wasserstein distance. Section~\ref{sec:nonpara-npt-metric} reviews the nonparanormal family and the NPT metric, and establishes their theoretical properties, including empirical estimation theory and extension of the nonparanormal family. Section~\ref{sec:frechet-regression} proposes nonparanormal Fr\'echet regression, derives an estimation algorithm, and introduces component-wise model assessment. Section~\ref{sec:theory} establishes asymptotic convergence results for nonparanormal Fr\'echet regression. Section~\ref{sec:simulations} evaluates the performance of the proposed method on synthetic data. Section~\ref{sec:real-data} analyzes continuous glucose monitoring (CGM) data through its multivariate distributional representations and illustrates the component-wise interpretability. Section~\ref{sec:discussion} concludes with a discussion. All proofs are deferred to the supplementary material.

\section{Preliminaries}\label{sec:prelim}
\paragraph{Notation.}
For two numbers $a, b \in \R\cup \{\infty\}$, we write $a\wedge b = \min\{a, b\}$ and $a\vee b = \max\{a,b\}$.
For a vector $x\in\R^d$, we use $\|x\|$ to denote its $\ell^2$-Euclidean norm. We use $1_d\in \R^d$ to denote the vector of ones and $0_d\in\R^d$ the vector of zeros. 
For a matrix $M\in\R^{d\times d}$, let $\diag(M)\in\R^d$ be the vector of diagonal elements of $M$. When $M$ is symmetric, its smallest eigenvalue is denoted by $\lambda_{\min}(M)$. We denote by $\Scal_d = \{M \in \R^{d \times d}: M = M^T\}$ the space of symmetric $d \times d$ matrices, and write $\Scal_d^+ = \{M \in \Scal_d: \lambda_{\min}(M)\ge 0\}$ and $\Scal_d^{++} = \{M \in \Scal_d: \lambda_{\min}(M) > 0\}$ as the cones of positive semidefinite and positive definite matrices, respectively.
Let $\Pcal_2(\R^d)$ denote the space of Borel probability measures $\mu$ on $\R^d$ such that
$\int \|x\|^2 \, d\mu(x) < \infty$. For a measurable map $T:\R^d\to\R^d$, the associated push-forward of a measure $\mu$ on $\R^d$ is denoted by $T_\#\mu$, which satisfies $T_\#\mu(A) = \mu(T^{-1}(A))$ for every measurable set $A\subseteq\R^d$. We use the symbol $\gamma$ to denote the standard Gaussian measure of $\Ncal(0,1)$. Given functions $f$ and $g$, we write $f\circ g$ for their composition, defined by $(f\circ g)(x) = f(g(x))$ whenever the right-hand side is well-defined.

\paragraph{Background on optimal transport.}%

For $\mu, \nu \in \Pcal_2(\R^d)$, the squared 2-Wasserstein distance is defined as
\begin{equation}\label{eq:Wasserstein}
    d_W^2(\mu, \nu) = \inf_{\eta\in\Gamma(\mu, \nu)}\int_{\R^d\times\R^d} \|x-y\|^2 d\eta(x, y),
\end{equation}
where $\Gamma(\mu, \nu)$ is the set of joint distributions on $\R^d\times \R^d$ with marginals $\mu$ and $\nu$. %
We write $d_W(X, Y) = d_W(\mu, \nu)$ for random vectors $X\sim \mu$ and $Y\sim \nu$. If $\mu$ is absolutely continuous (or atomless when $d=1$), $d_W$ is equivalent to:
\begin{equation}\label{eq:Wasserstein-Monge}
    d_W^2(\mu, \nu) = \inf_{T:T_\#\mu = \nu} \int_{\R^d} \|x - T(x)\|^2d\mu(x), %
\end{equation}
where the infimum is over the measurable maps $T:\R^d\to\R^d$ such that $T_\#\mu = \nu$. The problem \eqref{eq:Wasserstein-Monge} attains the infimum at the map $T_\mu^\nu$, which is unique $\mu$-almost surely, called the optimal transport map from $\mu$ to $\nu$. See, e.g., \citet{villaniTopicsOptimalTransportation2003} for additional details.

In general, the Wasserstein distance is computed using algorithmic approaches \citep{peyreComputationalOptimalTransport2019}. Nonetheless, two special cases admit closed-form expressions. First, in the univariate case $d=1$, $d_W(\mu, \nu)$ is expressed using the corresponding quantile functions:
\begin{equation}\label{eq:univariate-wasserstein}
    d_W^2(\mu, \nu) = \|q_\mu - q_\nu \|_{L^2}^2 =  \int_0^1 (q_\mu(p) - q_\nu(p))^2 dp,
\end{equation}
where $q_\mu$ and $q_\nu$ denote the quantile functions of $\mu$ and $\nu$, respectively. This representation yields an isometric $L^2(0,1)$-embedding of univariate distributions as
\begin{equation}\label{eq:quantile-embedding}
    \Pcal(\R)\to \Kcal \subset L^2(0, 1);\qquad \mu\mapsto q_\mu,
\end{equation}
where $\Kcal$ is the cone of quantile functions.
The corresponding optimal transport map $T_\mu^\nu$ is
\begin{equation}\label{eq:univariate-transport}
    T_\mu^\nu = q_\nu \circ F_\mu,
\end{equation}
where $F_\mu$ is the cdf of $\mu$. 
Second, $d_W$ admits a closed-form for multivariate Gaussian measures on $\R^d$; in particular, for centered Gaussian measures $\mu=\Ncal(0, \s)$ and $\nu=\Ncal(0, Q)$, $d_W(\mu, \nu)$ coincides with the Bures--Wasserstein (BW) metric $\Bcal(\s, Q)$ of positive semidefinite matrices: 
\begin{equation}\label{eq:BW-distance}
    d_W^2(\mu, \nu) = \Tr[\s + Q - 2 (\s^{1/2} Q\s^{1/2})^{1/2}] =: \Bcal^2(\s, Q).
\end{equation}
In this case, the optimal transport map $T_\mu^\nu$ becomes a linear map represented by the matrix
\begin{equation}\label{eq:BW-transport}
    T_{\s}^{Q} = \s^{-1/2}(\s^{1/2} Q \s^{1/2})^{1/2}\s^{-1/2},
\end{equation}
defined whenever $\lambda_{\min}(\s) > 0$. The set $\Scal_d^{++}$ admits a Riemannian manifold structure with associated metric $\Bcal$, called the Bures--Wasserstein (BW) manifold \citep{bhatiaBuresWassersteinDistance2019}. %

\section{Nonparanormal transport metric and its properties}\label{sec:nonpara-npt-metric}

\subsection{Nonparanormal distributions and extended domain}\label{sec:nonpara-def-extension}

We begin by reviewing the family of nonparanormal distributions \citep{liuNonparanormalSemiparametricEstimation2009}. 

\begin{definition}\label{def:nonparanormal-npn-d}
    A continuous random vector $X = (X_1,\ldots, X_d)^\top\in\R^d$ with finite second moment follows the \textit{nonparanormal} distribution (Gaussian copula) if there is a collection of monotonically increasing functions $f = (f_1,\ldots,f_d)$ and a positive definite correlation matrix $\Sigma\in\Scal_d^{++}$ such that
    $f(X):=(f_1(X_1),\ldots,f_d(X_d))^\top \sim \Ncal(0_d, \Sigma).$
We write $X\sim {NPN}(f, \Sigma)$ and denote by $NPN(d)$ the set of all nonparanormal distributions in $\Pcal_2(\R^d)$. 
\end{definition}

The marginal transformations in Definition~\ref{def:nonparanormal-npn-d} admit a natural optimal transport representation. Writing $\mu_j$ for the continuous marginal law of $X_j$, the map $f_j$ is uniquely determined as $f_j = T_{\mu_j}^\gamma$, the optimal transport map from $\mu_j$ to the standard Gaussian measure $\gamma$ defined in \eqref{eq:univariate-transport}. Accordingly, the law of $X$ is completely characterized by its flexible continuous marginals $\bmu= (\mu_j)_{j=1}^d$, which can be skewed or heavy-tailed, and the latent Gaussian correlation matrix $\s$, which captures the dependence structure across marginals.

While the continuity of each marginal $\mu_j$ is essential for the transport map $f_j =T_{\mu_j}^\gamma$ to be well-defined, in practice one typically observes only finite samples from underlying continuous distributions, making observed empirical distributions discrete. To accommodate this, we extend the nonparanormal domain to allow discrete marginals. We rely on the key observation: the inverse transport map $T_\gamma^{\mu_j}$ of $T_{\mu_j}^\gamma$ is well-defined even when $\mu_j$ is discrete, because the source measure $\gamma$ is continuous. This motivates our reversed-transport characterization of nonparanormal distributions from latent Gaussian distributions as follows.
\begin{definition}\label{def:extension map}
    Let $\Pcal:=\Pcal_2(\R)$ and define $\Ecal_d= \{M\in\Scal_d^+: \diag(M) = 1_d\}$ as the set of correlation matrices. On the product space $\Pcal^d \times \Ecal_d$, where $\Pcal^d$ denotes the $d$-fold Cartesian product of $\Pcal$, we define the map $\Lambda:\Pcal^d \times \Ecal_d\to \Pcal_2(\R^d)$ as
    \begin{equation*}%
        \Lambda(\bmu, \s) = (T_{\bgamma}^{\bmu})_\# \Ncal(0_d, \s), 
    \end{equation*}
    where $\bmu = (\mu_j)_{j=1}^d \in\Pcal^d$, $\s \in\Ecal_d$, and $T_{\bgamma}^{\bmu}(x) := (T_{\gamma}^{\mu_1}(x),\ldots, T_{\gamma}^{\mu_d}(x))^\top$. %
\end{definition}

When each $\mu_j$ is continuous, $\Lambda(\bmu, \s)$ recovers the nonparanormal distribution $NPN(f, \s)$ with $f_j= T_{\mu_j}^\gamma$. Using the map $\Lambda$, we define the following extended nonparanormal domain.

\begin{definition}\label{def:extended-nonparanormal}
    The set of extended nonparanormal distributions is defined by $$\Lambda(d):= \Lambda(\Pcal_*^d\times \Ecal_d) \subset \Pcal_2(\R^d),$$
    where $\Pcal_*\subset \Pcal$ is the subset of distributions with nonzero variance.
\end{definition}
The domain $\Lambda(d)$ accommodates discrete marginals while excluding only point masses, whose zero variance is incompatible with the unit diagonals of latent correlation matrices. It contains the continuous nonparanormal family $NPN(d) = \Lambda(\Pcal_c^d\times \Ecal_d^{++})$ as a subset, where $\Pcal_c\subset\Pcal_*$ denotes the set of continuous distributions and $\Ecal_d^{++} = \Ecal_d \cap \Scal_d^{++}$. %

\subsection{Nonparanormal transport metric}\label{sec:npt-properties}

The nonparanormal transport (NPT) metric, introduced by \citet{shaoFastDistanceComputation2026}, is a metric on $NPN(d)$ that combines closed-form Wasserstein distances for the marginal and latent Gaussian components. For empirical analysis, we extend it to our enlarged domain $\Lambda(d)$:

\begin{definition}\label{def:npt}
    Let $\mu = \Lambda(\bmu, \s) \in\Lambda(d)$ be a measure with marginals $\bmu = (\mu_j)_{j=1}^d\in\Pcal_*^d$, defined as in Definition~\ref{def:extension map}, 
    and let $\nu = \Lambda(\bnu, Q)\in \Lambda(d)$ defined similarly. The squared NPT metric between $\mu$ and $\nu$ is defined by
    \begin{equation*}%
        d_{NPT}^2(\mu, \nu):= \sum_{j=1}^{d} d_W^2(\mu_{j}, \nu_{j}) + \Bcal^2(\s, Q).
    \end{equation*}
    We also write $d_{NPT}(X, Y) = d_{NPT}(\mu, \nu)$ for random vectors $X\sim\mu$ and $Y\sim\nu$.
\end{definition}

Verifying that NPT remains a metric on the extended domain $\Lambda(d)$ is more technical, since copula uniqueness, which establishes $d_{NPT}(\mu, \nu) = 0 \iff \mu = \nu$ on $NPN(d)$, does not extend to discrete marginals. Nonetheless, the latent Gaussian structure enables the following injectivity result, which shows that NPT is a valid metric on $\Lambda(d)$, isometric to the product metric on $\Pcal_*^d\times \Ecal_d$ ($\Pcal_*$ endowed with $d_W$ and $\Ecal_d$ with $\Bcal$):

\begin{lemma}\label{lemma:injectivity}
    The map $\Lambda:\Pcal^d \times \Ecal_d\to \Pcal_2(\R^d)$ is injective on $\Pcal_*^d\times \Ecal_d$.
\end{lemma}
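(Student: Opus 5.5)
Suppose $\Lambda(\bmu,\s)=\Lambda(\bnu,Q)$ with $\bmu,\bnu\in\Pcal_*^d$ and $\s,Q\in\Ecal_d$; we must show $\bmu=\bnu$ and $\s=Q$. The plan has two steps: first recover the marginals, then recover the latent correlations one pair $(j,k)$ at a time by exploiting the monotonicity of the transport maps and the strict monotonicity of Gaussian bivariate quantities in the correlation.

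\textbf{Marginals.} Since $\s$ has unit diagonal, the $j$-th coordinate of a $\Ncal(0_d,\s)$ vector $Z$ is standard Gaussian. Hence the $j$-th marginal of $\Lambda(\bmu,\s)$ is $(T_\gamma^{\mu_j})_\#\gamma=\mu_j$ by \eqref{eq:univariate-transport}. Equality of the two joint laws gives $\mu_j=\nu_j$ for every $j$, so we may write $g_j:=T_\gamma^{\mu_j}=q_{\mu_j}\circ\Phi$. This map is nondecreasing and left-continuous, and it is non-constant because $\mu_j$ has nonzero variance.

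\textbf{Correlations.} Fix $j\neq k$ and let $(Z_j,Z_k)$ be standard bivariate Gaussian with correlation $\rho$. It suffices to show that the law of $(g_j(Z_j),g_k(Z_k))$ determines $\rho$. The natural tool is a one-parameter functional that is strictly monotone in $\rho$. Because $g_j$ is non-constant and monotone, there is a threshold $a$ with $0<\P(g_j(Z_j)>a)<1$, and the event $\{g_j(Z_j)>a\}$ equals $\{Z_j>s\}$ (up to a null boundary) for some $s\in\R$. Choosing $b,t$ similarly for $k$, the joint law of $(X_j,X_k)$ determines
\[
p(\rho)=\P(Z_j>s,\,Z_k>t).
\]
By Slepian's inequality, or explicitly by Plackett's identity $\partial p/\partial\rho=\phi_2(s,t;\rho)>0$ for $|\rho|<1$, the map $p$ is strictly increasing on $[-1,1]$. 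It is also continuous at $\pm1$, where the degenerate limits give the Fréchet bounds, so strict monotonicity holds on the closed interval. Therefore $p(\s_{jk})=p(Q_{jk})$ forces $\s_{jk}=Q_{jk}$.

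The main obstacle is the possibly degenerate case $|\rho|=1$, which is allowed because $\Ecal_d$ includes singular matrices, together with discrete $g_j$, where an event $\{g_j(Z_j)>a\}$ may not have exactly the form $\{Z_j>s\}$. Monotonicity of $g_j$ handles the latter: preimages of upper sets are upper half-lines, open or closed, which differ only by null sets. For the former, I will verify strict monotonicity of $p$ up to the endpoints directly. For instance, $p(1)=\P(Z>s\vee t)$ strictly exceeds $p(\rho)$ for $\rho<1$, since the interior derivative is positive and $p$ is continuous in $\rho$ on $[-1,1]$. An alternative I would keep as a backup is the covariance $\cov(g_j(Z_j),g_k(Z_k))=\sum_{n\ge1}c_{j,n}c_{k,n}\rho^n n!$ in Hermite coefficients. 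Its strict monotonicity is less clear when the coefficient signs vary, however, so the orthant-probability argument is preferable.
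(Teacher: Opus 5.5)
Your proof is correct and takes essentially the same route as the paper. The marginals are read off directly. Each off-diagonal entry is then reduced to a bivariate Gaussian orthant probability at finite latent thresholds, which exist because no marginal is a point mass. Strict monotonicity in the correlation follows from Plackett's identity $\partial_\rho \Phi_\rho = \phi_\rho > 0$, and continuity extends it to $[-1,1]$. The only cosmetic difference is that you use upper orthants $\P(Z_j>s, Z_k>t)$, whereas the paper uses the joint cdf through $q_j(p)\le x \iff p\le F_j(x)$; the two are equivalent.
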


The closed-form NPT metric is substantially faster to compute than other Wasserstein surrogates while closely approximating the Wasserstein distance, as \citet{shaoFastDistanceComputation2026} empirically observed. In this work, we provide theoretical justifications for using NPT instead of the Wasserstein distance. Specifically, we establish: (i) similar geometric properties of the two metrics, (ii) metric agreement under identical dependence structures, and (iii) two-sided bounds implying topological equivalence. We begin with basic geometric properties:
\begin{proposition}\label{prop:geometric-properties}
    Let $X\sim \mu$ and $Y\sim \nu$ for measures $\mu = \Lambda(\bmu, \s)$ and $\nu = \Lambda(\bnu, Q)$ in $\Lambda(d)$. For a vector $x\in\R^d$ and a scalar $a > 0$, it holds that 
    \begin{enumerate}
        \item [(i)] $d_{NPT}(X + x, Y + x) = d_{NPT}(X, Y)$; 
        \item [(ii)] If $\E[X] = \E[Y]$, then $d_{NPT}^2(X + x, Y) = \|x\|^2 + d_{NPT}^2(X, Y)$;
        \item [(iii)] $d_{NPT}^2(aX, aY) = a^2\sum_{j=1}^{d} d_W^2(X_{j}, Y_{j}) + \Bcal^2(\s, Q)$.
    \end{enumerate}
\end{proposition}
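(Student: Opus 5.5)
The plan is to reduce all three identities to statements about the marginal and latent components separately, using the product structure of $d_{NPT}$ from Definition~\ref{def:npt} together with Lemma~\ref{lemma:injectivity}. The first step is to identify the $\Lambda$-parameters of the transformed vectors $X+x$ and $aX$. Write $\mu=\Lambda(\bmu,\s)$, so that $X \overset{d}{=} T_{\bgamma}^{\bmu}(Z)$ with $Z\sim\Ncal(0_d,\s)$.

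For a translation, $X_j + x_j = T_\gamma^{\mu_j}(Z_j)+x_j$, and $t\mapsto T_\gamma^{\mu_j}(t)+x_j = q_{\mu_j}(\Phi(t))+x_j$ is the monotone map $q_{\tau_{x_j}\#\mu_j}\circ\Phi$. It is therefore exactly $T_\gamma^{(\tau_{x_j})_\#\mu_j}$, where $\tau_c(t)=t+c$. Hence $X+x\sim\Lambda((\tau_{x_j\#}\mu_j)_j,\s)$, and translation preserves the variance, so the marginals stay in $\Pcal_*$. Similarly, for $a>0$ the map $a\,q_{\mu_j}\circ\Phi$ is nondecreasing and equals $q_{(a\cdot)_\#\mu_j}\circ\Phi$. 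Hence $aX\sim\Lambda((a\cdot)_\#\bmu,\s)$, again with nonzero variance. By Lemma~\ref{lemma:injectivity} these parameters are the unique ones, so $d_{NPT}$ is computed from them. In both cases the latent correlation matrix is unchanged, and this is the key point.

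With the parameters identified, (i) follows because the latent term $\Bcal^2(\s,Q)$ is unchanged and the quantile functions satisfy $q_{\mu_j}+x_j-(q_{\nu_j}+x_j)=q_{\mu_j}-q_{\nu_j}$. By \eqref{eq:univariate-wasserstein}, each marginal Wasserstein term is therefore unchanged. For (iii), $q_{a\mu_j}=a q_{\mu_j}$, so each marginal term scales by $a^2$ while $\Bcal^2(\s,Q)$ stays fixed, which gives exactly the stated formula. For (ii), only $X$ is shifted, so the latent term is again $\Bcal^2(\s,Q)$. Each marginal term becomes
\[
\int_0^1 (q_{\mu_j}-q_{\nu_j}+x_j)^2\,dp = d_W^2(\mu_j,\nu_j) + 2x_j\int_0^1(q_{\mu_j}-q_{\nu_j})\,dp + x_j^2 .
\]
Since $\int_0^1 q_{\mu_j}\,dp=\E[X_j]$, the hypothesis $\E[X]=\E[Y]$ kills the cross term. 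Summing over $j$ then gives $\|x\|^2 + d_{NPT}^2(X,Y)$.

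The only step that needs real care is the first: showing that the pushforward description is preserved with the same $\s$, i.e.\ that $c+q_{\mu_j}\circ\Phi$ and $a\,q_{\mu_j}\circ\Phi$ are genuinely the transport maps $T_\gamma^{\cdot}$ of the transformed marginals, including for discrete marginals. This follows from the uniqueness of monotone rearrangement in \eqref{eq:univariate-transport}, since $\gamma$ is atomless, together with the well-definedness of $d_{NPT}$ guaranteed by Lemma~\ref{lemma:injectivity}. Everything else is a direct quantile computation.
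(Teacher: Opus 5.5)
Your proposal is correct and follows essentially the same route as the paper: translation and positive scaling leave the latent correlation matrices $\s$ and $Q$ unchanged, $d_{NPT}^2$ splits into marginal and Bures--Wasserstein parts, and the univariate Wasserstein translation and scaling identities finish each item. You make explicit what the paper leaves implicit, namely the transformed $\Lambda$-parameters via the quantile functions $q_{\mu_j}+x_j$ and $a\,q_{\mu_j}$ (with Lemma~\ref{lemma:injectivity} ensuring $d_{NPT}$ is computed from them) and the vanishing cross term in (ii), whereas the paper simply cites the standard properties of $d_W$.
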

Proposition~\ref{prop:geometric-properties} confirms that NPT mirrors the geometry-respecting properties of the Wasserstein distance:
(i) and (ii) demonstrate that NPT captures location differences between $\mu$ and $\nu$, while (iii) reflects the scale-invariance of the latent Gaussian part, with scale information captured only through the marginal components. While property (iii) suggests that the marginal distributions of $\mu$ and $\nu$ may need to be rescaled to prevent the BW term in $d_{NPT}$ from being dominated by the marginal distances, this causes no issues for our regression method in Section~\ref{sec:frechet-regression} due to separate handling of each component of NPT.

Next, we present a case in which NPT and the Wasserstein distance coincide.
For continuous distributions sharing a common copula, the Wasserstein distance is known to depend only on the marginals \citep{ghaffariParsevalsIdentityOptimal2021}. We prove that this fact remains true on the enlarged domain $\Lambda(d)$, yielding the equality $d_{NPT}=d_W$ when the latent correlation matrices agree.

\begin{proposition}\label{prop:same-copula}
Let $\mu = \Lambda(\bmu, \s)$ and $\nu = \Lambda(\bnu, \s)$ be measures in $\Lambda(d)$, sharing the same latent correlation matrix $\s$. Then,
$$d_{NPT}^2(\mu, \nu) = d_W^2(\mu, \nu) = \sum_{j=1}^d d_W^2(\mu_j, \nu_j).$$
\end{proposition}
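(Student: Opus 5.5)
The first equality is immediate from Definition~\ref{def:npt}: since $\Bcal(\s,\s)=0$, we get $d_{NPT}^2(\mu,\nu)=\sum_j d_W^2(\mu_j,\nu_j)$. The content of the statement is therefore $d_W^2(\mu,\nu)=\sum_j d_W^2(\mu_j,\nu_j)$. I plan to prove the two inequalities separately: a general lower bound that holds for any couplings, and an explicit coupling that attains it.

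For the lower bound, take any $\eta\in\Gamma(\mu,\nu)$ and let $(X,Y)\sim\eta$. Then
\[
\int\|x-y\|^2\,d\eta=\sum_{j=1}^d \E[(X_j-Y_j)^2].
\]
The pair $(X_j,Y_j)$ is a coupling of $\mu_j$ and $\nu_j$, because the $j$-th marginal of $\Lambda(\bmu,\s)$ is $(T_\gamma^{\mu_j})_\#\gamma=\mu_j$. This uses $\s_{jj}=1$, so the $j$-th coordinate of $\Ncal(0_d,\s)$ is $\gamma$, together with $T_\gamma^{\mu_j}=q_{\mu_j}\circ\Phi$ pushing $\gamma$ to $\mu_j$. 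Hence each summand is at least $d_W^2(\mu_j,\nu_j)$. Taking the infimum over $\eta$ gives $d_W^2(\mu,\nu)\ge\sum_j d_W^2(\mu_j,\nu_j)$. This step holds for arbitrary joint laws with these marginals and does not use the shared $\s$.

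For the upper bound, I will use the common latent Gaussian as a synchronizing device. Let $Z\sim\Ncal(0_d,\s)$ and set $X=T_{\bgamma}^{\bmu}(Z)$ and $Y=T_{\bgamma}^{\bnu}(Z)$. By Definition~\ref{def:extension map}, $X\sim\mu$ and $Y\sim\nu$, so the law of $(X,Y)$ lies in $\Gamma(\mu,\nu)$. Its cost is
\[
\sum_j\E\big[(T_\gamma^{\mu_j}(Z_j)-T_\gamma^{\nu_j}(Z_j))^2\big].
\]
Since $Z_j\sim\gamma$ and $T_\gamma^{\mu_j}=q_{\mu_j}\circ\Phi$, where $\Phi$ is the standard normal cdf, substituting $p=\Phi(Z_j)\sim U(0,1)$ turns each term into $\int_0^1(q_{\mu_j}(p)-q_{\nu_j}(p))^2\,dp=d_W^2(\mu_j,\nu_j)$ by \eqref{eq:univariate-wasserstein}. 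Thus $d_W^2(\mu,\nu)\le\sum_j d_W^2(\mu_j,\nu_j)$, and combining with the lower bound gives equality.

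The only point needing care is the extended domain, where the marginals may be discrete. Two facts must hold there: that $T_\gamma^{\mu_j}=q_{\mu_j}\circ\Phi$ is a well-defined measurable map pushing $\gamma$ forward to $\mu_j$, and that the quantile change of variables is valid. Both follow because the source $\gamma$ is atomless, so $\Phi(Z_j)$ is exactly uniform regardless of whether $\mu_j$ has atoms. Square integrability is guaranteed because $\mu_j,\nu_j\in\Pcal_2(\R)$. No injectivity or copula uniqueness argument (Lemma~\ref{lemma:injectivity}) is needed. I expect no genuine obstacle: the whole argument reduces to the comonotone coupling induced by the shared latent Gaussian.
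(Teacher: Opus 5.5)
Your proposal is correct and follows essentially the same route as the paper's proof. The upper bound comes from the coupling induced by the shared latent Gaussian $Z\sim\Ncal(0_d,\s)$, with the change of variables $\Phi(Z_j)\sim\Ucal(0,1)$, and the lower bound comes from projecting a coupling onto coordinate pairs. The only cosmetic difference is that you take an infimum over arbitrary couplings, whereas the paper plugs in an optimal one.
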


Since $d_W^2(\mu, \nu)\ge \sum_{j=1}^d d_W^2(\mu_{j}, \nu_{j})$ always holds, Proposition~\ref{prop:same-copula} clarifies that the BW term in NPT targets the additional dependence-driven distributional differences at the latent Gaussian level. We further formalize this intuition by establishing explicit two-sided bound on multivariate Wasserstein distance via NPT.

To establish the two-sided bound, we consider additional notation. For a univariate function $\psi:\R\to\R$, we define its Sobolev seminorm $|\psi|_{H^1(\gamma)} = \big(\int_\R |\psi'|^2 d\gamma\big)^{1/2}$ under the standard Gaussian measure $\gamma$, and we set $|\psi|_{H^1(\gamma)}=\infty$ when $\psi$ is not differentiable or the integral diverges. 
We let $|\psi|_{\text{Lip}} = \sup_{z\in\R} |\psi'(z)|$ denote the Lipschitz seminorm, taking values in $\R\cup \{\infty\}$. 
For a collection $\Psi = (\psi_1, \ldots, \psi_d)$ of functions, we write $|\Psi|_{H^1(\gamma)} = \max_{j} |\psi_j|_{H^1(\gamma)}$ and $|\Psi|_{\text{Lip}} = \max_j|\psi_j|_{\text{Lip}}$, which always satisfy $|\Psi|_{H^1(\gamma)} \le |\Psi|_{Lip}$.

\begin{theorem}\label{thm:topological-equivalence}
    Let $\mu = \Lambda(\bmu, \s)$ and $\nu = \Lambda(\bnu, Q)$ be measures in $\Lambda(d)$. It holds that
    \begin{enumerate}
    \item[(i)] $d_W(\mu,\nu)\le \sqrt2\big\{1\vee\big(|T_{\bgamma}^{\bmu}|_{H^1(\gamma)}\wedge|T_{\bgamma}^{\bnu}|_{H^1(\gamma)}\big)\big\}\, d_{NPT}(\mu,\nu)$.
    \end{enumerate}
    Moreover, if $\mu,\nu\in NPN(d)$, let $T_{\bmu}^{\bgamma} = (T_{\mu_j}^\gamma)_{j=1}^d$ denote the transport map that corresponds to $f$ in Definition~\ref{def:nonparanormal-npn-d}, and define $T_{\bnu}^{\bgamma}$ analogously. Then it also holds that
    \begin{enumerate}
    \item[(ii)] $d_{NPT}(\mu,\nu)\le \big\{1+2\big(|T_{\bmu}^{\bgamma}|_{\mathrm{Lip}}\wedge |T_{\bnu}^{\bgamma}|_{\mathrm{Lip}}\big)\big\}\, d_W(\mu,\nu)$.
    \end{enumerate}
\end{theorem}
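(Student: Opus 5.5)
For part (i), I will construct an explicit coupling of $\mu$ and $\nu$ through the latent Gaussian level. Assume without loss of generality that $|T_{\bgamma}^{\bmu}|_{H^1(\gamma)}\le|T_{\bgamma}^{\bnu}|_{H^1(\gamma)}$. Let $(Z,W)$ be the optimal Gaussian coupling of $\Ncal(0,\s)$ and $\Ncal(0,Q)$, so that $\E\|Z-W\|^2=\Bcal^2(\s,Q)$. Then $(T_{\bgamma}^{\bmu}(Z),T_{\bgamma}^{\bnu}(W))$ is a coupling of $\mu$ and $\nu$. I will split the difference as
\[
T_{\bgamma}^{\bnu}(W)-T_{\bgamma}^{\bmu}(Z)=\big(T_{\bgamma}^{\bnu}(W)-T_{\bgamma}^{\bmu}(W)\big)+\big(T_{\bgamma}^{\bmu}(W)-T_{\bgamma}^{\bmu}(Z)\big)
\]
and use $(a+b)^2\le 2a^2+2b^2$, which accounts for the factor $\sqrt2$. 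Since $W_j\sim\gamma$, the first term has expected squared norm exactly $\sum_j\|q_{\nu_j}-q_{\mu_j}\|_{L^2}^2=\sum_j d_W^2(\mu_j,\nu_j)$.

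The second term is the main obstacle: I need $\E|\psi(W_j)-\psi(Z_j)|^2\le|\psi|_{H^1(\gamma)}^2\,\E|W_j-Z_j|^2$ for $\psi=T_\gamma^{\mu_j}$ and a pair of correlated standard Gaussians $(Z_j,W_j)$ with correlation $\rho$. This is the Hermite-polynomial step. I will expand $\psi=\sum_k c_kH_k$ in normalized Hermite polynomials. By Mehler's formula, $\E[H_k(Z_j)H_l(W_j)]=\delta_{kl}\rho^k$, so
\[
\E|\psi(W_j)-\psi(Z_j)|^2=2\sum_k c_k^2(1-\rho^k),
\]
while $|\psi|_{H^1(\gamma)}^2=\sum_k k c_k^2$ and $\E|W_j-Z_j|^2=2(1-\rho)$. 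The elementary inequality $1-\rho^k\le k(1-\rho)$ for $\rho\in[-1,1]$ then gives the claim (for $\rho<0$, use $1-\rho^k\le 2\le k(1-\rho)$ when $k\ge1$, or argue directly). Summing over $j$ bounds the second term by $|T_{\bgamma}^{\bmu}|_{H^1}^2\,\Bcal^2(\s,Q)$. Combining the two terms gives $d_W^2\le 2\max(1,|T|^2_{H^1})\,d_{NPT}^2$. If $\psi$ has infinite seminorm, the bound is trivial.

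For part (ii), assume $|T_{\bmu}^{\bgamma}|_{\mathrm{Lip}}=L$ is the smaller of the two. I will write $d_{NPT}\le(\sum_j d_W^2(\mu_j,\nu_j))^{1/2}+\Bcal(\s,Q)$. The first piece is at most $d_W(\mu,\nu)$, because projecting an optimal coupling onto coordinates gives couplings of the marginals. For the BW term, I will use the triangle inequality in $d_W$ on Gaussians:
\[
\Bcal(\s,Q)=d_W(T_{\bmu}^{\bgamma}\#\mu,\,T_{\bnu}^{\bgamma}\#\nu)\le d_W(T_{\bmu}^{\bgamma}\#\mu,\,T_{\bmu}^{\bgamma}\#\nu)+d_W(T_{\bmu}^{\bgamma}\#\nu,\,T_{\bnu}^{\bgamma}\#\nu).
\]
The first summand is at most $L\,d_W(\mu,\nu)$, since $T_{\bmu}^{\bgamma}$ is coordinatewise $L$-Lipschitz. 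For the second summand, the coupling $(T_{\bmu}^{\bgamma}(Y),T_{\bnu}^{\bgamma}(Y))$ with $Y\sim\nu$ gives $\sum_j\E|T_{\mu_j}^\gamma(Y_j)-T_{\nu_j}^\gamma(Y_j)|^2$. Writing $T_{\nu_j}^\gamma(Y_j)=T_{\mu_j}^\gamma(T_{\nu_j}^{\mu_j}(Y_j))$, this is at most $L^2\sum_j\E|Y_j-T_{\nu_j}^{\mu_j}(Y_j)|^2=L^2\sum_j d_W^2(\mu_j,\nu_j)\le L^2d_W^2(\mu,\nu)$. Adding the pieces yields the factor $1+2L$.
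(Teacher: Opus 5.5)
Your proposal is correct and is essentially the paper's proof. In (i) you build the glued coupling explicitly instead of using the triangle inequality with Proposition~\ref{prop:same-copula}, but the split, the Hermite--Mehler lemma and the constant are the same. In (ii) you apply the triangle inequality in latent space after pushing forward by $T_{\bmu}^{\bgamma}$, where the paper applies it in observed space before pulling back; both reach $\Bcal(\s,Q)\le L\{d_W(\mu,\nu)+(\sum_j d_W^2(\mu_j,\nu_j))^{1/2}\}\le 2L\,d_W(\mu,\nu)$, and the paper then combines squares to get the slightly sharper factor $\sqrt{1+4L^2}$.

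One small slip: in your parenthetical for $\rho<0$, the step $2\le k(1-\rho)$ fails at $k=1$. This is harmless, since $k=1$ gives equality, and the factorization $1-\rho^k=(1-\rho)(1+\rho+\cdots+\rho^{k-1})$ yields $1-\rho^k\le k(1-\rho)$ for every $\rho\in[-1,1]$ directly.
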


Theorem~\ref{thm:topological-equivalence} establishes bi-Lipschitz equivalence between $d_{NPT}$ and $d_W$ on $NPN(d)$.
Among the two inequalities, bound (i) is particularly consequential: it holds on the extended domain $\Lambda(d)$ and allows asymptotic convergence rates established under $d_{NPT}$ to transfer directly to those under $d_W$. Crucially, since Theorem~\ref{thm:topological-equivalence} requires regularity of \textit{only one} of the two distributions $\mu$ and $\nu$, the Sobolev condition $|T_{\bgamma}^{\bmu}|_{H^1(\gamma)} < \infty$ needs to hold only for the fixed population distribution, leaving the empirical counterpart completely unconstrained.

A major technical contribution of Theorem~\ref{thm:topological-equivalence} is establishing bound~(i) under this Sobolev regularity $|T_{\bgamma}^{\bmu}|_{H^1(\gamma)} < \infty$, which is substantially milder than the Lipschitz condition $|T_{\bgamma}^{\bmu}|_{\text{Lip}} < \infty$ frequently imposed in the optimal transport literature \citep{hutterMinimaxEstimationSmooth2021, manolePluginEstimationSmooth2024}. While proving an analogue of bound (i) using the Lipschitz seminorm $|T_{\bgamma}^{\bmu}|_{\text{Lip}}$ is relatively straightforward, it requires uniformly bounded derivatives of $T_\gamma^{\mu_j}$ and thus restricts $\mu$ to sub-Gaussian tails. In contrast, the Sobolev condition permits the derivatives to diverge if their growth is controlled after Gaussian weighting. Consequently, NPT upper bounds the Wasserstein distance over a broad class of nonparanormal distributions, including those with (1) log-concave marginals, (2) log-normal marginals, and (3) Student's $t$ marginals with degrees of freedom $>2$, with the latter two being heavy-tailed (verified in Section~\ref{sec:supp-gaussian-sobolev-verify} of the supplementary material). Proving bound~(i) under this relaxed condition is technically nontrivial; we develop a new Hermite-polynomial argument that tightly traces how changes in latent Gaussian dependence propagate through the nonlinear transport map $T_{\bgamma}^{\bmu}$.

\subsection{Distribution estimation and fast convergence}\label{sec:npt-empirical}

To estimate a nonparanormal distribution $\mu= \Lambda(\bmu, \s)$ from its i.i.d. samples $\{X_k=(X_{k1},\ldots, X_{kd})^\top\}_{k=1}^N\subset\R^d$, we define its semiparametric estimator $\hat\mu$ as
\begin{equation}\label{eq:nonpara-estimator}
    \hat\mu = \Lambda(\hat\bmu, \hat\s),
\end{equation}
where $\hat\bmu = (\hat\mu_j)_{j=1}^d$ with $\hat\mu_j$ the empirical distribution of $\{X_{kj}\}_{k=1}^N$, and $\hat\s$ is the rank-based estimator of $\s$ computed via Kendall's $\tau$ followed by sine transformation \citep{liuHighdimensionalSemiparametricGaussian2012}.

We investigate the convergence of the semiparametric estimator $\hat\mu$. We assume throughout the paper that the underlying distribution $\mu$ belongs to $NPN(d)$, i.e., it has continuous marginals and $\lambda_{\min}(\s) > 0$, and that the empirical marginals $\hat\mu_j$ have nonzero variances (thus ensuring $\hat\mu\in\Lambda(d)$). 
By developing a matrix perturbation bound that controls $\Bcal^2(\hat\s, \s)$ without requiring the empirical estimator $\hat\s$ to be strictly positive definite, we establish the following explicit finite-sample bound of the NPT estimation error $d_{NPT}^2(\hat\mu, \mu)$.

\begin{theorem}\label{thm:npt-convergence-rate}
    Let $\mu = \Lambda(\bmu, \s)\in NPN(d)$ be a continuous nonparanormal distribution, and let $\hat \mu\in\Lambda(d)$ be its estimator from $N$ samples constructed as in \eqref{eq:nonpara-estimator}. 
    Define
    $r_N=\max_{1\le j\le d}\sqrt{\E[d_W^2(\hat\mu_j,\mu_j)]}$. 
    Then,
    \begin{align*}
        \E[d_{NPT}(\hat\mu,\mu)] &\le \sqrt{d}\cdot r_N + \sqrt{\frac{2\pi^2\, d(d-1)}{N \,\lambda_{\min}(\s)}}. %
    \end{align*}
\end{theorem}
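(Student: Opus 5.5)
Since $d_{NPT}^2(\hat\mu,\mu)=\sum_j d_W^2(\hat\mu_j,\mu_j)+\Bcal^2(\hat\s,\s)$, we have $d_{NPT}\le \big(\sum_j d_W^2(\hat\mu_j,\mu_j)\big)^{1/2}+\Bcal(\hat\s,\s)$. We bound the expectation of each term separately. For the marginal term, Jensen's inequality gives
\[
\E\Big(\sum_j d_W^2(\hat\mu_j,\mu_j)\Big)^{1/2}\le \Big(\sum_j\E d_W^2(\hat\mu_j,\mu_j)\Big)^{1/2}\le \sqrt d\, r_N.
\]
This accounts for the first summand. It remains to show $\E\,\Bcal(\hat\s,\s)\le \sqrt{2\pi^2 d(d-1)/(N\lambda_{\min}(\s))}$. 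By Jensen again, it suffices to show
\[
\E\,\Bcal^2(\hat\s,\s)\le \frac{2\pi^2 d(d-1)}{N\lambda_{\min}(\s)}.
\]

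The key step is a deterministic perturbation bound that requires only $\hat\s\in\Scal_d^+$, not positive definiteness:
\[
\Bcal^2(\hat\s,\s)\le \frac{\|\hat\s-\s\|_F^2}{\lambda_{\min}(\s)}.
\]
We expect this to be the main obstacle, since $\hat\s$ may be singular. I would first try the transport-map route. Let $T=T_{\s}^{\hat\s}$ from \eqref{eq:BW-transport}; it is well-defined because $\s\succ0$, and it pushes $\Ncal(0,\s)$ forward to $\Ncal(0,\hat\s)$ even when $\hat\s$ is singular. Then $\Bcal^2(\hat\s,\s)=\Tr[(T-I)\s(T-I)]$. Using $T\s T=\hat\s$ and the fact that $T\succeq0$, we get $\hat\s-\s=T\s T-\s=(T-I)\s(T+I)+\s T-T\s$ after symmetrization. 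The cleaner route is the known inequality
\[
\Bcal^2(A,B)\le \frac{\|A-B\|_F^2}{\big(\sqrt{\lambda_{\min}(A)}+\sqrt{\lambda_{\min}(B)}\big)^2},
\]
which follows from $\Bcal(A,B)\le\|A^{1/2}-B^{1/2}\|_F$ together with the Lipschitz bound on the matrix square root, $\|A^{1/2}-B^{1/2}\|_F\le \|A-B\|_F/(\sqrt{\lambda_{\min}A}+\sqrt{\lambda_{\min}B})$. The second bound holds whenever one of $A,B$ is PSD and the other is PD, as one sees from the Sylvester equation $A^{1/2}X+XB^{1/2}=A-B$ with $X=A^{1/2}-B^{1/2}$. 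Dropping $\lambda_{\min}(\hat\s)\ge0$ gives the displayed bound with the single factor $\lambda_{\min}(\s)$. One also needs $\hat\s\succeq0$ for $\Bcal$ to be defined; if the sine-Kendall estimate is not PSD, I assume it has been projected onto $\Ecal_d$, which is nonexpansive in Frobenius norm toward $\s\in\Ecal_d$.

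Finally, we bound $\E\|\hat\s-\s\|_F^2=\sum_{j\neq k}\E(\hat\s_{jk}-\s_{jk})^2$, which has $d(d-1)$ off-diagonal terms; the diagonal entries are exactly $1$. For each pair, $\hat\s_{jk}=\sin(\pi\hat\tau_{jk}/2)$ and $\s_{jk}=\sin(\pi\tau_{jk}/2)$. Since $\sin$ is $1$-Lipschitz, $(\hat\s_{jk}-\s_{jk})^2\le \frac{\pi^2}{4}(\hat\tau_{jk}-\tau_{jk})^2$. Kendall's $\hat\tau$ is a U-statistic with kernel bounded in $[-1,1]$, so Hoeffding's variance bound for U-statistics of order 2 gives $\mathrm{Var}(\hat\tau_{jk})\le \frac{2}{N}\cdot\mathrm{Var}(\text{kernel})\cdot(\ldots)\le 8/N$. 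Hence $\E(\hat\s_{jk}-\s_{jk})^2\le 2\pi^2/N$. Summing over the $d(d-1)$ pairs and dividing by $\lambda_{\min}(\s)$ yields the claim.
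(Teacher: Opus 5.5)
Your proposal is correct and follows the paper's proof essentially step for step. Both use the same Jensen split of $d_{NPT}$ and the same key bound $\Bcal(\hat\s,\s)\le\|\hat\s^{1/2}-\s^{1/2}\|_F\le\|\hat\s-\s\|_F/\sqrt{\lambda_{\min}(\s)}$, which needs only $\s\succ 0$ (the paper cites Bhatia's Problem X.5.5 where you use the Sylvester-equation argument), followed by the same sum over the $d(d-1)$ off-diagonal entries.

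The only variation is the entrywise bound. The paper integrates the sub-Gaussian tail $\P(|\hat\s_{jk}-\s_{jk}|>t)\le 2e^{-Nt^2/\pi^2}$, while you use Hoeffding's U-statistic variance inequality. That inequality actually gives $\mathrm{Var}(\hat\tau_{jk})\le 2/N$, so your unexplained ``$8/N$'' is valid but loose, and you should state the constant explicitly.
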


Consequently, as $N\to\infty$ with $d$ fixed, $\E[d_{NPT}(\hat\mu,\mu)]$ achieves the convergence rate $O(r_N)$, which is the rate of convergence for univariate distributions. For another measure $\nu\in NPN(d)$ and its estimator $\hat\nu$ with $N$ samples defined as in \eqref{eq:nonpara-estimator}, the triangle inequality implies the same $O(r_N)$ rate for the convergence of $d_{NPT}(\hat\mu, \hat\nu)$ to $d_{NPT}(\mu, \nu)$.
Precise rates of $r_N$ depend on the regularity of the marginals $\mu_j$, for which \citet{bobkovOnedimensionalEmpiricalMeasures2019} provides a comprehensive survey; for instance, the rate achieves $O(N^{-1/2})$ (up to $\log N$ factors) under mild conditions such as log-concavity.
Therefore, the rate of $\hat\mu$ in Theorem~\ref{thm:npt-convergence-rate} is considerably faster than the $O(N^{-1/(4\vee d)})$ rate for estimating $\mu$ in the Wasserstein distance via its empirical measure \citep{fournierRateConvergenceWasserstein2015}. Notably, by Theorem~\ref{thm:topological-equivalence}, our estimator $\hat \mu$ also alleviates this slow rate in Wasserstein distance as formalized below.

\begin{corollary}\label{cor:wasserstein-estimation}
    Under the same conditions as Theorem~\ref{thm:npt-convergence-rate}, assume additionally that 
    $|T_{\bgamma}^{\bmu}|_{H^1(\gamma)} < \infty$.
    Then, the following bound holds, achieving the $O(r_N)$ rate as $N\to\infty$ with $d$ fixed:
    \begin{align*}
        \E[d_W(\hat\mu, \mu)] &\le \sqrt{2}\big(1 \vee |T_{\bgamma}^{\bmu}|_{H^1(\gamma)}\big)\Bigg( \sqrt{d}\cdot r_N + \sqrt{\frac{2\pi^2\, d(d-1)}{N \,\lambda_{\min}(\s)}}\Bigg). %
    \end{align*}
\end{corollary}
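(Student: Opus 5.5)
The corollary follows by chaining bound (i) of Theorem~\ref{thm:topological-equivalence} with Theorem~\ref{thm:npt-convergence-rate}. Apply Theorem~\ref{thm:topological-equivalence}(i) pathwise, i.e. for each fixed realization of the sample, to the pair $(\hat\mu,\mu)$. Here $\hat\mu=\Lambda(\hat\bmu,\hat\s)\in\Lambda(d)$ by the standing assumption that the empirical marginals have nonzero variance, and $\mu=\Lambda(\bmu,\s)\in NPN(d)\subset\Lambda(d)$. The theorem then gives
\begin{equation*}
d_W(\hat\mu,\mu)\le \sqrt2\big\{1\vee\big(|T_{\bgamma}^{\hat\bmu}|_{H^1(\gamma)}\wedge|T_{\bgamma}^{\bmu}|_{H^1(\gamma)}\big)\big\}\, d_{NPT}(\hat\mu,\mu).
\end{equation*}
Since $a\wedge b\le b$ and $x\mapsto 1\vee x$ is nondecreasing, the random prefactor is bounded by the deterministic constant $C_\mu:=\sqrt2\,(1\vee|T_{\bgamma}^{\bmu}|_{H^1(\gamma)})$. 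This constant is finite by the added assumption.

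This step matters because $T_\gamma^{\hat\mu_j}$ is a step function, so $|T_{\bgamma}^{\hat\bmu}|_{H^1(\gamma)}=\infty$. The bound stays finite only because Theorem~\ref{thm:topological-equivalence}(i) uses the minimum of the two seminorms, so regularity of $\mu$ alone suffices.

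We thus have $d_W(\hat\mu,\mu)\le C_\mu\, d_{NPT}(\hat\mu,\mu)$ almost surely. Taking expectations and using linearity and monotonicity of $\E$ gives $\E[d_W(\hat\mu,\mu)]\le C_\mu\,\E[d_{NPT}(\hat\mu,\mu)]$. Substituting the bound of Theorem~\ref{thm:npt-convergence-rate} yields exactly the displayed inequality. Both sides are nonnegative random variables, so no integrability issue arises. The remaining technical point is measurability of $d_W(\hat\mu,\mu)$ as a function of the sample, which holds because $d_W$ is continuous in $\Pcal_2(\R^d)$ and $\hat\mu$ depends measurably on the data.

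For the rate statement, hold $d$ and $\mu$ fixed, so $C_\mu$ and $\lambda_{\min}(\s)>0$ are constants. The second term is $O(N^{-1/2})$. The first term is $\sqrt d\, r_N$, and $r_N\gtrsim N^{-1/2}$ because univariate empirical measures of nondegenerate laws cannot converge faster than the parametric rate. Hence the whole bound is $O(r_N)$. There is no real obstacle; the only point needing care is the observation above that the $\wedge$ in Theorem~\ref{thm:topological-equivalence}(i) lets the unbounded Sobolev seminorm of the empirical transport map be discarded.
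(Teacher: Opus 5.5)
Your proposal is correct and follows the paper's own proof. That proof simply chains Theorem~\ref{thm:topological-equivalence}(i), using the minimum in the prefactor to drop the infinite Sobolev seminorm of the empirical transport map, with Theorem~\ref{thm:npt-convergence-rate}. Your explicit remark that $N^{-1/2}=O(r_N)$, which the paper leaves implicit, is also right: $d_W^2(\hat\mu_j,\mu_j)\ge(\bar X_j-\E[X_{1j}])^2$, where $\bar X_j$ is the sample mean of $\{X_{kj}\}_{k=1}^N$, gives $\E[d_W^2(\hat\mu_j,\mu_j)]\ge \mathrm{Var}(X_{1j})/N$.
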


In the optimal transport literature, overcoming the slow $O(N^{-1/(4\vee d)})$ rate in the Wasserstein distance has primarily been pursued by assuming higher-order smoothness on $\mu$; however, the minimax optimal rate in those settings still retains a $d$-dependent exponent of $N$ that deteriorates rapidly in higher dimensions \citep{niles-weedMinimaxEstimationSmooth2022}. Corollary~\ref{cor:wasserstein-estimation} demonstrates that this $d$-dependent exponent can be entirely circumvented by imposing a parametric structure solely on the latent dependence, while leaving univariate marginals fully nonparametric under the minimal Gaussian--Sobolev condition. This new semiparametric mitigation of the curse of dimensionality in $d_W$ may be of independent interest.

\section{Nonparanormal Fr\'echet Regression}\label{sec:frechet-regression}

In this section, we turn to the regression problem with multivariate distributional responses and Euclidean predictors, under the nonparanormal model for the response distributions. Let $(Z,\omega)$ be a random pair on $\R^p\times NPN(d)$, where $Z$ is a random predictor vector and $\omega$ is a random nonparanormal measure. As in Section~\ref{sec:nonpara-def-extension}, we write $\omega=\Lambda(\bomega,S)$, where $\bomega=(\omega_1,\ldots,\omega_d)\in\Pcal_c^d$ is the vector of random continuous marginals and $S\in\Ecal_d^{++}$ is the random latent correlation matrix. We adopt the Fr\'echet regression framework \citep{petersenFrechetRegressionRandom2019} to regress the response $\omega$ in the metric space $\Lambda(d)$ endowed with $d_{NPT}$, where the regression function is defined as the conditional Fr\'echet mean of $\omega$ given $Z=z$:
\begin{equation*}
    \omega^*(z) = \argmin_{\mu\in \Lambda(d)} \E[d_{NPT}^2(\omega, \mu) | Z = z].
\end{equation*}
To target the conditional Fr\'echet mean $\omega^*(z)$, we consider the global Fr\'echet regression function, which is given by the weighted Fr\'echet mean:
\begin{equation}\label{eq:frechet-reg}
    \omega_F (z) = \argmin_{\mu\in \Lambda(d)} \E[s(Z, z)d_{NPT}^2(\omega, \mu)],
\end{equation}
with the linear weight function $s(Z, z)$ defined by
$$s(Z, z) = 1 + (Z - \E[Z])^\top \cov(Z)^{-1} (z - \E[Z]),$$
where $\cov(Z) = \E[(Z - \E[Z])(Z - \E[Z])^\top]$. This weight satisfies $\E[s(Z,z)]=1$ and is extended from a reformulation of classical linear regression. The resulting $\omega_F(z)$ minimizes the best linear approximation (in $Z$) to $ \E[d_{NPT}^2(\omega,\mu)|Z]$, and therefore closely approximates $\omega^*(z)$ \citep{bhattacharjeeNonlinearGlobalFrechet2025}. In this paper, we focus on estimating the global regression function $\omega_F(z)$, but our approach also applies to local linear or global nonlinear extensions \citep{petersenFrechetRegressionRandom2019, bhattacharjeeNonlinearGlobalFrechet2025} by replacing $s(Z,z)$ with nonlinear weight functions, aiming at closer approximations of $\omega^*(z)$. 

In our nonparanormal setting, the additive structure of the NPT metric enables \textit{decoupling} the regression target \eqref{eq:frechet-reg}. Specifically, the minimization problem decomposes as:

\begin{align*}
    \omega_F(z) 
    &=\argmin_{\Lambda(\bmu, \s)\in\Lambda(d)} \left(\sum_{j=1}^d \E[s(Z, z)d_W^2(\omega_j, \mu_j)] + \E[s(Z, z)\Bcal^2(S, \s)] \right)\\
    &= \Lambda\bigg(\Big(\argmin_{\mu_j\in \Pcal_*} \E[s(Z, z)d_W^2(\omega_j, \mu_j)] \Big)_{j=1}^d,\ \argmin_{\s\in\Ecal_d} \E[s(Z, z)\Bcal^2(S, \s)] \bigg).
\end{align*}
Thus, the regression reduces to the following $d$ univariate Fr\'echet regressions for the marginals and one Fr\'echet regression for latent correlation matrices:
\begin{align}\label{eq:decoupled-frechet-population}
    \begin{split}
        \omega_{F, j}(z) &= \argmin_{\mu\in\Pcal_*}\E[s(Z, z) d_W^2(\omega_j, \mu)], \quad j=1,\ldots, d,\quad \text{and} \\
        {S_F}(z) &= \argmin_{\s\in\Ecal_d} \E[s(Z, z) \Bcal^2(S, \s)],
    \end{split}
\end{align}
which together construct the nonparanormal distribution $\omega_F(z) = \Lambda(\bomega_{F}(z), S_F(z))$, where $\bomega_{F}(z) = (\omega_{F, j}(z))_{j=1}^d$.
We term this procedure \textit{nonparanormal Fr\'echet regression}. 

This decoupled structure has not only computational but also interpretability advantages. Unlike Fr\'echet regression with the Wasserstein-variants \citep{fanConditionalWassersteinBarycenters2024, chenSlicedWassersteinRegression2025}, where predictor effects are interpreted only through the single multivariate distributional object, the decoupling in~\eqref{eq:decoupled-frechet-population} allows to disentangle the predictor's effect between individual marginal distributions $\omega_j$ and the latent dependence structure $S$.

\subsection{Estimation of nonparanormal Fr\'echet regression}\label{sec:estimation}

Given the decoupled scheme \eqref{eq:decoupled-frechet-population}, we separately estimate marginals $\omega_{F, j}(z)$ and the latent correlation matrix $S_F(z)$ to estimate the full distribution $\omega_F(z)$. Suppose that $\{(Z_i, \omega_i)\}_{i=1}^n$ is an i.i.d. sample from $(Z, \omega)$, with $\omega_i$ represented by $\omega_i= \Lambda(\bomega_i,  S_i)$, where $\bomega_i = (\omega_{ij})_{j=1}^d$. Write $\overline{Z} = n^{-1}\sum_{i=1}^n Z_i$ and $\widehat{\cov}(Z) = n^{-1}\sum_{i=1}^n(Z_i - \overline{Z})(Z_i - \overline{Z})^\top$, which define the empirical weight function as
$s_n(Z_i, z) = 1 + (Z_i - \overline{Z})^\top \widehat{\cov}(Z)^{-1}(z-\overline{Z})$. 

We first consider the oracle case, where the nonparanormal distributions $\omega_i$ are assumed to be fully observed. The separate estimators of the decoupled target \eqref{eq:decoupled-frechet-population} are given by:
\begin{align}\label{eq:oracle-decoupled-estimators}
    \begin{split}
        \tilde\omega_{F, j}(z) &= \argmin_{\mu\in\Pcal_*}\frac{1}{n}\sum_{i=1}^n s_n(Z_i, z) d_W^2(\omega_{ij}, \mu),
    \quad j=1,\ldots, d,\quad \text{and} \\ %
    \tilde  S_F(z) &=\argmin_{ \s\in\Ecal_d} \frac{1}{n}\sum_{i=1}^n s_n(Z_i, z)\Bcal^2(S_i, \s), %
    \end{split}
\end{align}
which defines the distributional estimator %
    as $\tilde \omega_{F}(z):= \Lambda\big((\tilde\omega_{F, j})_{j=1}^d, \tilde S_F(z)\big).$

In practice, we assume that for each $i$, $N_i$ independent samples $\{Y_{ik}\}_{k=1}^{N_i}\subset\R^d$ are drawn from $\omega_i$, yielding the empirical distribution estimate $\hat \omega_i = \Lambda(\hat\bomega_i, \hat S_i)$ as constructed in Section~\ref{sec:npt-empirical}. Then, the separate regression estimators are analogously defined as
\begin{align}
    \hat\omega_{F, j}(z) &= \argmin_{\mu\in\Pcal_*}\frac{1}{n}\sum_{i=1}^n s_n(Z_i, z) d_W^2(\hat\omega_{ij}, \mu),
    \quad j=1,\ldots, d,\quad \text{and} \label{eq:marginal-estimation} \\ 
    \hat S_F(z) &=\argmin_{ \s\in\Ecal_d} \frac{1}{n}\sum_{i=1}^n s_n(Z_i, z)\Bcal^2(\hat S_i, \s), \label{eq:correlation-estimation}
\end{align}
yielding the full estimator
$    \hat \omega_{F}(z):= \Lambda\big((\hat\omega_{F, j})_{j=1}^d, \hat S_F(z)\big)$.
To ensure the existence of $\omega_F(z)$ and its estimators, we require the following assumption on the marginal components.
\begin{assumption}\label{assump:non-constant-marginals}
    For every $j=1,\ldots, d$ and every $z\in\R^p$ in the support of $Z$, $\omega_{F, j}(z)$, $\tilde \omega_{F, j}(z)$, and $\hat\omega_{F, j}(z)$ exist in $\Pcal_*$, with the latter two almost surely.
\end{assumption}
This assumption is mild: since the objective functions for $\omega_{F, j}(z)$, $\tilde \omega_{F, j}(z)$, and $\hat\omega_{F, j}(z)$ attain minima over the entire domain $\Pcal$ \citep{petersenFrechetRegressionRandom2019}, it only excludes the possibility that the minimizers are point masses. The existence of correlation components $S_F(z)$, $\tilde S_F(z)$, and $\hat S_F(z)$ is guaranteed since $\Bcal^2$ is continuous on the compact domain $\Ecal_d$.

\subsection{Computation algorithm}\label{sec:computation}

We describe a computational procedure for the estimator $\hat\omega_F(z)$, computed separately for the marginal component \eqref{eq:marginal-estimation} and the latent correlation component \eqref{eq:correlation-estimation}. 

For the marginals $\hat \omega_{F, j}(z)$, we utilize the quantile-based closed formula \eqref{eq:univariate-wasserstein}. Let $\hat q_{ij}$ denote the empirical quantile function of the measure $\hat\omega_{ij}$ and let  $\overline{q}_j := n^{-1} \sum_{i=1}^n s_n(Z_i, z)  \hat q_{ij}$.
For each $j$, the quantile corresponding to $\hat \omega_{F,j}(z)$ is obtained as
\begin{equation*}
    \hat q_{F, j}(z) 
    = \argmin_{q\in\Kcal} \|\overline{q}_j - q\|_{L^2}^2,
\end{equation*}
where $\Kcal$ is the cone of quantile functions defined in \eqref{eq:quantile-embedding}. Following prior works \citep{petersenFrechetRegressionRandom2019, zhouWassersteinRegressionEmpirical2024}, we solve this optimization by approximating the $L^2$ norm via an $M$-equispaced grid on $(0, 1)$. Denoting by $\overline{\bq}_j \in\R^M$ the vector of $\overline{q}_j$ evaluated at each point on the grid, the optimization becomes a quadratic program:
$$\argmin_{\bq = (q_1,\ldots,q_M)^\top \in\R^M} \|\overline{\bq}_j - \bq \|^2, \quad\text{subject to}\quad q_1\le\cdots \le q_M,$$
whose solution $\hat\bq_j$ represents a discretized approximation of the quantile function $\hat q_{F, j}(z)$.
We use the R package \texttt{fastfrechet} \citep{coulterFastfrechetPackageFast2025} with a grid size of $M=200$ for implementation, which additionally handles support constraints for univariate marginals as in our motivating application in Section~\ref{sec:real-data}.

For the latent correlation estimator $\hat S_F(z)$, we develop projected Riemannian gradient descent on the BW manifold $\Scal_d^{++}$ of covariance matrices, where each Riemannian gradient descent step is followed by an efficient projection step onto the correlation set $\Ecal_d$ as shown in Algorithm~\ref{alg:proj-Riem-GD}. Lines 3 and 4 correspond to the Riemannian gradient descent algorithm for computing barycenters on the BW manifold $\Scal_d^{++}$ \citep{chewiGradientDescentAlgorithms2020}.
Line 3 computes the Riemannian gradient $G^{(t)}$ of the objective function in \eqref{eq:correlation-estimation}, where $T_{S^{(t-1)}}^{\hat S_i}$ is the Gaussian optimal transport matrix as defined in \eqref{eq:BW-transport}. The gradient $G^{(t)}$ lies in the tangent space of $\Scal_d^{++}$ at $S^{(t-1)}$, and line 4 retracts $G^{(t)}$ to $\Sigma^{(t)}$ on the manifold $\Scal_d^{++}$.
Then, since $\s^{(t)}$ is not generally a correlation matrix, we additionally perform the projection step to $\Ecal_d$ in line 5; here, the projection map $P_{\Ecal_d}$ is defined within the BW metric geometry and admits a simple, intuitive closed-form, the \textit{symmetric normalization}, which we prove in Lemma~\ref{lemma:BW-projection}.

\begin{algorithm}[t]
\spacingset{1}
\caption{Projected Riemannian gradient descent for regression of correlation matrices} 
\label{alg:proj-Riem-GD}  
\begin{algorithmic}[1] 
\Input Pairs of predictor vector and estimated correlation matrix $\{(Z_i, \hat{S}_i)\}_{i=1}^n\subset \R^p\times \Ecal_d$; evaluation point $z\in\R^p$; stepsize $\eta > 0$; max iterations $T_{\max} > 0$; tolerance $\varepsilon > 0$. 
\Output The estimated Fr\'echet regression function $S^{(t)} = \hat  S_F(z)$ for latent correlations.
\State \textit{Initialize:} $S^{(0)}\gets  \hat S_1$
\For{$t=1,\dots,T_{\max}$} 
\State 
$G^{(t)} \gets I_d - \eta\Big(I_d - \frac{1}{n}\sum_{i=1}^ns_n(Z_i, z) T_{S^{(t-1)}}^{\hat S_i}\Big)$
\State
$\Sigma^{(t)} \gets G^{(t)}S^{(t-1)}G^{(t)}$
\State $S^{(t)} \gets P_{\Ecal_d} \big(\Sigma^{(t)}\big) := \argmin_{Q\in\Ecal_d} \Bcal\big(\Sigma^{(t)}, Q\big) $ %
\If {$\|S^{(t)} - S^{(t-1)}\| < \varepsilon$} %
\State \textbf{break}
\EndIf
\EndFor 
\end{algorithmic} 
\end{algorithm}

\begin{lemma}\label{lemma:BW-projection}
    For any matrix $\s\in\Scal_d^+$, the projection $P_{\Ecal_d} (\Sigma) = \argmin_{Q\in\Ecal_d} \Bcal(\Sigma, Q)$ satisfies
    $$P_{\Ecal_d}(\s) = D(\s)^{-1/2}\s D(\s)^{-1/2},$$
    where $D(\s)$ is a diagonal matrix with $\diag(D(\s)) = \diag(\s)$.
\end{lemma}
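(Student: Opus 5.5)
The plan is to reduce the projection problem to maximizing a cross-moment over Gaussian couplings, and then to maximize that cross-moment coordinate by coordinate with Cauchy--Schwarz. Throughout I assume every diagonal entry $\s_{jj}$ is positive, since otherwise $D(\s)^{-1/2}$ in the statement is undefined; a zero diagonal forces a zero row and column and would need a separate convention.

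\emph{Reduction to a coupling problem.} For $Q\in\Ecal_d$ we have $\Tr Q=d$, so by \eqref{eq:BW-distance}
\[
\Bcal^2(\s,Q)=\Tr\s+d-2\Tr\big[(\s^{1/2}Q\s^{1/2})^{1/2}\big].
\]
Minimizing $\Bcal(\s,Q)$ over $\Ecal_d$ is therefore equivalent to maximizing $F(Q):=\Tr[(\s^{1/2}Q\s^{1/2})^{1/2}]$. I will use the coupling characterization of the Wasserstein distance \eqref{eq:Wasserstein} between $\Ncal(0,\s)$ and $\Ncal(0,Q)$. Expanding $\E\|X-Y\|^2=\Tr\s+\Tr Q-2\E[X^\top Y]$ and matching this with \eqref{eq:BW-distance} gives
\[
F(Q)=\sup\big\{\E[X^\top Y]: X\sim\Ncal(0,\s),\ Y\sim\Ncal(0,Q)\big\},
\]
and the supremum is attained because optimal couplings exist.

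\emph{Upper bound.} For any such coupling, Cauchy--Schwarz applied coordinatewise gives
\[
\E[X^\top Y]=\sum_{j=1}^d\E[X_jY_j]\le\sum_{j=1}^d\sqrt{\s_{jj}Q_{jj}}=\sum_{j=1}^d\sqrt{\s_{jj}},
\]
using $Q_{jj}=1$. Hence $F(Q)\le\sum_j\sqrt{\s_{jj}}$ for every $Q\in\Ecal_d$.

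\emph{Attainment.} Take $X\sim\Ncal(0,\s)$ and set $Y=D(\s)^{-1/2}X$. Then $Y\sim\Ncal\big(0,D(\s)^{-1/2}\s D(\s)^{-1/2}\big)$, and this covariance is positive semidefinite with unit diagonal, so it lies in $\Ecal_d$. For this coupling,
\[
\E[X_jY_j]=\frac{\s_{jj}}{\sqrt{\s_{jj}}}=\sqrt{\s_{jj}},
\]
so the upper bound is achieved and $Q^\star:=D(\s)^{-1/2}\s D(\s)^{-1/2}$ is a minimizer.

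\emph{Uniqueness.} This is the step that needs care, so that $\argmin$ denotes a single matrix. Suppose $Q\in\Ecal_d$ also attains the value $\sum_j\sqrt{\s_{jj}}$, and take an optimal coupling $(X,Y)$ for it. Equality must then hold in each Cauchy--Schwarz step. Since $\s_{jj}>0$ and $Q_{jj}=1$, equality forces $Y_j=X_j/\sqrt{\s_{jj}}$ almost surely for every $j$. Thus $Y=D(\s)^{-1/2}X$, and so $Q=\cov(Y)=Q^\star$. The main obstacle is justifying the coupling representation of the trace term for singular $\s$ or $Q$. This is handled by the general Kantorovich formulation \eqref{eq:Wasserstein}, which does not require invertibility, rather than the map form \eqref{eq:BW-transport}.
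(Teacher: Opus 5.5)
Your proof is correct. It rests on the same decoupling idea as the paper, expressed in coupling language rather than matrix-factorization language.

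\textbf{How the two arguments line up.} The paper uses the characterization $\Bcal^2(Q,\s)=\inf_{M^\top M=Q,\,L^\top L=\s}\|M-L\|_F^2$ and fixes $L=\s^{1/2}$. It then notes that $\diag(M^\top M)=1_d$ means every column of $M$ is a unit vector, so the problem splits into $d$ nearest-unit-vector problems with solution $\hat m_j=\ell_j/\|\ell_j\|$. Set $X=L^\top G$ and $Y=M^\top G$ with $G\sim\Ncal(0,I_d)$. Then $\E[X_jY_j]=\langle \ell_j,m_j\rangle$. So your coordinatewise Cauchy--Schwarz bound is the paper's columnwise inequality $\langle \ell_j,m_j\rangle\le\|\ell_j\|$, and your coupling $Y=D(\s)^{-1/2}X$ corresponds to $\hat M=L\,D(\s)^{-1/2}$.

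\textbf{What your version adds.} You give an explicit uniqueness argument via the equality case of Cauchy--Schwarz. In the paper this is only implicit: it follows because the columnwise minimizers are unique and the infimum over factors $M$ is attained on a compact set. You also state the assumption $\s_{jj}>0$, which the paper uses silently when dividing by $\|\ell_j\|$.

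\textbf{What the paper's version buys.} It is self-contained, relying on a single finite-dimensional matrix identity. Your route instead needs the existence of optimal couplings and the closed form \eqref{eq:BW-distance} for possibly singular covariances.

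\textbf{A small correction to your closing remark.} The Kantorovich definition \eqref{eq:Wasserstein} does not by itself give $\sup\E[X^\top Y]=\Tr[(\s^{1/2}Q\s^{1/2})^{1/2}]$ when $\s$ or $Q$ is singular. What you are actually invoking is the standard fact that the Gaussian closed form holds for all positive semidefinite covariances. The paper takes this for granted in \eqref{eq:BW-distance}, so you may cite it directly.
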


Lemma~\ref{lemma:BW-projection} follows from the parametrization $M\in\R^{d\times d}\mapsto M^\top M \in \Scal_d^{+}$ \citep{bhatiaBuresWassersteinDistance2019}, under which column-wise normalization of $M$ yields the projection $P_{\Ecal_d}$. This closed-form projection enables fast implementation; we empirically found that Algorithm~\ref{alg:proj-Riem-GD} converges much faster than a projected gradient method using Frobenius projection onto $\Ecal_d$, which lacks a closed-form and is often computed by alternating minimization \citep{highamComputingNearestCorrelation2002}. %

We fix stepwise $\eta = 1$ throughout, motivated by the bivariate setting, where one iteration of Algorithm~\ref{alg:proj-Riem-GD} attains the exact solution $\hat S_F(z)$, except in rare extreme cases ($\hat S_F(z)=\pm 1$ or the objective in \eqref{eq:correlation-estimation} becomes constant) that are easy to diagnose from data. See Section~\ref{sec:supp-proof-one-step-eqcorr} of the supplementary material for details and a similar property of equicorrelation matrices.

\subsection{Component-wise assessment and permutation-based inference}\label{sec:compo-wise-r2}

To assess goodness-of-fit in Fr\'echet regression, \citet{petersenFrechetRegressionRandom2019} introduced the generalized $R^2$ by extending classical $R^2$ in linear regression. %
With NPT, it is defined as:
$$\hat R^2_F = 1 - \frac{\sum_{i=1}^nd_{NPT}^2(\hat\omega_i, \hat \omega_F(Z_i))}{\sum_{i=1}^nd_{NPT}^2(\hat\omega_i, \hat\omega_F(\overline{Z}))} = 1 - \frac{\sum_{j=1}^d\left(\sum_{i=1}^nd^2_W(\hat\omega_{ij}, \hat \omega_{F, j}(Z_i))\right) + \sum_{i=1}^n\Bcal^2(\hat S_i, \hat{S}_F(Z_i))}{\sum_{j=1}^d\left(\sum_{i=1}^nd^2_W(\hat\omega_{ij}, \hat\omega_{F, j}(\overline{Z}))\right) + \sum_{i=1}^n\Bcal^2(\hat S_i, \hat{S}_F(\overline{Z}))},
$$
which represents the fraction of total distributional variability explained by the regression fit. However, this single scalar summary can be misleading. Since the denominator and nominator aggregate variability across components (marginals and dependence structure), the overall $\hat R_F^2$ is dominated by components with large variability; if such a component has a poor fit, it may yield a small $\hat R_F^2$ even if the fit in other components is good. This limitation is not specific to NPT, but rather inherent to any scalar-based goodness-of-fit measure for multivariate distributional data, which collapses information into a single number.

To overcome this issue, we instead propose \textit{component-wise} $R^2$:
$$\hat R_j^2:= 1 - \frac{\sum_{i=1}^nd^2_W(\hat\omega_{ij}, \hat \omega_{F, j}(Z_i))}{\sum_{i=1}^nd^2_W(\hat\omega_{ij}, \hat\omega_{F, j}(\overline{Z}))},\quad j=1,\ldots, d,  \quad\text{and}\quad  \hat R_S^2:= 1 - \frac{\sum_{i=1}^n\Bcal^2(\hat S_i, \hat{S}_F(Z_i))}{\sum_{i=1}^n\Bcal^2(\hat S_i, \hat{S}_F(\overline{Z}))},$$
which naturally aligns with the decoupled scheme of nonparanormal Fr\'echet regression. These components form the vector $\hat\bR_F^2 = (\hat R_1^2,\ldots,\hat R_d^2,\hat R_S^2)$ of generalized $R^2$ values, which prevents information collapse and enables a more intuitive, component-wise evaluation of the predictor effect on marginal components and their dependence structure.

We then propose a permutation-based inference for the vector $\hat\bR_F^2$.
To test the null hypothesis of no predictor effect on each component, for each permutation $b=1,\ldots, B$, we randomly permute the predictors $\{Z_i\}_{i=1}^n$ to obtain $\{Z_i^{(b)}\}_{i=1}^n$, refit the nonparanormal Fr\'echet regression using the permuted data $\{(Z_i^{(b)},\hat\omega_i)\}_{i=1}^n$, and recompute the component-wise coefficients $\hat\bR_F^{2,(b)} = (\hat R_1^{2,(b)},\ldots,\hat R_d^{2,(b)},\hat R_S^{2,(b)})$. The resulting collection $\{\hat\bR_F^{2,(b)}\}_{b=1}^B$ provides empirical null distributions for each component. Since we test $d+1$ hypotheses simultaneously, we control the family-wise error rate (FWER) using the Westfall-Young min-$p$ adjustment \citep{westfallResamplingbasedMultipleTesting1993}. 
While permutation testing with thousands of replicates is typically computationally prohibitive in multivariate distributional data, our efficient correlation regression algorithm (Algorithm~\ref{alg:proj-Riem-GD}), combined with the \texttt{fastfrechet} package \citep{coulterFastfrechetPackageFast2025} for marginal regressions, makes this procedure scalable: in our real-data analysis with $n=968$ trivariate distributions (Section~\ref{sec:real-data}), computing all $n=968$ fitted distributions in a single permutation replicate takes under $10$ seconds on a single CPU core.

\section{Convergence of nonparanormal Fr\'echet regression}\label{sec:theory}

We study convergence of the nonparanormal Fr\'echet regression estimators $\tilde \omega_F(z)$ and $\hat \omega_F(z)$ defined in Section~\ref{sec:estimation}. Beyond their pointwise convergence to the population counterpart $\omega_F(z)$ at a single evaluation point $z\in\R^p$, we establish stronger uniform convergence rates over the region $\{z\in\R^p:\|z\|\le B\}$ for a fixed constant $B>0$, a common goal in the Fr\'echet regression literature \citep{petersenFrechetRegressionRandom2019, zhouWassersteinRegressionEmpirical2024, chenSlicedWassersteinRegression2025}.

Establishing these uniform rates requires novel theoretical development. Although \citet{petersenFrechetRegressionRandom2019} develop a general theory for Fr\'echet regression of metric space-valued responses, it requires fully observed distributional responses and relies on assumptions that can be restrictive; in particular, these assumptions do not hold for general multivariate distributions under the Wasserstein distance \citep{fanConditionalWassersteinBarycenters2024}. 
Rather than adopting this general theory, we derive a new convergence result for the oracle estimator $\tilde\omega_F(z)$ based on fully observed $\omega_i$ under NPT. Our proof avoids the condition on the empirical objective function imposed by \citet{petersenFrechetRegressionRandom2019} and yields the sharper parametric convergence rate. We then establish convergence of the estimator $\hat\omega_F(z)$ based on responses $\hat\omega_i$ estimated from $N_i$ samples from each $\omega_i$, where controlling distribution-wise sampling error within the Fr\'echet regression framework is nontrivial and has not been analyzed even in the simpler Gaussian setting \citep{xuWassersteinFtestsFrechet2025}. Finally, combined with Theorem~\ref{thm:topological-equivalence}, these NPT rates imply the same sharp convergence rates in the Wasserstein distance.

We begin by analyzing the oracle estimator $\tilde \omega_F(z) = \Lambda(\tilde\bomega_F(z), \tilde{S}_F(z))$, where distributional responses $\omega_i$ are fully observed. We analyze its convergence via the decomposition
$$d_{NPT}^2(\tilde \omega_F(z), \omega_F(z)) = \sum_{j=1}^d d_W^2(\tilde\omega_{F, j}(z), \omega_{F, j}(z)) + \Bcal^2(\tilde{S}_F(z), S_F(z)).$$
The marginal terms $\tilde\omega_{F, j}(z)$ are handled via the isometric $L^2$-space embedding \eqref{eq:quantile-embedding}. The correlation term $\tilde{S}_F(z)$ requires a more technical analysis, as the BW metric has curved geometry and lacks an isometric Hilbert space embedding \citep{massartCurvatureManifoldFixedRank2019}. To analyze $\tilde{S}_F(z)$, we let $F(z,\s):=\E[s(Z,z)\,\Bcal^2(S,\s)]$ denote the objective function for the population random pair $(Z, S)$ in \eqref{eq:decoupled-frechet-population} and impose the following regularity assumptions.

\begin{assumption}\label{assump:correlation-mineigen}
    There exists a constant $\lambda_0 >0$ such that (i) $\lambda_{\min}(S) \ge \lambda_0$ almost surely, and (ii) $\lambda_{\min}(S_F(z))\ge \lambda_0$ for all $\|z\|\le B$.
\end{assumption}

\begin{assumption}\label{assump:correlation-uniqueness}
    The population regression $S_F(z)$ in~\eqref{eq:decoupled-frechet-population} is unique for every $\|z\|\le B$.
\end{assumption}

\begin{assumption}\label{assump:local-convexity-radius}
    Let $H(z)$ denote the Hessian of the functional $\s \mapsto F(z, \s)$ evaluated at $S_F(z)$, viewed as a bilinear form on the tangent space $\mathcal{T} :=\{M \in\Scal_d: \diag(M) = 0_d\}$ of $\Ecal_d$. There exists a constant $\kappa > 0$ such that 
    $H(z)[M, M] \ge \kappa \|M\|_F^2$ for all $M\in \mathcal{T}$ and $\|z\|\le B$.
\end{assumption}

Assumption~\ref{assump:correlation-mineigen} is a mild condition requiring the random response $S$ and the regression function $S_F(z)$ to be strictly positive definite, {and Assumption~\ref{assump:correlation-uniqueness} guarantees uniqueness of the population regression target $S_F(z)$.}
Assumption~\ref{assump:local-convexity-radius} ensures strong convexity of the objective $F$ locally at the optimum $S_F(z)$ on the tangent space $\Tcal$, {an assumption analogous to condition~(U2) in \citet{petersenFrechetRegressionRandom2019} and Assumption~5 in \citet{xuWassersteinFtestsFrechet2025}.
Crucially, our theory avoids assuming global separation conditions for $M$-estimation used in prior work (condition (U0) in \citet{petersenFrechetRegressionRandom2019} and Assumption 4 in \citet{xuWassersteinFtestsFrechet2025}).} 
Furthermore, in Section~\ref{sec:supp-verify-assump-equicorr} of the supplementary material, we demonstrate that Assumption~\ref{assump:correlation-mineigen} is sufficient for Assumptions~\ref{assump:correlation-uniqueness}--\ref{assump:local-convexity-radius} to hold under weak dependence between $Z$ and $S$ and in the bivariate case.

We now establish convergence rates for the oracle case, as formalized below.

\begin{theorem}\label{thm:frechet-oracle-rate}
    Suppose Assumption~\ref{assump:non-constant-marginals} holds. Then, as $n\to\infty$ with $d$ fixed,
    $$\sup_{\|z\|\le B} d_W(\tilde\omega_{F,j}(z), \omega_{F,j}(z)) = O_p(n^{-1/2}), \quad j=1,\ldots,d.$$
    If Assumptions~\ref{assump:correlation-mineigen}--\ref{assump:local-convexity-radius} additionally hold, then
    $$\sup_{\|z\|\le B} \Bcal(\tilde{S}_F(z), S_F(z)) = O_p(n^{-1/2}),
    \quad\text{and thus}\quad
    \sup_{\|z\|\le B} d_{NPT}(\tilde\omega_F(z), \omega_F(z)) = O_p(n^{-1/2}).$$
\end{theorem}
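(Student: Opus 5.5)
Both parts go through the quantity $\bar W(z) = n^{-1}\sum_i s_n(Z_i,z)\,\cdot$ (a weighted average of the responses) and its population analogue.

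\textbf{Marginal part.} Use the quantile embedding \eqref{eq:quantile-embedding}. Write $q_{ij}$ for the quantile function of $\omega_{ij}$. The population target is $q_{F,j}(z) = \Pi_\Kcal\big(\E[s(Z,z)q_{j}]\big)$ and the estimator is $\tilde q_{F,j}(z) = \Pi_\Kcal\big(n^{-1}\sum_i s_n(Z_i,z)q_{ij}\big)$, where $\Pi_\Kcal$ is the $L^2$ projection onto the closed convex cone $\Kcal$. This is a projection onto $\Kcal$ and not onto the subset $\Pcal_*$, but Assumption~\ref{assump:non-constant-marginals} guarantees that the minimizer lies in $\Pcal_*$. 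Since $\Pi_\Kcal$ is 1-Lipschitz,
\[
d_W(\tilde\omega_{F,j}(z),\omega_{F,j}(z)) \le \Big\|n^{-1}\sum_i s_n(Z_i,z)q_{ij} - \E[s(Z,z)q_j]\Big\|_{L^2}.
\]
Because $s_n$ is affine in $z$, this difference splits into $\|\bar q_j - \E q_j\|$, the error in $\overline Z$ and $\widehat{\cov}(Z)^{-1}$, and the Hilbert-space sample mean of $(Z_i-\E Z)q_{ij}$, with coefficients bounded over $\|z\|\le B$. The Hilbert-space CLT/Chebyshev inequality gives $O_p(n^{-1/2})$ for each term, uniformly in $z$. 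This assumes $\E\|q_j\|^2<\infty$ and finite second moments of $Z$, which are implicit.

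\textbf{Correlation part.} Set $F_n(z,\s) = n^{-1}\sum_i s_n(Z_i,z)\Bcal^2(S_i,\s)$ on the compact set $\Ecal_d$. The first step is a uniform law: again by affinity in $z$, $\sup_{z,\s}|F_n-F| = O_p(n^{-1/2})$. This follows because $\Bcal^2(S,\cdot)$ is uniformly bounded ($\Bcal^2\le 2d$ on $\Ecal_d$) and Lipschitz in $\s$ (indeed $\Bcal(S,\s)\le\|S^{1/2}-\s^{1/2}\|_F$ and $\Bcal^2$ is Hölder-$1/2$). A bracketing or Dudley argument on the finite-dimensional class $\{\Bcal^2(\cdot,\s):\s\in\Ecal_d\}$ yields a $\sqrt n$-Donsker class. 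The same holds for the gradient class $\{\nabla_\s\Bcal^2(S,\s)\}$ restricted to a neighborhood $\{\lambda_{\min}(\s)\ge\lambda_0/2\}$, where Assumption~\ref{assump:correlation-mineigen}(i) makes $\nabla_\s \Bcal^2(S,\s) = I - T_\s^S$ smooth and uniformly Lipschitz in $\s$.

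\textbf{Consistency.} Combine compactness of $\Ecal_d$, continuity of $F$ jointly in $(z,\s)$, and uniqueness (Assumption~\ref{assump:correlation-uniqueness}). A contradiction argument then shows that for every $\epsilon$ the separation $\inf_{\|z\|\le B}\inf_{\|\s-S_F(z)\|\ge\epsilon}\{F(z,\s)-F(z,S_F(z))\}>0$ holds automatically; this replaces condition (U0). Hence $\sup_z\|\tilde S_F(z)-S_F(z)\|_F\to_p 0$. In particular, with probability tending to one, $\tilde S_F(z)$ lies in the $\lambda_0/2$ eigenvalue neighborhood.

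\textbf{Rate.} Work in the affine chart $\s = S_F(z)+M$ with $M\in\Tcal$. Assumption~\ref{assump:local-convexity-radius} and uniform continuity of the Hessian near $S_F(z)$ (both uniform in $z$ by compactness) give
\[
F(z,\s)-F(z,S_F(z))\ge (\kappa/4)\|M\|_F^2
\]
for small $\|M\|_F$. Apply the standard M-estimation argument to the centered process $(F_n-F)(z,\s)-(F_n-F)(z,S_F(z))$. Its sup over $\|M\|_F\le\delta$ and over $z$ is $O_p(\delta n^{-1/2})$ by the gradient-class Donsker bound and the mean value theorem. This yields $\|\tilde S_F(z)-S_F(z)\|_F = O_p(n^{-1/2})$ uniformly.

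Finally convert to $\Bcal$, using $\Bcal(\s,Q)\le\|\s^{1/2}-Q^{1/2}\|_F\le \|\s-Q\|_F/(2\sqrt{\lambda_0/2})$. Summing this with the marginal rate gives the $d_{NPT}$ statement.

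The main obstacle is the rate step: showing that the increment modulus is linear in $\delta$ \emph{uniformly in $z$}. Here the $s_n$ versus $s$ discrepancy must be absorbed. I would handle it by expanding $s_n$ into its finitely many affine coefficients, so that each term is a product of an $O_p(1)$ or $O_p(n^{-1/2})$ random scalar with a fixed empirical process indexed only by $\s$.
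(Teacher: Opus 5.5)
Your marginal argument and your consistency step are essentially the paper's: the quantile embedding with the $1$-Lipschitz projection onto $\Kcal$, where you verify the uniform Hilbert-space rate by hand instead of citing Corollary~2 of Petersen--M\"uller; and uniform separation from compactness, joint continuity and uniqueness in place of (U0).

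The rate step for the correlation component is a genuinely different argument. The paper linearizes the \emph{empirical} first-order condition $\langle\nabla_\s\tilde F_n(z,\tilde S_F(z)),\Delta_n(z)\rangle_F=0$ to third order around $S_F(z)$. This forces it to:
\begin{itemize}
\item transfer the curvature of Assumption~\ref{assump:local-convexity-radius} to the sample Hessian, via uniform convergence $\tilde H_n\to H$;
\item bound the sample third derivative $\tilde K_n$ on $\Ecal_d(\lambda_0/2)$ (Lemmas~\ref{lemma:H-K-uniform-convergence}--\ref{lemma:H-K-boundedness});
\item control the projected gradient only at the true point.
\end{itemize}
You instead put the curvature on the \emph{population} side. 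The condition $\Pi_\Tcal G(z)=0$, Assumption~\ref{assump:local-convexity-radius}, and Lipschitz continuity of $\s\mapsto\nabla^2_\s F(z,\s)$ on $\Ecal_d(\lambda_0/2)$ give quadratic growth; the Lipschitz bound is uniform in $z$ by Lemma~\ref{lemma:lipschitz-transport} and $\sup_z\E|s(Z,z)|<\infty$. You pay with a gradient-fluctuation bound over the whole neighbourhood, $R_n:=\sup_{\|z\|\le B,\,\s\in\Ecal_d(\lambda_0/2)}\|\Pi_\Tcal\nabla_\s(\tilde F_n-F)(z,\s)\|_F=O_p(n^{-1/2})$. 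The Donsker class of Lemma~\ref{lemma:G-uniform-convergence}, which is already indexed by a free $\s$, delivers this once $\lambda_0$ is replaced by $\lambda_0/2$.

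Note that the mean value theorem bounds the increment by $R_n\|M\|_F$, with one random variable $R_n$ valid for all $z$ and $M$ at once. So no peeling or rate theorem is needed: on the consistency event, $\tilde F_n(z,\tilde S_F(z))\le\tilde F_n(z,S_F(z))$ gives $(\kappa/4)\|\Delta_n(z)\|_F^2\le R_n\|\Delta_n(z)\|_F$ directly. The uniformity in $z$ that you flagged as the main obstacle then comes for free.

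Your route is shorter and avoids the second- and third-order empirical processes as well as the empirical first-order condition. The paper's Hessian machinery is what it reuses for Theorem~\ref{thm:frechet-real-rate} (Lemma~\ref{lemma:uniform-pd-hat}). Your argument would extend there too, by adding to $R_n$ the $O_p(N^{-1/2})$ uniform gradient discrepancy between $\hat F_n$ and $\tilde F_n$.

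Two small corrections:
\begin{itemize}
\item $\Bcal^2(S,\cdot)$ is not Lipschitz on all of $\Ecal_d$; for $d=2$ its $\rho$-derivative blows up as $\rho\to\pm1$. Only the H\"older-$1/2$ bound holds there. That is still enough for the uniform law in your consistency step, which in any case only needs $o_p(1)$.
\item The Chebyshev bound for the Hilbert-valued mean of $(Z_i-\E Z)q_{ij}$ needs the joint moment $\E[\|Z\|^2\|q_j\|_{L^2}^2]<\infty$, not just separate second moments of $Z$ and $q_j$.
\end{itemize}
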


The exact parametric $O_p(n^{-1/2})$ rates in Theorem~\ref{thm:frechet-oracle-rate} improve upon the $O_p(n^{-1/(2+\varepsilon)})$ rate (for any $\varepsilon>0$) established for general metric space-valued responses \citep[Theorem~2]{petersenFrechetRegressionRandom2019}.
For the marginal components, we obtain this sharper rate by using the isometric $L^2$-space embedding \eqref{eq:quantile-embedding}, Corollary~2 of \citet{petersenFrechetRegressionRandom2019} for Hilbert space-valued responses, and projection onto the quantile cone $\Kcal$.
For the correlation component, we use a Taylor expansion of the oracle objective $\tilde F_n(z, \cdot)$ on $\Tcal$, then translate the $O_p(n^{-1/2})$ rate of its tangent-projected gradient into the same rate for $\Bcal(\tilde{S}_F(z), S_F(z))$. 
While the general Taylor expansion approach parallels the covariance regression analysis of \citet{xuWassersteinFtestsFrechet2025} on the unbounded cone $\Scal_d^+$, we use compactness and different empirical process arguments over $\Ecal_d$ to establish a sharp parametric rate without the logarithmic factors appearing in \citet{xuWassersteinFtestsFrechet2025}, while avoiding their uniform separation condition (Assumption 4 therein).

Next, we analyze the estimator $\hat \omega_F(z) = \Lambda(\hat\bomega_F(z), \hat S_F(z))$, where the responses $\hat\omega_i$ are estimated from $N_i$ independent samples from each $\omega_i$.
For this asymptotic analysis, we require the minimum sample size $N = \min_i N_i$ to grow sufficiently fast relative to $\log n$:

\begin{assumption}\label{assump:observation-sizes}
    As $n\to\infty$, $N := \min\{N_1,\ldots,N_n\}$ satisfies $N\to \infty$ and $(\log n)/N \to 0$.%
\end{assumption}

Under this additional condition, we derive convergence rates in the empirical-response setting.
Let $\supp(\omega)$ denote the support of the law of the random response $\omega \in NPN(d)$, and for each marginal $\mu_j$ of $\mu\in\supp(\omega)$, let $\hat\mu_j^{(m)}$ denote the empirical distribution formed from $m$ i.i.d. observations from $\mu_j$.

\begin{theorem}\label{thm:frechet-real-rate}
    Suppose Assumptions~\ref{assump:non-constant-marginals} and \ref{assump:observation-sizes} hold, and there exists a decreasing sequence $r_m\to 0$ as $m\to\infty$ such that 
    $\max_{1\le j\le d}\sup_{\mu\in\supp(\omega)} \E\!\big[d_W^2\!\big(\hat\mu_j^{(m)},\mu_j\big)\big]
    \le r_m^2$ for $m\ge 1$.
    Then, as $n\to\infty$ with $d$ fixed,
    $$\sup_{\|z\|\le B} d_W(\hat\omega_{F,j}(z), \omega_{F,j}(z)) = O_p(n^{-1/2} + r_N), \quad j=1,\ldots,d.$$
    If Assumptions~\ref{assump:correlation-mineigen}--\ref{assump:local-convexity-radius} additionally hold, then
    $$ \sup_{\|z\|\le B} \Bcal(\hat S_F(z), S_F(z)) = O_p\big(n^{-1/2} + N^{-1/2}\big),
    \text{ and }
    \sup_{\|z\|\le B} d_{NPT}(\hat\omega_F(z), \omega_F(z)) = O_p\big(n^{-1/2} + r_N\big). $$
\end{theorem}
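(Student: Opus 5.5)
We split the target error by the triangle inequality into an oracle part and a sampling part. For each $j$ and for the correlation component,
$$d_W(\hat\omega_{F,j}(z),\omega_{F,j}(z))\le d_W(\hat\omega_{F,j}(z),\tilde\omega_{F,j}(z))+d_W(\tilde\omega_{F,j}(z),\omega_{F,j}(z)),$$
and analogously for $\Bcal(\hat S_F(z),S_F(z))$. The oracle terms are $O_p(n^{-1/2})$ uniformly in $\|z\|\le B$ by Theorem~\ref{thm:frechet-oracle-rate}. It therefore suffices to bound the sampling discrepancies uniformly in $z$. The $d_{NPT}$ statement then follows from Definition~\ref{def:npt}, since $d$ is fixed and $N^{-1/2}\lesssim r_N$. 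This last point holds because the univariate empirical Wasserstein rate can never beat $N^{-1/2}$; alternatively, one simply keeps both terms.

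\textbf{Marginals.} We use the quantile embedding \eqref{eq:quantile-embedding}. Here $\hat q_{F,j}(z)$ and $\tilde q_{F,j}(z)$ are the $L^2$-projections onto the closed convex cone $\Kcal$ of $n^{-1}\sum_i s_n(Z_i,z)\hat q_{ij}$ and $n^{-1}\sum_i s_n(Z_i,z)q_{ij}$, respectively. Because projection onto $\Kcal$ is $1$-Lipschitz,
$$d_W(\hat\omega_{F,j}(z),\tilde\omega_{F,j}(z))\le \frac1n\sum_i|s_n(Z_i,z)|\,d_W(\hat\omega_{ij},\omega_{ij}).$$
We have $\sup_{\|z\|\le B}|s_n(Z_i,z)|\le 1+C_n\|Z_i-\overline Z\|$ with $C_n=O_p(1)$. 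Applying Cauchy--Schwarz, this is at most $O_p(1)\cdot\{n^{-1}\sum_i d_W^2(\hat\omega_{ij},\omega_{ij})\}^{1/2}$. Conditioning on the $\omega_i$, each $\E[d_W^2(\hat\omega_{ij},\omega_{ij})\mid \omega_i]\le r_{N_i}^2\le r_N^2$ by the uniform hypothesis and monotonicity of $r_m$. Markov's inequality then gives $O_p(r_N)$.

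\textbf{Correlations.} This is the main obstacle, because the BW objective is not Hilbertian. Let $\tilde F_n(z,\cdot)$ and $\hat F_n(z,\cdot)$ be the oracle and empirical objectives on $\Ecal_d$. The plan is a Hessian-perturbation argument with four steps.
\begin{enumerate}
\item Show that the Riemannian Hessians of $\tilde F_n(z,\cdot)$, uniformly over a neighborhood of $S_F(z)$ and over $\|z\|\le B$, converge to $H(z)$. Combined with Assumption~\ref{assump:local-convexity-radius}, this gives strong convexity with constant $\kappa/2$ on $\Tcal$ near $S_F(z)$ w.p.\ $\to1$. The tools are Assumption~\ref{assump:correlation-mineigen}, which keeps $T_{\s}^{S_i}$ smooth, and compactness of $\Ecal_d$.
\item Show that $\hat S_F(z)$ is uniformly consistent, i.e.\ lies in that neighborhood. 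Since $\Bcal^2$ is Lipschitz on the compact $\Ecal_d$ in its first argument, $\sup_{z,\s}|\hat F_n-\tilde F_n|\lesssim n^{-1}\sum_i|s_n|\,\Bcal(\hat S_i,S_i)\to0$. Uniqueness (Assumption~\ref{assump:correlation-uniqueness}), continuity, and compactness then give consistency without a separation condition.
\item Bound the perturbation of the tangent-projected gradient of $\hat F_n-\tilde F_n$ in $\s$, uniformly in $\s$ and $z$, by $C\,n^{-1}\sum_i|s_n(Z_i,z)|\,\|\hat S_i-S_i\|$. A difficulty is that $\hat S_i$ need not be positive definite. We handle it by using Assumption~\ref{assump:observation-sizes} together with a sub-Gaussian concentration bound for Kendall's $\tau$: this gives $\max_i\|\hat S_i-S_i\|_F\lesssim\sqrt{\log n/N}\to0$ w.p.\ $\to1$, so all $\hat S_i$ have $\lambda_{\min}\ge\lambda_0/2$. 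Then $\E\|\hat S_i-S_i\|_F^2\lesssim d^2/N$ (as in Theorem~\ref{thm:npt-convergence-rate}), so this term is $O_p(N^{-1/2})$.
\item Apply strong convexity to the gradient at $\hat S_F(z)$, where the projected gradient vanishes by first-order optimality. This yields $\|\hat S_F(z)-\tilde S_F(z)\|_F\lesssim\kappa^{-1}O_p(N^{-1/2})$. Since $\Bcal$ is Lipschitz-equivalent to the Frobenius distance on matrices with eigenvalues bounded below, the correlation claim follows.
\end{enumerate}
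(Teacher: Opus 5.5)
Your proposal is correct and shares the paper's overall architecture. The common pieces are: the triangle inequality through the oracle estimator, contraction of the projection onto $\Kcal$ for the marginals, the Kendall's-$\tau$ union bound placing every $\hat S_i$ in $\Ecal_d(\lambda_0/2)$ under Assumption~\ref{assump:observation-sizes}, the $O(N^{-1})$ mean-squared bound on $\|\hat S_i-S_i\|_F^2$, and the Frobenius-to-$\Bcal$ transfer.

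The real difference is which objective carries the curvature in the correlation step. The paper expands the gradient of the \emph{empirical} objective $\hat F_n$ around $\tilde S_F(z)$. It therefore needs the empirical Hessian $\hat H_n(z,\overline S(z))$ to be at least $\kappa/4$ on $\Tcal$ (Lemma~\ref{lemma:uniform-pd-hat}). That forces a perturbation of the Hessian in the sampling noise (term (iii) in that proof, via Lipschitzness of $Q\mapsto dT_\s^Q$), after which $\hat G_n$ and $\tilde G_n$ are compared only at the single point $\tilde S_F(z)$.

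You instead use strong convexity of the \emph{oracle} objective $\tilde F_n$ near $S_F(z)$. This needs only the location perturbation controlled by $\tilde K_n=O_p(1)$ (term (ii) in that proof). You combine it with $\Pi_\Tcal\nabla\tilde F_n(z,\tilde S_F(z))=0$ and evaluate the gradient discrepancy at $\hat S_F(z)$, where $\Pi_\Tcal\nabla\hat F_n$ vanishes. This removes the $\hat H_n-\tilde H_n$ term altogether. The price is that the gradient-perturbation bound must hold uniformly over $\s\in\Ecal_d(\lambda_0/2)$ rather than at one point, but Lemma~\ref{lemma:lipschitz-transport} gives that uniformity for free.

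Two small remarks. First, the Hessian in your Step~1 should be the flat one on the affine slice $\{\diag(\s)=1_d\}$ restricted to $\Tcal$, as in Assumption~\ref{assump:local-convexity-radius}, not a BW-Riemannian Hessian, because you apply convexity along straight segments in $\Ecal_d$. Second, your claim $N^{-1/2}\lesssim r_N$ is right: for instance, $\E[d_W^2(\hat\mu_j^{(m)},\mu_j)]\ge\mathrm{Var}(\mu_j)/m$ by comparing means. This makes explicit the absorption of the $N^{-1/2}$ correlation term into $r_N$, which the paper leaves implicit.
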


We prove Theorem~\ref{thm:frechet-real-rate} by bounding the discrepancy between $\hat\omega_F(z)$ and its oracle counterpart $\tilde\omega_F(z)$.
The marginal discrepancy $d_W(\hat\omega_{F, j}(z), \tilde\omega_{F,j}(z))$ achieves the $O_p(r_N)$ rate, following the isometric $L^2$-space embedding \eqref{eq:quantile-embedding} argument of \citet{zhouWassersteinRegressionEmpirical2024}. For the correlation part, the curved BW geometry necessitates a different approach. To control empirical sampling error, we develop a 
Hessian-perturbation argument leveraging the convexity of $\Ecal_d$ to bound $\Bcal(\hat S_F(z), \tilde S_F(z))$ by the empirical--oracle gradient discrepancy, whose $O_p(N^{-1/2})$ rate concludes the proof of Theorem~\ref{thm:frechet-real-rate}.

Finally, we translate the established NPT convergence rates into the Wasserstein distance using Theorem~\ref{thm:topological-equivalence}, which requires only a mild Sobolev regularity condition on $\omega_F(z)$.

\begin{corollary}\label{cor:wasserstein-rate}
    Suppose $\sup_{\|z\|\le B}|T_{\bgamma}^{\bomega_F(z)}|_{H^1(\gamma)} < \infty$ and Assumptions~\ref{assump:non-constant-marginals}--\ref{assump:local-convexity-radius} hold. Then, as $n\to\infty$ with $d$ fixed,
    $$\sup_{\|z\|\le B} d_W(\tilde\omega_F(z), \omega_F(z)) = O_p(n^{-1/2}).$$
    If Assumption~\ref{assump:observation-sizes} additionally holds, then with $r_N$ as defined in Theorem~\ref{thm:frechet-real-rate},
    $$\sup_{\|z\|\le B} d_W(\hat\omega_F(z), \omega_F(z)) = O_p(n^{-1/2} + r_N).$$
\end{corollary}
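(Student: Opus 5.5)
The plan is to apply Theorem~\ref{thm:topological-equivalence}(i) pointwise in $z$, with $\mu=\omega_F(z)$ as the regular distribution, and then take suprema. For each $\|z\|\le B$, both $\omega_F(z)=\Lambda(\bomega_F(z),S_F(z))$ and $\tilde\omega_F(z)=\Lambda(\tilde\bomega_F(z),\tilde S_F(z))$ lie in $\Lambda(d)$. Membership holds almost surely by Assumption~\ref{assump:non-constant-marginals}, since the marginals lie in $\Pcal_*$ and the correlation components lie in $\Ecal_d$ by construction. Bound (i) takes the minimum of the two Sobolev seminorms, so that minimum is at most $|T_{\bgamma}^{\bomega_F(z)}|_{H^1(\gamma)}$. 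This gives
\[
d_W(\tilde\omega_F(z),\omega_F(z))\le \sqrt2\,\big(1\vee |T_{\bgamma}^{\bomega_F(z)}|_{H^1(\gamma)}\big)\, d_{NPT}(\tilde\omega_F(z),\omega_F(z)).
\]
No regularity is needed for the estimated distribution, which is essential because the fitted marginals (for example, projections of empirical quantiles onto the cone $\Kcal$) may be discrete.

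Next I would set $C_B:=\sqrt2\,\big(1\vee\sup_{\|z\|\le B}|T_{\bgamma}^{\bomega_F(z)}|_{H^1(\gamma)}\big)$. This constant is finite and deterministic by hypothesis. Taking the supremum over $\|z\|\le B$ in the display gives
\[
\sup_{\|z\|\le B} d_W(\tilde\omega_F(z),\omega_F(z))\le C_B\sup_{\|z\|\le B} d_{NPT}(\tilde\omega_F(z),\omega_F(z)).
\]
Theorem~\ref{thm:frechet-oracle-rate} applies because Assumptions~\ref{assump:non-constant-marginals}--\ref{assump:local-convexity-radius} hold, so the right-hand side is $O_p(n^{-1/2})$. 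Multiplying by the constant $C_B$ preserves the $O_p$ order, which proves the first claim.

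For the second claim, I would repeat the argument with $\hat\omega_F(z)$ in place of $\tilde\omega_F(z)$. Again $\hat\omega_F(z)\in\Lambda(d)$ almost surely, by Assumption~\ref{assump:non-constant-marginals} and because $\hat S_F(z)\in\Ecal_d$. Theorem~\ref{thm:frechet-real-rate} then supplies $\sup_{\|z\|\le B} d_{NPT}(\hat\omega_F(z),\omega_F(z))=O_p(n^{-1/2}+r_N)$, using Assumption~\ref{assump:observation-sizes} and the sequence $r_m$ from that theorem. The Wasserstein bound follows with the same constant $C_B$.

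I see no serious obstacle, since all the analytic work sits in Theorem~\ref{thm:topological-equivalence} and Theorems~\ref{thm:frechet-oracle-rate}--\ref{thm:frechet-real-rate}. The only points needing care are bookkeeping. First, I must confirm that the estimators lie in $\Lambda(d)$, so that Theorem~\ref{thm:topological-equivalence} applies, on an event of probability one. Second, the Sobolev constant has to be uniform in $z$; this is exactly why the hypothesis is stated as a supremum over $\|z\|\le B$. With both in place, the random suprema are dominated by a deterministic multiple of the NPT suprema.
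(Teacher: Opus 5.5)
Your proposal is correct and follows the paper's proof: apply Theorem~\ref{thm:topological-equivalence}(i) pointwise with the population measure $\omega_F(z)$ as the Sobolev-regular distribution, so the estimators need no regularity. Then bound the constant uniformly via the supremum hypothesis and invoke Theorems~\ref{thm:frechet-oracle-rate} and~\ref{thm:frechet-real-rate}. Your explicit introduction of the deterministic constant $C_B$ and the check that the estimators lie in $\Lambda(d)$ only spell out bookkeeping the paper leaves implicit.
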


The first oracle Wasserstein rate of Corollary~\ref{cor:wasserstein-rate} is stronger than existing results on multivariate distributional regression by allowing a substantially more flexible response model than Gaussian-based approaches \citep{xuWassersteinFtestsFrechet2025}, establishing uniform rather than pointwise convergence \citep{fanConditionalWassersteinBarycenters2024}, and improving on sliced Wasserstein rates \citep{chenSlicedWassersteinRegression2025}, whose translation to Wasserstein rates incurs rate deterioration \citep{bonnotteUnidimensionalEvolutionMethods2013}.
Moreover, the second empirical Wasserstein rate in Corollary~\ref{cor:wasserstein-rate} is the first such result for empirically sampled responses stated directly in the multivariate Wasserstein distance.

\section{Simulations}\label{sec:simulations}

We use synthetic data to evaluate the performance of our nonparanormal Fr\'echet regression method (NPT-FR), and compare it with two alternative methods: \textit{Marginal Fr\'echet Regression (Marginal-FR)}, which fits univariate Fr\'echet regression separately to each marginal and ignores dependence; and \textit{Gaussian Fr\'echet Regression (Gaussian-FR)}, which assumes multivariate Gaussian responses \citep{xuWassersteinFtestsFrechet2025}. 
We omit the sliced Wasserstein approach \citep{chenSlicedWassersteinRegression2025} because we were unable to find publicly available code, and we omit the Sinkhorn distance method \citep{fanConditionalWassersteinBarycenters2024} as it is computationally infeasible in our $d=10$ setting due to its reliance on equidistant rectangular grids for a compact subset of $\R^d$, and number of computations growing exponentially with $d$.

We consider dimensions $d \in\{2, 10\}$ for distributional responses $\omega_i\in NPN(d)$. 
We generate $n \in \{50, 100, 200\}$ i.i.d. pairs $\{(Z_i, \omega_i)\}_{i=1}^n$ of predictor and nonparanormal distribution, where the predictor $Z_i\in\R^2$ is indepentently drawn from $\Ucal[-1,1]^2$, and the nonparanormal distributional response $\omega_i$ is drawn from a random measure $\omega = \Lambda(\bomega, S)\in NPN(d)$, which is specified by marginals and correlations conditional on $Z$ as follows.
   
   \textbf{Marginals.}
    Let $Y = (Y_{1}, \ldots, Y_{d})^\top$ be a random vector following the response distribution $\omega$. To induce skewness, the $j$th marginal distribution is specified such that
    $$(-1)^j Y_{j} \sim \text{Gamma}(2, \theta_{j}), \quad\text{where}\quad  \theta_{j}|Z \sim \text{Gamma}((\sigma +\beta_j^\top Z)^2/c, c/(\sigma + \beta_j^\top Z))$$
    with $\sigma=3$ and $c=1$. We set $\beta_1 = (0.5, 0)^\top$, $\beta_2 = (0.4, -0.3)^\top$, and with remaining $\beta_j$ ($j\ge 3$, $d=10$) drawn once from $\Ucal[-0.5, 0.5]^2$. The marginals correspond to signed Gamma distributions with scale parameters depending, on average, linearly on $Z$.
   
   \textbf{Latent correlations.} For $d=2$, we generate the random correlation $\rho$ of $S$ in two ways: with a  random noise $\varepsilon\sim \Ncal(0, 0.1^2)$ independent of $Z$, we consider $\rho = 0.3 Z^{(1)} + \varepsilon$ (linear) and $\rho = \tanh(2 Z^{(2)} + \varepsilon)$ (nonlinear). For $d=10$, to ensure positive definiteness, we use a matrix-exponential map and the projection $P_{\Ecal_d}$ (Lemma~\ref{lemma:BW-projection}):
    $ S = P_{\Ecal_d}\left( \exp\left(Z^{(1)} M_1 + Z^{(2)} M_2 + E \right) \right), $
    where $M_1, M_2$ are symmetric matrices with zero diagonals (generated once with off-diagonal entries from $\Ucal[-0.5, 0.5]$), behaving as linear coefficient matrices. The noise $E \in \Scal_d$ has zero diagonals and off-diagonal entries are i.i.d. $\Ncal(0, 0.1^2)$. The exponent has components largely falling within $(-1, 1)$, making $S$ approximately linear in $Z$.

As a result, the predictor vector $Z$ influences both marginals and latent correlation matrices of $\omega$. Finally, from each $\omega_i\sim \omega$, we generate empirical distributions $\hat\omega_i$ of sample size $N \in \{100, 1000\}$. 
For each $n$, $N$, and $d$, we consider $n_{\text{rep}}=100$ Monte Carlo simulations.

We evaluate the performance of the regression fit for each method using the mean squared prediction error (MSPE) on an independent test set $\{(Z_l^{\text{te}}, \omega_l^{\text{te}}=\Lambda(\bomega_l^{\text{te}}, S_l^{\text{te}}))\}_{l=1}^{n_{\text{te}}}$ of size $n_{\text{te}}=500$, generated i.i.d. from $(Z, \omega)$. For each $j$th marginal $\omega_{lj}^{\text{te}}$ of $\omega_{l}^{\text{te}}$, we let $q_{lj}^{\text{te}}$ denote the corresponding true quantile. We let $\hat q_{F, j}$ denote the fitted quantile corresponding to the marginal fit $\hat \omega_{F, j}$. 
To separately assess the model fit on the marginal components and the latent correlation component, as in Section~\ref{sec:compo-wise-r2}, we consider decoupled errors:
$$ \text{MSPE}_{\text{marg}} = \frac{1}{d} \sum_{j=1}^d\frac{1}{n_{\text{te}}} \sum_{l=1}^{n_{\text{te}}}  \| \hat q_{F, j}(Z_l^{\text{te}}) - q_{l,j}^{\text{te}} \|_{L^2}^2 \quad\! \text{and}\! \quad \text{MSPE}_{\text{corr}} = \frac{1}{n_{\text{te}}} \sum_{l=1}^{n_{\text{te}}} \Bcal^2(\hat  S_F(Z_l^{\text{te}}), S_l^{\text{te}}), $$
where marginal errors are averaged across $d$ components. 
The $L^2$ norms are evaluated on the equispaced grid of $M=200$ points in $(0,1)$. For Marginal-FR, we compute $\text{MSPE}_{\text{corr}}$ by setting $\hat  S_F(Z_l^{\text{te}}) = I_d$. For Gaussian-FR, the marginal quantiles $\hat q_{F, j}(Z_l^{\text{te}})$ and the latent correlation $\hat  S_F(Z_l^{\text{te}})$ are derived from the fitted Gaussian distribution.

\begin{figure}[t]
    \centering
    \spacingset{1}
    \includegraphics[width=0.98\linewidth]{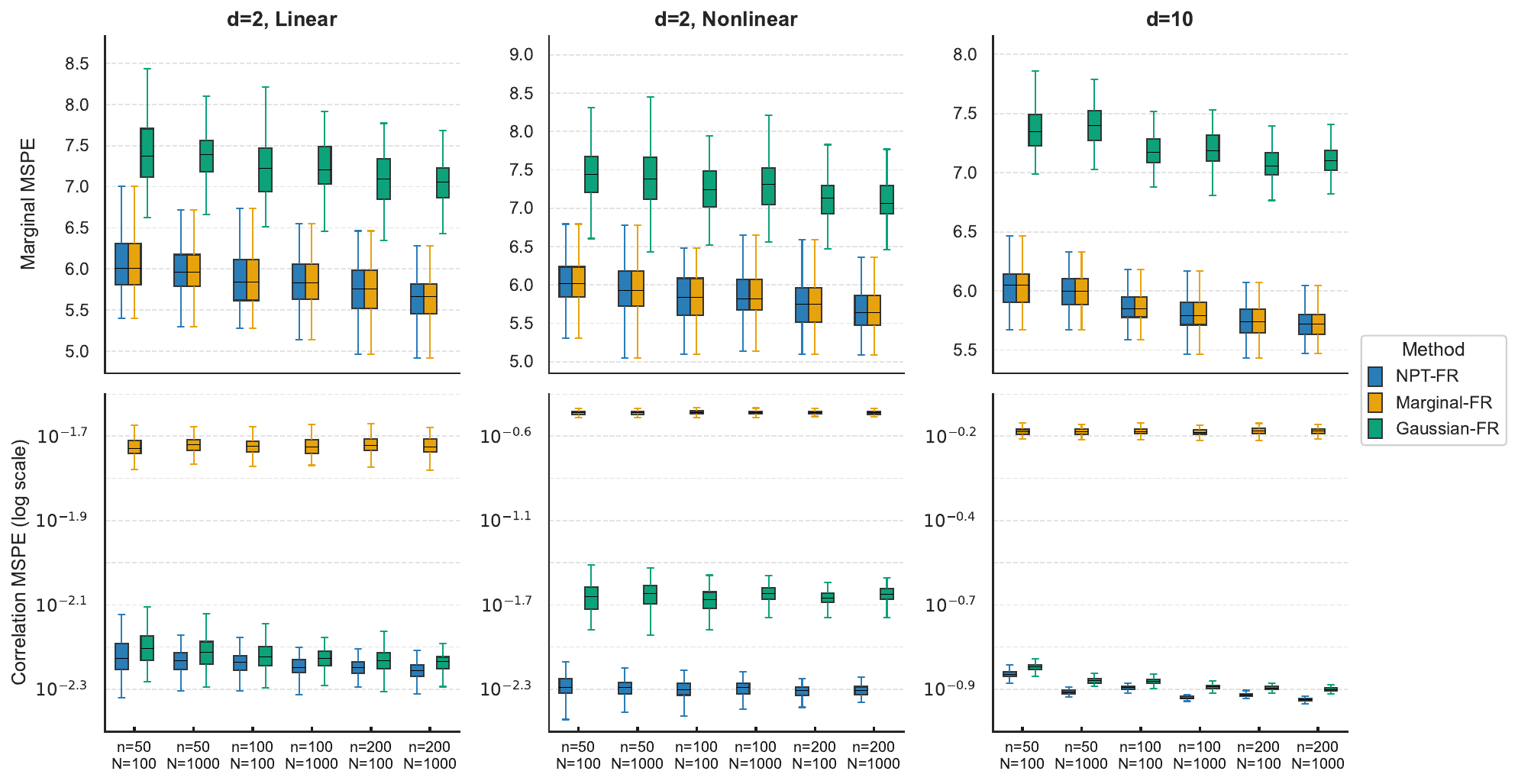}
    \caption{Out-of-sample mean squared prediction errors (MSPEs) on $n^{\text{te}}=500$ test samples over $n_{\text{rep}}=100$ replicates. Columns correspond to $d=2$ with linear correlation, $d=2$ with nonlinear correlation, and $d=10$. Within each column, boxplots report $\text{MSPE}_{\text{marg}}$ (top) and $\text{MSPE}_{\text{corr}}$ (bottom, log scale) across the $(n, N)$ settings shown on the $x$-axis.}
    \label{fig:simulation-boxplots}
\end{figure}

Figure~\ref{fig:simulation-boxplots} reports the out-of-sample errors for the marginal and correlation components. As expected, Marginal-FR achieves identical $\text{MSPE}_{\text{marg}}$ to NPT-FR but exhibits substantially larger $\text{MSPE}_{\text{corr}}$, since it ignores dependence structure changes induced by $Z$. Gaussian-FR yields large marginal errors because the skewed Gamma marginals violate the Gaussian assumption; its correlation fit is comparable to NPT-FR when correlations are generated linearly ($d=2$) or nearly linearly ($d=10$) in $Z$, but degrades under the nonlinear correlation structure. Overall, NPT-FR provides the best joint performance, with both error metrics decreasing as $n$ and $N$ increase; increasing $N$ shows relatively modest gains, especially for $\text{MSPE}_{\text{corr}}$ in the bivariate case $d=2$, reflecting a low sample complexity of estimating a single latent correlation. In Section~\ref{sec:supp-auxiliary-simulation} of the supplementary material, we also compare MSPE using the Wasserstein distance, where NPT-FR consistently outperforms the competitors.

\begin{figure}[t]
    \centering
    \spacingset{1}
    \includegraphics[width=0.9\linewidth]{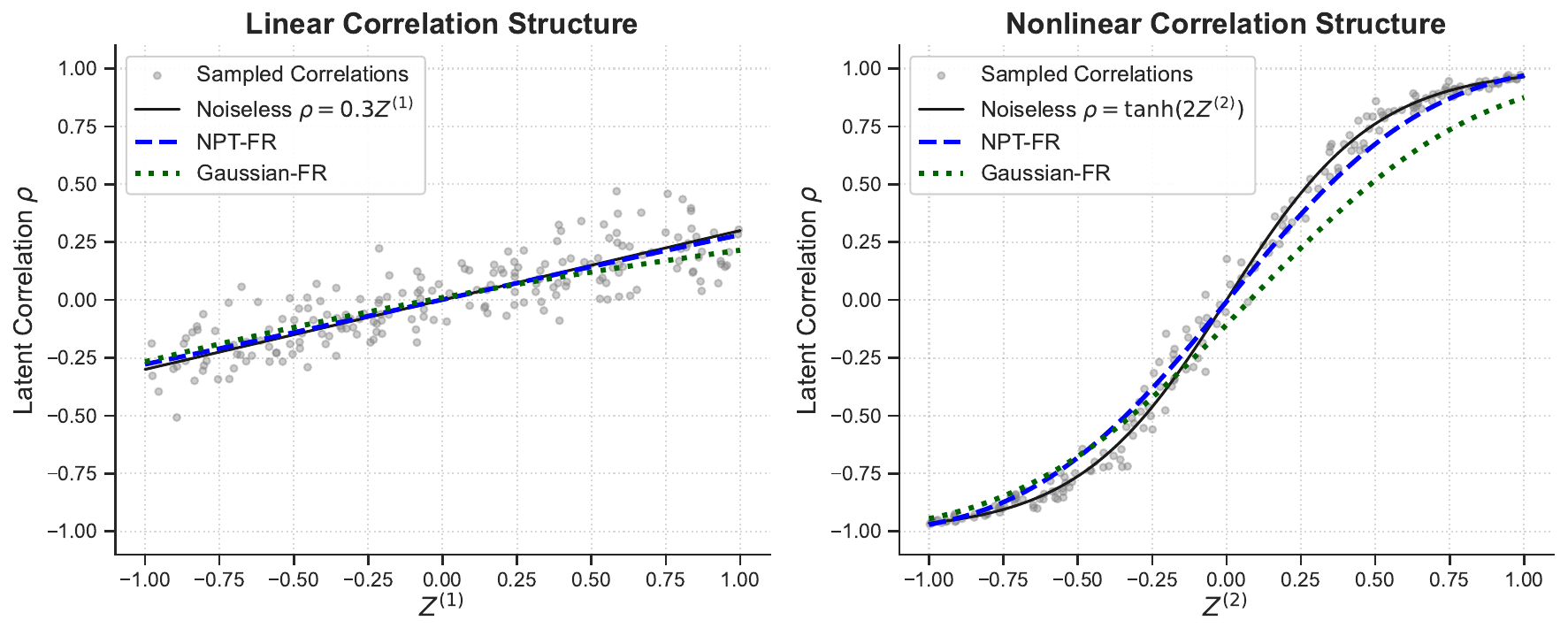}
    \caption{Bivariate correlation regression in one Monte Carlo replicate with $(n, N) = (200, 1000)$. Gray points are the generated correlations plotted against the associated predictor component, and solid black curves show the noise-free generating functions $0.3Z^{(1)}$ (left) and $\tanh(2Z^{(2)})$ (right). The fitted curves from NPT-FR (dashed) and Gaussian-FR (dotted) are overlaid.}
    \label{fig:correlation-simulation}
\end{figure}

Figure~\ref{fig:correlation-simulation} further illustrates the bivariate correlation fit in one replicate with $(n, N) = (200, 1000)$ and provides a closer look at $\text{MSPE}_{\text{corr}}$ in Figure~\ref{fig:simulation-boxplots}. In the linear setting, both NPT-FR and Gaussian-FR track the noise-free baseline $\rho=0.3Z^{(1)}$, with Gaussian-FR exhibiting a mild bias toward zero. In the nonlinear setting, NPT-FR follows the curved trend induced by $\rho=\tanh(2Z^{(2)})$ more closely, whereas Gaussian-FR produces an attenuated fit, consistent with the larger $\text{MSPE}_{\text{corr}}$. Furthermore, while global Fr\'echet regression generalizes linear regression, these curved regression fits of NPT-FR highlight the intrinsic curvature of the BW metric; this is further illustrated in Section~\ref{sec:supp-one-step-bivariate} of the supplementary material.

\section{Application to Continuous Glucose Monitoring Data}\label{sec:real-data}

We analyze continuous glucose monitoring (CGM) data from the AI-READI study \citep{ai-readiconsortiumAIREADIRethinkingAI2024, ai-readiconsortiumFlagshipDatasetType2024} to investigate how CGM-derived glycemic distributions relate to clinical biomarkers in individuals across the diabetes spectrum (no diabetes, prediabetes, and type 2 diabetes).
The study collected blinded Dexcom G6 CGM recordings of interstitial glucose levels (every 5 minutes) within the $[40, 400]$~mg/dL measurement limit for 10 days per participant. To ensure data quality, we excluded individuals with $>$5\% of readings at 400~mg/dL, or $<4.9$ days of data (70\% of one week). Values $\ge 400$~mg/dL were treated as missing, and missing values within 30-minute gaps were linearly interpolated; affecting only 0.09\% of all readings.
The final dataset comprises $n = 968$ individuals.

Recent studies analyzed CGM trajectories via univariate empirical distributions of glucose levels, showing improved performance over traditional scalar summaries \citep{matabuenaGlucodensitiesNewRepresentation2021,coulterFastVariableSelection2025}. However, these representations capture only \emph{global} information (overall glucose level and variability) and discard temporal glucose dynamics, failing to distinguish smooth variations from short-term fluctuations. To capture \textit{local glycemic variability}, we consider sliding 2-hour windows with 75\% overlap. On each valid window ($\ge75\%$ non-missing readings), we compute three features: mean glucose level (Mean), coefficient of variation (CV), and mean absolute 5-minute difference (MAD), which respectively quantify central tendency, relative variability, and short-term fluctuations.
For each individual, the triplets $(\text{Mean}, \text{CV}, \text{MAD})$ from all windows (468 triplets on average) define an empirical trivariate distribution. While \citet{matabuenaGlucodensityFunctionalProfiles2025} also considered multivariate CGM distributions, our approach differs in distributional representation and explicitly accounts for the dependence between marginal components.

We investigate the association of multivariate CGM distributions with two biomarker types: (a) HbA1c, a measure of average glycemia over the preceding 2--3 months (measured before CGM recordings); and (b) lipid profiles---HDL cholesterol (HDL-C),  total cholesterol (Total-C), and triglycerides (TG)---which reflect longer-term dietary and metabolic exposures and are established cardiovascular risk markers \citep{wangAssociationBloodGlucose2022}. TG is log-transformed due to skewness, and we omit LDL cholesterol since it is deterministically derived from HDL-C, Total-C, and TG.
We apply our nonparanormal Fr\'echet regression method to assess:
(i) to what extent HbA1c alone captures the global and local temporal structure of CGM-derived distributions, and
(ii) how lipid profiles are associated with these multivariate glycemic patterns beyond HbA1c.

\begin{figure}%
    \spacingset{1}
    \centering
    \begin{subfigure}[b]{0.98\linewidth}
        \hspace{1mm}
        \includegraphics[width=\linewidth]{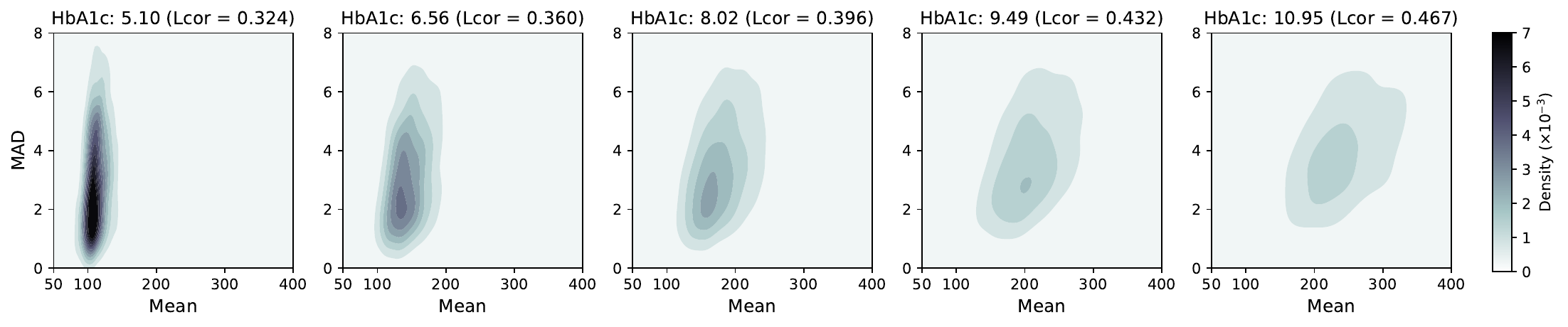}
        \label{fig:ai_readi_a1c_distribution}
    \end{subfigure}
    \vskip -3mm
    \begin{subfigure}[b]{0.7\linewidth}
        \centering
        \includegraphics[width=\linewidth]{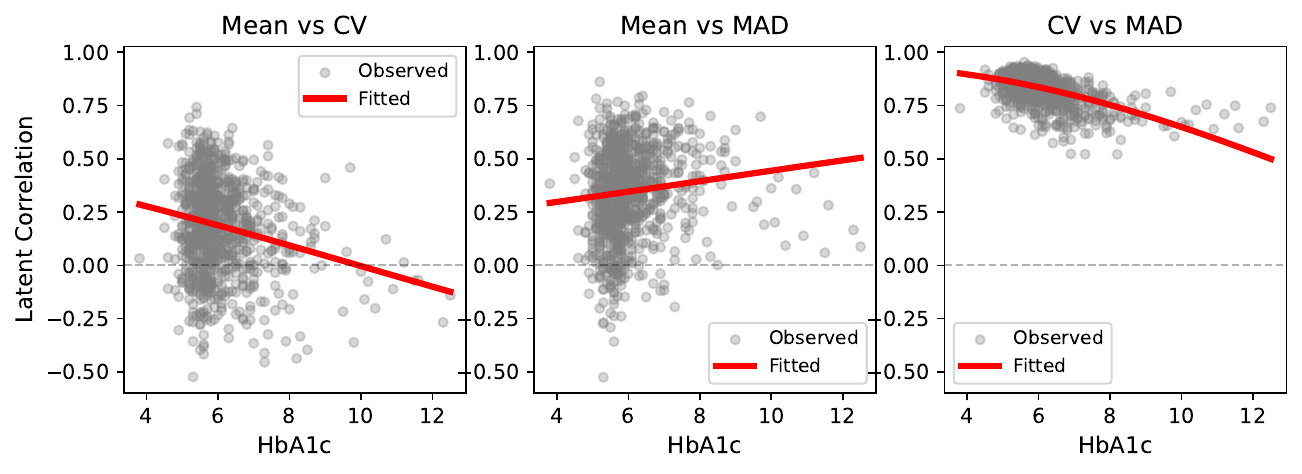}
        \label{fig:ai_readi_a1c_latentcor}
    \end{subfigure}
    \vskip -5mm
    \caption{Nonparanormal Fr\'echet regression fit of trivariate CGM distributions on HbA1c. Top: Heatmap of the (Mean, MAD) component of fitted trivariate distributions at different HbA1c levels within the interval $[5.10, 10.95]$; Lcor in the title of each panel denotes the fitted latent correlation between Mean and MAD. Bottom: Scatterplot of observed latent correlations (points) and the nonparanormal regression fit (red line) as functions of HbA1c.}
    \label{fig:ai_readi}
\end{figure}

We first illustrate the nonparanormal regression fit using only HbA1c as a predictor. Top panels of Figure~\ref{fig:ai_readi} visualize the fitted bivariate components (Mean, MAD) across HbA1c; the fitted distributions are skewed, indicating the benefits of the proposed nonparanormal approach relative to the Gaussian model. As clinically expected, the fitted glucose Mean distribution shifts upward as HbA1c increases. The fitted glucose MAD distribution also shifts upward but also exhibits reduced right-skewness as HbA1c increases. This suggests that, in advanced diabetes, short-term glucose variability tends to increase, whereas the attenuated right tail may reflect a reduced capacity to respond to rapid glucose elevations. The bottom panels of Figure~\ref{fig:ai_readi} illustrate how the regression model captures changes in latent correlations within (Mean, CV, MAD) across HbA1c. The latent correlation between CV and MAD is strong at low HbA1c but attenuates as HbA1c increases, with regression fit capturing this trend. Since high correlation between CV and MAD indicates less oscillatory short-term glucose fluctuations (e.g., largely monotonic changes such as postprandial rises and returns to steady-state level), the attenuation of correlation suggests increasingly heterogeneous glycemic patterns as diabetes progresses. Overall, these results highlight the component-wise interpretability afforded by the proposed nonparanormal Fr\'echet regression, revealing how distinct marginal and dependence components of CGM distributions vary with HbA1c.

\begin{table}[!t]%
\centering
 \spacingset{1}
\caption{Component-wise $R^2$ values from nonparanormal Fr\'echet regression of CGM distributions. ``Lcor'' indicates the latent correlation component. Westfall--Young-adjusted $p$-values are obtained via a permutation test (in parentheses).}
\label{tab:r2_results}
\footnotesize
\begin{tabular}{lllll}
\toprule
Predictor & \multicolumn{1}{c}{Mean} & \multicolumn{1}{c}{CV} & \multicolumn{1}{c}{MAD} & \multicolumn{1}{c}{Lcor} \\
\midrule
HbA1c & 0.557 ($<0.001$) & 0.005 (0.047) & 0.060 ($<0.001$) & 0.072 ($<0.001$) \\
TG & 0.043 ($<0.001$) & 0.014 ($<0.001$) & 0.003 (0.288) & 0.009 (0.004) \\
HDL-C & 0.068 ($<0.001$) & 0.014 (0.002) & 0.010 (0.004) & 0.040 ($<0.001$) \\
Total-C & 0.036 ($<0.001$) & 0.004 (0.138) & 0.005 (0.085) & 0.022 ($<0.001$) \\
All predictors & 0.574 ($<0.001$) & 0.029 ($<0.001$) & 0.078 ($<0.001$) & 0.114 ($<0.001$) \\
\bottomrule
\end{tabular}
\end{table}

Next, we quantitatively evaluate nonparanormal Fr\'echet regression models using the proposed component-wise $R^2$ for each predictor (HbA1c, TG, HDL-C, and Total-C) and for a joint model with all four predictors. Table~\ref{tab:r2_results} shows the resulting $R^2$ values, together with Westfall--Young adjusted $p$-values obtained from 2000 permutations (Section~\ref{sec:compo-wise-r2}). For Mean component, as expected, HbA1c has a strong association ($R^2 = 0.56$), with only little improvement in the joint model ($R^2=0.57$) over HbA1c alone.
However, for glycemic variability components (CV, MAD), joint models show that lipids play a stronger role. In particular, TG and HDL-C show stronger associations with CV than HbA1c.
For the latent correlation matrix among (Mean, CV, MAD), associated $R^2$ values are higher than for the CV and MAD marginals. While HbA1c shows modest associations with CV and MAD individually, it closely tracks their latent correlation trend as seen in the bottom-right panel of Figure~\ref{fig:ai_readi}, indicating that predictors can explain the dependence pattern even when their associations with individual marginals are weak. Overall, our findings that lipid profiles have associations with glycemic variability that are stronger than HbA1c alone align with recent evidence linking global CV to lipid variables in type 1 diabetes \citep{salsa-casteloAssociationGlycemicVariability2024}.

\section{Discussion}\label{sec:discussion}

We develop a new regression approach for multivariate distributional responses based on the decoupling idea within the nonparanormal family, leading to component-wise interpretability, computational efficiency, flexibility beyond Gaussian distributions, theory for empirically sampled responses, and mitigation of the curse of dimensionality.
The Python implementation is available at \texttt{\url{https://github.com/pjywang/NPTFrechet}}.

There are several promising avenues for extending the work.  First, while our current framework relies on permutation-based inference, developing an asymptotic inference with an explicit null distribution---analogous to Wasserstein $F$-tests \citep{petersenWasserstein$F$testsConfidence2021, xuWassersteinFtestsFrechet2025}---is an open problem. Second, our method yields a novel byproduct: a regression approach for correlation matrices under the BW metric, 
which holds potential for brain connectivity applications \citep{chenQuotientGeometryBounded2025}. More broadly, our theory provides a rigorous foundation for using NPT across diverse optimal-transport applications, from distributionally robust optimization \citep{fanConvergenceProjectedBuresWasserstein2024} to generative modeling \citep{chenInferentialWassersteinGenerative2022}. Finally, acknowledging that nonparanormal family may still be restrictive for some applications \citep{shaoFastDistanceComputation2026}, extension to broader copula families is an important direction for future work.

\section*{Acknowledgment}
This research was supported by NIH R01HL172785.

\newpage
\appendix
\bigskip

\renewcommand{\thetable}{S\arabic{table}}  
\renewcommand{\thefigure}{S\arabic{figure}}
\renewcommand{\thelemma}{S.\arabic{lemma}}
\renewcommand{\theproposition}{S.\arabic{proposition}}

\setcounter{figure}{0}
\setcounter{table}{0}
\setcounter{lemma}{0}
\setcounter{proposition}{0}

\begin{center}
\LARGE{
Supplementary material for ``Fr\'echet regression of multivariate distributions with nonparanormal transport''
}
\vskip 1em
\large{Junyoung Park, and Irina Gaynanova}\\[1em]
\normalsize{Department of Biostatistics, University of Michigan, Ann Arbor, Michigan, USA}
\end{center}
\bigskip

\begin{abstract}
    \noindent This appendix provides additional detailed proofs of the results provided in the manuscript. Section~\ref{sec:supp-sec3-proof} contains the proofs of results in Section~\ref{sec:nonpara-npt-metric} of the main paper. Section~\ref{sec:supp-sec4-proof} includes the proofs of results in Section~\ref{sec:frechet-regression} of the main paper.
    Section~\ref{sec:supp-sec5-proof-outline} outlines the proofs of Theorems~\ref{thm:frechet-oracle-rate} and \ref{thm:frechet-real-rate}, which are presented in detail in subsequent Sections~\ref{sec:supp-proof-oracle-rate} and \ref{sec:supp-empirical-rate-proof}. Section~\ref{sec:supp-verify-assump-equicorr} discusses assumption verifications of our asymptotic theory, and Section~\ref{sec:supp-auxiliary-simulation} provides auxiliary simulation results.
\end{abstract}

\section{Proofs of results in Section~\ref{sec:nonpara-npt-metric}}\label{sec:supp-sec3-proof}

This section presents the proofs of the results in Section~\ref{sec:nonpara-npt-metric}. Throughout this section, $\Phi$ denotes the cumulative distribution function of the standard Gaussian distribution.

\subsection{Proof of results in Section~\ref{sec:nonpara-def-extension}}

\paragraph{Proof of Lemma~\ref{lemma:injectivity}.} Let $\Lambda(\bmu, \Sigma)$ and $\Lambda(\bmu', \s')$ be measures in $\Lambda(d)$, where marginal distributions $\bmu = (\mu_j)_{j=1}^d$ and $\bmu' = (\mu_j')_{j=1}^d$ are contained in $\Pcal_*^d$. To show injectivity, assume that $\Lambda(\bmu, \Sigma) = \Lambda(\bmu', \s')$. 

To prove $\bmu = \bmu'$, let $X\in\R^d$ be a random vector following this common distribution. The marginal distributions of $X$ are the same as $\bmu$ and $\bmu'$, so $\mu_j = \mu_j'$ for all $j=1,\dots,d$. Consequently, $\bmu = \bmu'$.

To prove $\Sigma = \s'$, it suffices to show that for any pair $j \neq k$, the correlation $\s_{jk}$ of $\s=(\s_{jk})_{j, k=1}^d$ is uniquely determined by the joint distribution of $(X_j, X_k)$. Using the identity $X_j = q_j(\Phi(Z_j))$ in distribution, where $q_j$ is the quantile function of $\mu_j$ and $Z \sim \mathcal{N}(0, \Sigma)$, we consider the joint cdf of $(X_j, X_k)$:
$$ F_{jk}(x, y) = \P\big(X_j \le x, X_k \le y\big) = \P\Big(q_j(\Phi(Z_j)) \le x,\, q_k(\Phi(Z_k)) \le y\Big). $$
Using the property $q_j(p) \le x \iff p \le F_j(x)$, where $F_j$ is the cdf of $\mu_j$, we have
$$ F_{jk}(x, y) = \P\Big(\Phi(Z_j) \le F_j(x),\, \Phi(Z_k) \le F_k(y)\Big) = \P\Big(Z_j \le \Phi^{-1}(F_j(x)),\, Z_k \le \Phi^{-1}(F_k(y))\Big). $$
Since $\mu_j, \mu_k \in \Pcal_*$, they are not point masses, so there exist $x, y$ such that $F_j(x), F_k(y) \in (0, 1)$, making $a:=\Phi^{-1}(F_j(x))$ and $b:=\Phi^{-1}(F_k(y))$ \textit{finite} real numbers. In this case,
$$F_{jk}(x, y) = \Phi_{\s_{jk}}(a, b),$$
where $\Phi_\rho$ is the bivariate normal cdf with unit variances and correlation $\rho$. The following equality then shows that $F_{jk}(x, y)$ is strictly increasing with the correlation component $\s_{jk}$ in $(-1, 1)$:
$$\frac{\partial}{\partial \rho} \Phi_\rho(a, b) = \phi_\rho(a, b) > 0,$$ 
where $\phi_\rho$ is the corresponding density function of $\Phi_\rho$; see, e.g., Section 2.1 of \citet{genzComputationMultivariateNormal2009}. Furthermore, since $\rho \mapsto \Phi_\rho(a, b)$ is continuous on the closed interval $[-1, 1]$, the strict monotonicity extends to $[-1, 1]$. %
Therefore, the joint cdf value $F_{jk}(x, y)$ uniquely determines the correlation $\Sigma_{jk}$, which completes the proof.
\qed

\subsection{Proof of results in Section~\ref{sec:npt-properties}}\label{sec:supp-proof-npt-properties}

In the following subsections, we prove, in order, the main results of Section~\ref{sec:npt-properties}: Proposition~\ref{prop:geometric-properties}, Proposition~\ref{prop:same-copula}, and Theorem~\ref{thm:topological-equivalence}. Then, in the last subsection~\ref{sec:supp-gaussian-sobolev-verify}, we verify that a large class of distributions satisfies the finiteness of the Gaussian-Sobolev seminorm introduced in Theorem~\ref{thm:topological-equivalence}.

\subsubsection{Proof of Proposition~\ref{prop:geometric-properties}}

Translation or scaling of the random vectors $X, Y$ by $x = (x_1, \ldots, x_d)^\top \in \mathbb{R}^d$ or $a > 0$ does not affect their latent correlation matrices $\s$ and $Q$. Thus,
\begin{align*}
    d_{NPT}^2(X+x, Y + x) &= \sum_{j=1}^d d_W^2(X_j + x_j, Y_j + x_j) + \Bcal^2(\s, Q), \\ 
    d_{NPT}^2(X+x, Y) &= \sum_{j=1}^d d_W^2(X_j + x_j, Y_j) + \Bcal^2(\s, Q), \\
    d_{NPT}^2 (aX, aY) &= \sum_{j=1}^d d_W^2(aX_j ,aY_j) + \Bcal^2(\s, Q).
\end{align*}
The equalities (i), (ii), and (iii) then follow from the translation and scaling properties of the Wasserstein distance; see, e.g., Section~2 of \citet{panaretosStatisticalAspectsWasserstein2019}.
\qed

\subsubsection{Proof of Proposition~\ref{prop:same-copula}}\label{sec:supp-proof-same-copula}

We prove the equality $d_W^2(\mu, \nu)=\sum_{j=1}^d d_W^2(\mu_j, \nu_j)$ by establishing the two-sided inequalities. 

To upper bound $d_W^2(\mu, \nu)$, we construct an explicit coupling between $\mu$ and $\nu$. Let $Z$ be a random vector following the shared latent Gaussian distribution $\Ncal(0_d, \s)$, and consider two random vectors
$$ X = T_{\bgamma}^{\bmu} (Z) =  (T_{\gamma}^{\mu_1}(Z_1), \ldots, T_{\gamma}^{\mu_d}(Z_d))^\top, \quad Y = T_{\bgamma}^{\bnu} (Z) = (T_{\gamma}^{\nu_1}(Z_1), \ldots, T_{\gamma}^{\nu_d}(Z_d))^\top $$
so that $X\sim \mu$ and $Y \sim \nu$. The joint distribution of $(X, Y)$, denoted by $\eta$, is a coupling between $\mu$ and $\nu$, under which we get
\begin{align*}
    \E_\eta[\|X - Y\|^2] &= \sum_{j=1}^d \E_{Z_j}[(T_{\gamma}^{\mu_j}(Z_j) - T_{\gamma}^{\nu_j}(Z_j))^2].
\end{align*}
Recall that $Z_j \sim \mathcal{N}(0, 1)$ and $T_{\gamma}^{\mu_j} = q_{\mu_j} \circ \Phi$, where $q_{\mu_j}$ is the quantile function of $\mu_j$. Letting $U_j = \Phi(Z_j) \sim \Ucal(0, 1)$, each summand on the right-hand side then becomes:
$$ \E_{U_j}[(q_{\mu_j}(U_j) - q_{\nu_j}(U_j))^2] = \int_0^1 (q_{\mu_j}(u) - q_{\nu_j}(u))^2 du = d_W^2(\mu_j, \nu_j), $$
where the last equality is the closed-form expression for the univariate Wasserstein distance, which holds regardless of the continuity of the distributions \citep[Theorem~2.18]{villaniTopicsOptimalTransportation2003}.
Thus,
$$ d_W^2(\mu, \nu) \le \E_\eta[\|X - Y\|^2] = \sum_{j=1}^d d_W^2(\mu_j, \nu_j). $$

Conversely, using the optimal coupling $\pi \in \Gamma(\mu, \nu)$ that attains $d_W^2(\mu, \nu) = \E_\pi[\|X - Y\|^2]$, we obtain a lower bound:
\begin{equation}\label{eq:marginal-inequality}
    d_W^2(\mu, \nu) = \E_\pi[\|X - Y\|^2] = \sum_{j=1}^d \E_\pi[(X_j - Y_j)^2] \ge \sum_{j=1}^d d_W^2(\mu_j, \nu_j),
\end{equation}
concluding the equality
$$ d_W^2(\mu, \nu) = \sum_{j=1}^d d_W^2(\mu_j, \nu_j). $$
Since $\Bcal^2(\Sigma, \Sigma) = 0$, we have $d_{NPT}^2(\mu, \nu) = \sum_{j=1}^d d_W^2(\mu_j, \nu_j)$ as well.
\qed

\subsubsection{Proof of Theorem~\ref{thm:topological-equivalence}}\label{sec:supp-proof-topological-equivalence}

To prove bound~(i), we {establish a technical lemma building on the Hermite polynomial basis of the $L^2$ space under the standard Gaussian measure $\gamma$; here, we let $L^2(\gamma)$ denote the space of measurable maps $\psi:\R\to\R$ such that $\int_\R\psi^2d\gamma <\infty$.} The lemma and its proof are introduced below. Proofs of parts (i) and (ii) follow subsequently.

\begin{lemma}\label{lemma:gaussian-inequality}
    Suppose that $\xi, \zeta \sim N(0,1)$ are jointly Gaussian (possibly dependent). Then, for any function $h:\R\to\R$ with $h\in L^2(\gamma)$, 
    $$ \E[(h(\xi) - h(\zeta))^2] \le |h|_{H^1(\gamma)}^2 \E[(\xi - \zeta)^2]. $$
\end{lemma}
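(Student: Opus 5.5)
Write $\rho = \E[\xi\zeta]\in[-1,1]$, so that $\E[(\xi-\zeta)^2] = 2(1-\rho)$. The plan is to expand $h$ in the normalized Hermite basis of $L^2(\gamma)$ and use Mehler's formula for the Gaussian pair. Let $h = \sum_{k\ge 0} c_k H_k$, where $H_k$ are the orthonormal (probabilists') Hermite polynomials and $\sum_k c_k^2 = \|h\|_{L^2(\gamma)}^2<\infty$.

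The key fact for a standard Gaussian pair with correlation $\rho$ is
\[
\E[H_k(\xi)H_l(\zeta)] = \delta_{kl}\rho^k .
\]
This follows from the generating-function identity $\E[e^{s\xi - s^2/2}e^{t\zeta - t^2/2}] = e^{st\rho}$, and it extends to the endpoint cases $\rho=\pm1$, where $\zeta=\pm\xi$ almost surely. Since $h(\xi)$ and $h(\zeta)$ each have norm $\|h\|_{L^2(\gamma)}$ in $L^2(\P)$, the partial sums converge in $L^2(\P)$ and the expectation can be computed termwise:
\[
\E[(h(\xi)-h(\zeta))^2] = 2\sum_{k\ge0} c_k^2 - 2\sum_{k\ge 0}c_k^2\rho^k = 2\sum_{k\ge1} c_k^2(1-\rho^k).
\]

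Next, the elementary inequality $1-\rho^k \le k(1-\rho)$ holds for $\rho\in[-1,1]$ and $k\ge1$. For $\rho\in[0,1]$ it follows from $1-\rho^k = (1-\rho)(1+\rho+\dots+\rho^{k-1})$. For $\rho\in[-1,0)$ we have $1-\rho^k\le 2$ and $k(1-\rho)\ge k\ge 2$ when $k\ge2$, while for $k=1$ it is an equality. Combining this with the previous display gives
\[
\E[(h(\xi)-h(\zeta))^2]\le 2(1-\rho)\sum_{k\ge1}k c_k^2 .
\]

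The remaining step, which I expect to be the main technical point, is to identify $\sum_{k\ge1} k c_k^2 = |h|_{H^1(\gamma)}^2$. The claim is trivial when $|h|_{H^1(\gamma)}=\infty$, so assume it is finite. With the normalization $H_k = He_k/\sqrt{k!}$ we have $H_k' = \sqrt{k}\,H_{k-1}$. Integrating by parts against the Gaussian density $\phi$ (using $(H_{k-1}\phi)' = -H_{k-1}'\phi + \dots$, or equivalently the relation $H_k \phi = -(H_{k-1}\phi)'/\sqrt{k}$) gives
\[
\langle h', H_{k-1}\rangle_{L^2(\gamma)} = \sqrt{k}\, c_k .
\]
Parseval for $h'\in L^2(\gamma)$ then yields $\sum_{k\ge 1} k c_k^2 = \|h'\|_{L^2(\gamma)}^2$.

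The boundary terms in the integration by parts need care, since $h$ is only assumed differentiable with $h'\in L^2(\gamma)$, and $h$ could be merely a.e.\ differentiable in the transport-map application. I would handle this as follows. First, note that $h$ is locally absolutely continuous; in the intended application $h = q_\mu\circ\Phi$ is monotone, and I would interpret the seminorm through its absolutely continuous part or assume local absolute continuity. Then integrate on $[-R,R]$ and let $R\to\infty$ along a subsequence on which $h(\pm R)H_{k-1}(\pm R)\phi(\pm R)\to0$. Such a subsequence exists because $h\,H_{k-1}\phi\in L^1(\mathrm{d}x)$ by Cauchy--Schwarz, given $h\in L^2(\gamma)$. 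This completes the proof with the constant $|h|_{H^1(\gamma)}^2$.
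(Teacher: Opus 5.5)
Your proof is correct and follows the paper's argument essentially step for step: Hermite expansion of $h$, the identity $\E[H_k(\xi)H_l(\zeta)]=\delta_{kl}\rho^k$ (your orthonormal $H_k$), the bound $1-\rho^k\le k(1-\rho)$, and $\sum_{k\ge1}kc_k^2=|h|_{H^1(\gamma)}^2$; your integration-by-parts derivation of $\langle h',H_{k-1}\rangle_{L^2(\gamma)}=\sqrt{k}\,c_k$ in fact supplies the justification for the termwise differentiation $h'=\sum_k c_k\sqrt{k}\,H_{k-1}$ that the paper asserts without proof. One caution: drop the option of reading the seminorm through the absolutely continuous part of a monotone $h$, because for a step function that reading gives seminorm zero and the inequality becomes false; you need no such option, since under the paper's convention ($h$ differentiable everywhere, otherwise $|h|_{H^1(\gamma)}=\infty$) local absolute continuity is automatic, as an everywhere differentiable function with $h'\in L^1_{\mathrm{loc}}$ satisfies the fundamental theorem of calculus.
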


\begin{proof}
Without loss of generality, we assume that $h$ is differentiable and $h'\in L^2(\gamma)$; otherwise, $|h|_{H^1(\gamma)}=\infty$ by definition.

We utilize the probabilist's Hermite polynomials
$$ H_k(x) := (-1)^k e^{x^2/2} \frac{d^k}{dx^k}e^{-x^2/2}, \qquad k=1, 2, \ldots,$$
which are orthogonal in the Hilbert space $L^2(\gamma)$ under the Gaussian measure $d\gamma = \frac{1}{\sqrt{2\pi}}e^{-x^2/2}dx $. In particular, we rely on the following properties:
\begin{enumerate}
    \item $$\langle H_k, H_l\rangle_{L^2(\gamma)}:= \int_\R H_k(x)H_l(x) d\gamma(x) = \begin{cases}
        k! & \text{if }\ k =l,\\
        0 & \text{otherwise}.
    \end{cases}$$
    \item \begin{equation}\label{eq:Hermite-differential}
        H_k'(x) = k H_{k-1}(x).
    \end{equation}
\end{enumerate}

Let $h_k = H_k/\sqrt{k!}$, so that $\{h_k\}_{k=0}^\infty$ forms an orthonormal basis for $L^2(\gamma)$. Consider the basis expansion $h = \sum_{k=0}^\infty c_k h_k$, satisfying  $\|h\|_{L^2(\gamma)}^2 = \E[h(\xi)^2] = \E[h(\zeta)^2] = \sum_{k=0}^\infty c_k^2 <\infty$. By Proposition 2.2.1 of \citet{nourdinNormalApproximationsMalliavin2012}, we have
$$\E[h_k(\xi) h_l(\zeta)] = \begin{cases}
    \rho^k & \text{if }\ k= l,\\
    0 & \text{otherwise,}
    \end{cases}$$
where $\rho = \E[\xi\zeta]$ is the correlation between $\xi$ and $\zeta$. This implies
\begin{align*}
    \E[h(\xi) h(\zeta)] = \E\left[\left(\sum_{k=0}^\infty c_k h_k (\xi)\right)\left(\sum_{l=0}^\infty c_l h_l (\zeta)\right)\right]= \sum_{k=0}^\infty c_k^2 \rho^k,
\end{align*}
leading to
\begin{align*}
    \E[(h(\xi) - h(\zeta))^2] &= 2 \sum_{k=1}^\infty c_k^2(1-\rho^k)\\
    &\le 2\sum_{k=1}^\infty c_k^2 \,k(1-\rho).
\end{align*}
The latter inequality follows from $1-\rho^k = (1-\rho)(1+\cdots +\rho^{k-1}) \le k (1-\rho)$, which holds for $k\ge 1$ since $|\rho| \le 1$.

Furthermore, since $h_k' = \sqrt{k}h_{k-1}$ by \eqref{eq:Hermite-differential}, the $L^2(\gamma)$-norm of the derivative $h' = \sum_{k=1}^\infty c_k \sqrt{k}h_{k-1}$ is given by
$$|h|_{H^1(\gamma)}^2 =  \E[h'(\xi)^2] = \sum_{k=1}^\infty c_k^2 \, k. $$
Combining these results yields
\begin{align*}
    \E[(h(\xi) - h(\zeta))^2] &\le |h|_{H^1(\gamma)}^2 2(1-\rho)  \\
    &= |h|_{H^1(\gamma)}^2 \E[(\xi-\zeta)^2],
\end{align*}
completing the proof. 
\end{proof}

\paragraph{Proof of (i).}
Let $X\sim \mu = \Lambda(\bmu, \s)$ and $Y\sim \nu = \Lambda(\bnu, Q)$ be random vectors, and write $h = (h_1, \ldots, h_d)^\top = T_{\bgamma}^{\bmu} $ and $g = (g_1, \ldots, g_d)^\top = T_{\bgamma}^{\bnu}$. Let $U\sim \Ncal(0_d, \s)$ and $V\sim \Ncal(0_d, Q)$ the latent Gaussian vectors such that $X = h(U)$ and $Y = g(V)$.

Applying the triangle inequality to the Wasserstein distance yields
$$d_W(X, Y) \le d_W(X, g(U)) + d_W(g(U), Y),$$
which implies
\begin{equation*}
    d_W^2(X, Y) \le 2d_W^2(X, g(U)) + 2d_W^2(g(U), Y).
\end{equation*}
Since $X$ and $g(U)$ share the latent Gaussian distribution $N(0, \Sigma_1)$, Proposition~\ref{prop:same-copula} implies that
\begin{align*}
    d_W^2(X, g(U)) = \sum_{j=1}^d d_W^2(X_j, g_j(U_j)) = \sum_{j=1}^d d_W^2(X_j, Y_j),
\end{align*}
which corresponds to the marginal component of $d_{NPT}$. Thus, 
\begin{equation}\label{eq:triangle-bound}
    d_W^2(X, Y) \le 2\sum_{j=1}^d d_W^2(X_j, Y_j) + 2d_W^2(g(U), Y).
\end{equation}

Using the equality $Y = g(V)$, it remains to bound $d_W^2(g(U), g(V))$ in terms of the latent Wasserstein distance $d_W^2(U, V) = \Bcal^2(\s, Q)$. Let $\eta$ be the optimal coupling between the Gaussian laws of $U$ and $V$, satisfying $d_W^2(U, V) = \E_\eta[\|U - V\|^2]$.
The coupling $\eta$ is also multivariate Gaussian \citep[Section~1.6.3]{panaretosInvitationStatisticsWasserstein2020}.
Let $g\times g$ be a function mapping $(u, v)\mapsto(g(u),g(v))$, which yields the coupling $(g\times g)_\#\eta$ between $g(U)$ and $g(V)$. Then, by the definition of the Wasserstein distance \eqref{eq:Wasserstein},
\begin{align*}
    d_W^2(g(U), g(V)) &\le \int_{\R^d\times \R^d}\|x - y\|^2 d((g\times g)_\#\eta)(x, y) \\
    &= \int_{\R^d\times \R^d} \|g(u) - g(v)\|^2 d\eta(u,v) \\ 
    &= \int_{\R^d\times \R^d} \sum_{j=1}^d (g_j(u_j) - g_j(v_j))^2 d\eta(u, v) \\ 
    &= \sum_{j=1}^d \int_{\R\times \R}(g_j(u_j) - g_j(v_j))^2 d\eta_j(u_j, v_j).
\end{align*}
Here, $\eta_j$ denotes the bivariate Gaussian law between univariate components $U_j$ and $V_j$, and each summand equals $\E_{\eta_j}[(g_j(U_j) - g_j(V_j))^2]$. %
Since $Y= g(V)$ has a finite second moment, we have $\int g_j^2 (v) d\gamma(v) = \E[Y_j^2] < \infty$; i.e., $g_j\in L^2(\gamma)$ for all $j$. We can thus apply Lemma~\ref{lemma:gaussian-inequality} to each summand integral, which yields
\begin{align*}
    d_W^2(g(U), g(V)) &\le \sum_{j=1}^d |g_j|_{H^1(\gamma)}^2 \int_{\R\times \R}(u_j -v_j)^2 d\eta_j(u_j, v_j) \\
    & \le |g|^2_{H^1(\gamma)} \sum_{j=1}^d \int_{\R\times \R} (u_j - v_j)^2 d\eta_j(u_j, v_j)\\
    &= |g|^2_{H^1(\gamma)} \int_{\R^d\times\R^d} \|u- v\|^2 d\eta(u, v) \\
    &=|g|^2_{H^1(\gamma)} d_W^2(U, V),
\end{align*}
where the last equality holds since $\eta$ is the optimal coupling between $U$ and $V$. 
Plugging this upper bound into \eqref{eq:triangle-bound} and changing the notations to measures, we have
\begin{align*}
    d_W^2(\mu, \nu) \le 2\sum_{j=1}^d d_W^2(\mu_j, \nu_j) + 2|T_{\bgamma}^{\bnu}|^2_{H^1(\gamma)}\Bcal^2(\s, Q).
\end{align*}
A symmetric argument provides a similar bound
\begin{align*}
    d_W^2(\mu, \nu) \le 2\sum_{j=1}^d d_W^2(\mu_j, \nu_j) + 2|T_{\bgamma}^{\bmu}|^2_{H^1(\gamma)}\Bcal^2(\s, Q).
\end{align*}
Therefore,
\begin{align*}
    d_W^2(\mu, \nu) &\le 2\sum_{j=1}^d d_W^2(\mu_j, \nu_j) + 2 (|T_{\bgamma}^{\bmu}|^2_{H^1(\gamma)} \wedge |T_{\bgamma}^{\bnu}|^2_{H^1(\gamma)}) \Bcal^2(\s, Q)\\
    &\le 2(1 \vee  \{|T_{\bgamma}^{\bmu}|^2_{H^1(\gamma)} \wedge |T_{\bgamma}^{\bnu}|^2_{H^1(\gamma)}) \}\,d_{NPT}^2(X, Y), 
\end{align*}
completing the proof of the first inequality. \qed

\paragraph{Proof of (ii).}

We continue to use the same notations as in the proof of (i), where the maps $h = T_{\bgamma}^{\bmu}$ and $g = T_{\bgamma}^{\bnu}$ are now invertible due to the continuity of marginals. We begin with the inequality
\begin{equation*}
    \sum_{j=1}^d d_W^2(X_j, Y_j) \le d_W^2(X, Y),
\end{equation*}
which is proved in \eqref{eq:marginal-inequality}, implying that
$$d_{NPT}^2(X, Y) \le d_W^2(X, Y) + \Bcal^2(\s, Q).$$
To bound $\Bcal^2(\s, Q) = d_W^2(U, V)$ in terms of $d_W^2(X, Y) = d_W^2(h(U), g(V))$, we invoke the Lipschitz property of the Wasserstein distance:
$$d_W^2(U, V) \le |h^{-1}|_{\text{Lip}}^2 d_W^2(h(U), h(V)) = |h^{-1}|_{\text{Lip}}^2 d_W^2(X, h(V)),$$
where the inequality is a standard fact in optimal transport theory; see, e.g., \citet[Lemma~4.1]{gangboDifferentialFormsWasserstein2011}. The distance in the right-hand side can be bounded by the triangle inequality:
\begin{align*}
    d_W^2(X, h(V)) &\le 2d_W^2(X, Y) + 2d_W^2(Y, h(V)) \\
    &=2 d_W^2(X, Y) + 2\sum_{j=1}^d d_W^2 (X_j, Y_j)\\
    &\le 4 d_W^2(X, Y),
\end{align*}
where the equality uses Proposition~\ref{prop:same-copula}, and the last inequality uses \eqref{eq:marginal-inequality} again. This yields
$$d_W^2(U, V) \le 4|h^{-1}|^2_{\text{Lip}}d_W^2(X, Y),$$
and by symmetry,
$$d_W^2(U, V) \le 4(|h^{-1}|^2_{\text{Lip}} \wedge |g^{-1}|^2_{\text{Lip}}) d_W^2(X, Y).$$
Therefore,
$$d_{NPT}^2(X, Y) \le \{1 + 4(|h^{-1}|^2_{\text{Lip}} \wedge |g^{-1}|^2_{\text{Lip}})\} d_W^2(X, Y),$$
which completes the proof. \qed

\subsubsection{Distributions satisfying the Gaussian-Sobolev seminorm condition}\label{sec:supp-gaussian-sobolev-verify}

We verify that the Gaussian--Sobolev condition $|T_{\bgamma}^{\bmu}|_{H^1(\gamma)} < \infty$ holds for broad classes of nonparanormal distributions $\mu = \Lambda(\bmu, \s)$. Since the seminorm decomposes across dimensions, it suffices to consider univariate distributions $\mu \in \Pcal$ for which we verify the condition $|T_\gamma^\mu|_{H^1(\gamma)} < \infty$. Let $T = T_\gamma^\mu = q_\mu \circ \Phi$ for brevity.
Recall that the condition requires the finiteness of the integral
$$ |T|_{H^1(\gamma)}^2 = \int_{-\infty}^\infty |T'(x)|^2 \phi(x) dx < \infty, $$
where $\phi(x) = (2\pi)^{-1/2}e^{-x^2/2}$ is the standard normal density. In the following examples, $T'$ is continuous, so the integrability is determined by the growth of $T'$ at the tails.

\paragraph{Log-concave distributions.}
Let $\mu$ be a distribution with a log-concave density $p(x) = e^{-V(x)}$. For such distributions, it is known that the tail probability $1 - F(y)$ is also log-concave \citep[Theorem~3]{bagnoliLogconcaveProbabilityIts2005}, making the function $L(y) = -\ln(1 - F(y))$ convex. The transport map $T$ satisfies $1 - \Phi(x) = e^{-L(T(x))}$. Differentiating this identity yields $\phi(x) = L'(T(x)) e^{-L(T(x))} T'(x)$. Substituting $e^{-L(T(x))} = 1 - \Phi(x)$, we obtain
$$ T'(x) = \frac{\phi(x)}{1 - \Phi(x)} \frac{1}{L'(T(x))}. $$
Since $L$ is convex, there exists $\lambda > 0$ such that $L'(y) \ge \lambda$ for sufficiently large $y$. The standard Gaussian distribution has the Mills ratio approximation $\frac{\phi(x)}{1 - \Phi(x)} \simeq x$, where $f(x)\simeq g(x)$ indicates $f(x)/g(x) \to 1 $ as $x\to\infty$. Thus, $T'(x) = O(x)$ as $x \to \infty$. A symmetric argument applies as $x \to -\infty$ and yields $|T'(x)|^2 = O(x^2)$, which verifies $|T|^2_{H^1(\gamma)} < \infty$.

\paragraph{Log-normal distribution.}
Let $\mu$ be the log-normal distribution with parameters $\mu_0 \in \R$ and $\sigma > 0$. The optimal transport map from the standard normal is explicitly $T(x) = \exp(\mu_0 + \sigma x)$. Its derivative satisfies $T'(x) = \sigma T(x)$. The finite Sobolev norm is thus explicitly given by
$$ |T|_{H^1(\gamma)}^2 = \int_{-\infty}^\infty \sigma^2 e^{2(\mu_0 + \sigma x)} \phi(x) dx = \sigma^2 e^{2(\mu_0 + \sigma^2)} < \infty. $$

\paragraph{Student's $t$-distribution.}
Let $\mu$ be the Student's $t$-distribution with degrees of freedom $\alpha > 0$. The density $p$ and the tail probability satisfy
$$ p(y) \simeq C_\alpha y^{-(\alpha+1)} \quad \text{and} \quad 1 - F(y) \simeq \frac{C_\alpha}{\alpha} y^{-\alpha} \quad \text{as } y \to \infty $$
for some constant $C_\alpha > 0$ depending only on $\alpha$. The transport map $T$ satisfies $1 - F(T(x)) = 1 - \Phi(x)$. Using Mills ratio $1 - \Phi(x) \simeq \phi(x)/x$, we deduce the asymptotic relation
$$ T(x)^{-\alpha} \propto \frac{\phi(x)}{x} \implies T(x) \simeq \left( \frac{x}{\phi(x)} \right)^{1/\alpha}. $$
Using the derivative formula $T'(x) = \phi(x)/p(T(x))$, the integrand in the Sobolev norm behaves as
\begin{align*}
    |T'(x)|^2 \phi(x) &= \frac{\phi(x)^3}{p(T(x))^2} \propto \frac{\phi(x)^3}{(T(x)^{-(\alpha+1)})^2} = \phi(x)^3 T(x)^{2(\alpha+1)} \\
    &\propto \phi(x)^3 \left( \frac{x}{\phi(x)} \right)^{\frac{2(\alpha+1)}{\alpha}} = x^{2 + \frac{2}{\alpha}} \phi(x)^{1 - \frac{2}{\alpha}}.
\end{align*}
Substituting $\phi(x) \propto e^{-x^2/2}$, the integrand is proportional to $x^{2 + 2/\alpha} \exp\left( -\frac{1}{2}\left(1 - \frac{2}{\alpha}\right)x^2 \right)$. For this integral to converge, the coefficient in the exponent must be negative, which is equivalent to $\alpha > 2$.

\subsection{Proof of results in Section~\ref{sec:npt-empirical}}\label{sec:supp-proof-npt-rate}

\paragraph{Proof of Theorem~\ref{thm:npt-convergence-rate}.}
Using the definition of NPT,
$$\E[d_{NPT}^2(\hat\mu, \mu)] \le \max_{1\le j\le d} d\cdot \E[d_W^2(\hat\mu_j, \mu_j)] + \E[\Bcal^2(\hat\s, \s)] \le d\, r_N^2 + \E[\Bcal^2(\hat\s, \s)]. $$
The Jensen's inequality $\E[X]\le \E[X^2]^{1/2}$ and the subadditivity of the square-root imply that
\begin{align}\label{eq:prop3-intermediate-bound}
    \begin{split}
        \E[d_{NPT}(\hat\mu,\mu)] &\le \sqrt{d}\cdot r_N + \sqrt{\E[\Bcal^2(\hat\s, \s)]} \\
        &\le \sqrt{d}\cdot r_N + \frac{1}{\sqrt{\lambda_{\min}(\s)}}\sqrt{\E[\|\hat\s-\s\|^2_F]},
    \end{split}
\end{align}
where the second inequality follows from Lemma~\ref{lemma:lipschitz-holder-bw}-2---the central ingredient enabling us to avoid assuming $\lambda_{\min}(\hat\s) > 0$ for all $N$.

We thus bound the Frobenius norm $\|\hat\s-\s\|^2_F$. 
As derived in the proof of Theorem~4.2 of \citet{liuHighdimensionalSemiparametricGaussian2012}, Hoeffding's inequality implies that the $(j, k)$ components satisfy the sub-Gaussian concentration:
\begin{equation}\label{eq:element-wise-concentration}
    \P(|\hat\s_{jk} - \s_{jk}| > t ) \le 2 \exp (-Nt^2/\pi^2).
\end{equation}
This deduces the element-wise bound in expectation
\begin{align*}
    \E[|\hat\s_{jk} - \s_{jk}|^2] = \int_0^\infty \P(|\hat\s_{jk} - \s_{jk}|^2 > t)\,dt  \le  2 \int_0^\infty\exp (-Nt/\pi^2)\,dt = \frac{2\pi^2}{N},
\end{align*}
which implies
\begin{equation}\label{eq:Frob-upperbound-expectation}
    \E[\|\hat\s - \s\|_F^2]\le
    \frac{2\pi^2 \,d(d-1)}{N}
\end{equation}
since $\diag(\tilde\s - \s) = 0_d$.
Plugging this bound into \eqref{eq:prop3-intermediate-bound}, 
$$\E[d_{NPT}(\hat\mu,\mu)] \le \sqrt{d}\cdot r_N + \sqrt{\frac{2\pi^2\, d(d-1)}{N \,\lambda_{\min}(\s)}},
$$
completing the proof. %
\qed

\paragraph{Proof of Corollary~\ref{cor:wasserstein-estimation}.} By Theorems~\ref{thm:topological-equivalence} and \ref{thm:npt-convergence-rate}, one immediately has
$$\E[d_W(\hat\mu, \mu)] \le \sqrt{2}\big(1 \vee |T_{\bgamma}^{\bmu}|_{H^1(\gamma)}\big)\left( \sqrt{d}\cdot r_N + \sqrt{\frac{2\pi^2\, d(d-1)}{N \,\lambda_{\min}(\s)}}\right), $$
which satisfies $O(r_N)$ as $N\to\infty$ with $d$ fixed. %
\qed

\section{Proofs and computational discussions for Section~\ref{sec:frechet-regression}}\label{sec:supp-sec4-proof}

\subsection{Proof of Lemma~\ref{lemma:BW-projection}}

The squared BW distance $\Bcal^2(Q, \s)$ satisfies
$$\Bcal^2(Q, \s) = \inf_{M^\top M = Q, L^\top L = \s} \|M - L\|_F^2,$$
where the infimum is taken over $M, L\in\R^{d\times d}$ \citep[Theorem~1]{bhatiaBuresWassersteinDistance2019}. Fix $L = \s^{1/2}$ so that $L^\top L = \s$. Since the Frobenius norm is invariant under multiplication by orthogonal matrices, 
$$\Bcal^2(Q, \s) = \inf_{M^\top M = Q} \|M - L\|_F^2. $$
The minimization in the projection map $\Pcal_{\Ecal_d}(\s)=\argmin_{Q \in \Ecal_d} \Bcal^2(Q, \s)$ is then equivalent to
\begin{equation}\label{eq:parametrized-optimization}
    \min_{Q \in \Ecal_d} \Bcal^2(Q, \s) =\min_{M: \diag(M^\top M) = 1_d} \|M - L\|_F^2,
\end{equation}
since the set $\{M\in\R^{d\times d}: \diag(M^\top M) = 1_d\}$ parametrizes all matrices of $\Ecal_d$ via the mapping $M\mapsto M^\top M$. Thus, a minimizer $\hat M\in\R^{d\times d}$ of the right-hand side of \eqref{eq:parametrized-optimization} yields the projection $\hat M^\top \hat M = \Pcal_{\Ecal_d}(\s)$. 
To solve this, note that the constraint $\diag(M^\top M) = 1_d$ is equivalent to $\|m_j\| = 1$ for every $j$th column $m_j$ of $M$. Thus, the last optimization decomposes into column-wise problems:
$$ \argmin_{\|m_j\|=1} \|m_j - \ell_j\|^2, \quad j=1, \ldots, d, $$
where $\ell_j$ is the $j$-th column of $L$. For each $j$, the solution is given by $\hat m_j = \ell_j / \|\ell_j\|$. Since $\|\ell_j\|^2 = (L^\top L)_{jj} = \s_{jj}$, we have $\hat M = [\hat m_1,\ldots, \hat m_d] = L\,D(\s)^{-1/2}$, implying that
$$ P_{\Ecal_d}(\s) = \hat M^\top \hat M = D(\s)^{-1/2} L^\top L \,D(\s)^{-1/2} = D(\s)^{-1/2} \s\, D(\s)^{-1/2}. $$
\qed

\subsection{Exactness of Algorithm~\ref{alg:proj-Riem-GD} in the bivariate and equicorrelation case}\label{sec:supp-proof-one-step-eqcorr}

In Section~\ref{sec:supp-one-step-bivariate}, we prove that one iteration of Algorithm~\ref{alg:proj-Riem-GD} attains the exact correlation regression solution $\hat S_F(z)$ in the bivariate case. In Section~\ref{sec:supp-equicorrelation-extension}, we then extend the argument to the equicorrelation case.

\subsubsection{Bivariate case and its geometric illustration}\label{sec:supp-one-step-bivariate}

\begin{proposition}\label{prop:one-step-bivar}
    Fix $z\in\R^p$, and suppose the $\hat S_i$ are bivariate correlation matrices with parameters $\hat\rho_i\in(-1,1)$, $i=1,\ldots,n$. Let $A_n(z) = \frac{1}{n}\sum_{i=1}^n s_n(Z_i, z) \sqrt{1+\hat\rho_i}$ and $D_n(z) = \frac{1}{n}\sum_{i=1}^ns_n(Z_i, z) \sqrt{1-\hat\rho_i}$. If $A_n(z), D_n(z) > 0$, then the minimizer $\hat S_F(z)$ is unique and positive definite and is attained by a single iteration of Algorithm~\ref{alg:proj-Riem-GD} with $\eta=1$. 
\end{proposition}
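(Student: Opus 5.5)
Every $2\times 2$ correlation matrix has the same eigenvectors, so the whole problem diagonalizes in one fixed basis. Let $u=(1,1)^\top/\sqrt2$ and $v=(1,-1)^\top/\sqrt2$. Any $Q\in\Ecal_2$ with correlation $r\in[-1,1]$ can be written as $Q=(1+r)\,uu^\top+(1-r)\,vv^\top$. Consequently all elements of $\Ecal_2$ commute. For commuting $\s,Q\in\Scal_2^+$ we have $(\s^{1/2}Q\s^{1/2})^{1/2}=\s^{1/2}Q^{1/2}$. Hence
\[
\Bcal^2(\s,Q)=\|\s^{1/2}-Q^{1/2}\|_F^2,
\]
and, when $\s$ is positive definite, \eqref{eq:BW-transport} gives $T_\s^Q=\s^{-1/2}Q^{1/2}$. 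The proof has two parts: solve the minimization \eqref{eq:correlation-estimation} in closed form, then check that one step of Algorithm~\ref{alg:proj-Riem-GD} lands exactly on that solution.

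\textbf{Step 1: closed-form minimizer.} Parametrize $Q\in\Ecal_2$ by $a=\sqrt{1+r}$ and $b=\sqrt{1-r}$, so that $(a,b)$ ranges over the quarter circle $a^2+b^2=2$, $a,b\ge0$. This covers the boundary cases $r=\pm1$. Write $w_i=s_n(Z_i,z)$ and use $n^{-1}\sum_i w_i=1$, which holds because the weights are centered at $\overline Z$. Then the objective becomes
\[
\frac1n\sum_i w_i\Big[(\sqrt{1+\hat\rho_i}-a)^2+(\sqrt{1-\hat\rho_i}-b)^2\Big]=C-2\big(A_n a+D_n b\big)+2,
\]
where $C$ does not depend on $Q$. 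Minimizing therefore amounts to maximizing the linear functional $A_n a+D_n b$ over the arc. Some weights may be negative, but this is harmless because the objective is exactly this affine function of $(a,b)$ regardless of the signs of the $w_i$. Since $A_n,D_n>0$, Cauchy--Schwarz gives the unique maximizer on the full circle,
\[
(a,b)=\sqrt2\,(A_n,D_n)/\sqrt{A_n^2+D_n^2},
\]
which lies in the open quarter. Hence $\hat S_F(z)$ is unique, with correlation $r^*=(A_n^2-D_n^2)/(A_n^2+D_n^2)\in(-1,1)$, so it is positive definite. This uniqueness step, including the exclusion of the endpoints $r=\pm1$ via strictness of Cauchy--Schwarz, is the only delicate point, and it is exactly where the hypothesis $A_n,D_n>0$ is used.

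\textbf{Step 2: one iteration of Algorithm~\ref{alg:proj-Riem-GD}.} Start from $S^{(0)}=\hat S_1=(1+\hat\rho_1)uu^\top+(1-\hat\rho_1)vv^\top$, which is positive definite since $\hat\rho_1\in(-1,1)$. By commutativity,
\[
T_{S^{(0)}}^{\hat S_i}=\frac{\sqrt{1+\hat\rho_i}}{\sqrt{1+\hat\rho_1}}\,uu^\top+\frac{\sqrt{1-\hat\rho_i}}{\sqrt{1-\hat\rho_1}}\,vv^\top .
\]
With $\eta=1$, line~3 gives
\[
G^{(1)}=\frac{A_n}{\sqrt{1+\hat\rho_1}}\,uu^\top+\frac{D_n}{\sqrt{1-\hat\rho_1}}\,vv^\top .
\]
Line~4 then yields $\Sigma^{(1)}=A_n^2\,uu^\top+D_n^2\,vv^\top$. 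Both diagonal entries of $\Sigma^{(1)}$ equal $(A_n^2+D_n^2)/2$, and its off-diagonal entry is $(A_n^2-D_n^2)/2$. Applying the projection of Lemma~\ref{lemma:BW-projection} therefore rescales $\Sigma^{(1)}$ to the correlation matrix with correlation $(A_n^2-D_n^2)/(A_n^2+D_n^2)=r^*$. Thus $S^{(1)}=\hat S_F(z)$, which completes the plan.
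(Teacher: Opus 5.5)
Your proof is correct. Step~2 matches the paper's computation essentially verbatim. The paper also writes $\hat S_i^{1/2}=\sqrt{1+\hat\rho_i}\,P_1+\sqrt{1-\hat\rho_i}\,P_\perp$ with $P_1=uu^\top$ and $P_\perp=vv^\top$. It sets $M_n=\frac1n\sum_i s_i\hat S_i^{1/2}=A_n(z)P_1+D_n(z)P_\perp$ and obtains $G^{(1)}=M_n(S^{(0)})^{-1/2}$ and $\Sigma^{(1)}=M_n^2$, exactly as you do.

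Step~1 takes a different route. The paper reaches the same reduction $\hat F_n(\rho)=4-2A_n(z)\sqrt{1+\rho}-2D_n(z)\sqrt{1-\rho}$. It then computes $\hat F_n''(\rho)=\tfrac12A_n(z)(1+\rho)^{-3/2}+\tfrac12D_n(z)(1-\rho)^{-3/2}>0$ and solves $\hat F_n'(\rho)=0$. You instead pass to the coordinates $(a,b)=(\sqrt{1+\rho},\sqrt{1-\rho})$ on the quarter circle of radius $\sqrt2$, which the paper only introduces afterwards as a geometric illustration. You then read off the unique maximizer of the linear functional $A_n a+D_n b$ from the equality case of Cauchy--Schwarz.

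Your version gives uniqueness, exclusion of $\rho=\pm1$, and the closed form in one step. It handles the endpoints explicitly. The paper leaves them implicit: a function continuous on $[-1,1]$ and strictly convex on $(-1,1)$, with an interior critical point, is uniquely minimized there. Your version also makes clear why negative weights do no harm.

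The paper's calculus route yields an explicit curvature formula in exchange. It reuses that formula in the population bivariate case (Section~\ref{sec:supp-verify-assump-equicorr}) to get the uniform Hessian lower bound required by Assumption~\ref{assump:local-convexity-radius}. The Cauchy--Schwarz argument delivers uniqueness but no quantitative curvature bound.
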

The condition $A_n(z), D_n(z) > 0$ is readily verifiable from data and is violated only in rare, extreme cases. If either quantity is nonpositive, then the fitted correlation parameter of $\hat S_F(z)$ lies on the boundary $\pm 1$, or the objective function in \eqref{eq:correlation-estimation} becomes constant.

\paragraph{Proof of Proposition~\ref{prop:one-step-bivar}.
}
Fix $z$ and write $s_i := s_n(Z_i, z)$, so that $n^{-1}\sum_{i=1}^n s_i = 1$. Write 
$$\hat S_i=\s(\hat\rho_i),\quad \text{where}\quad \s(\rho):=\begin{pmatrix}1 & \rho\\ \rho & 1\end{pmatrix}.$$
Define
$$\hat F_n(\rho):= \frac{1}{n}\sum_{i=1}^n s_i\,\Bcal^2\!\left(\s(\hat\rho_i), \s(\rho)\right),\qquad \rho\in[-1,1],$$
which corresponds to the objective function in \eqref{eq:correlation-estimation}. For $r,\rho\in[-1,1]$, the BW metric on the bivariate family satisfies
\begin{equation*}
\Bcal^2\!\left(\s(r), \s(\rho)\right)
= 2d - 2\big(\sqrt{1+r}\sqrt{1+\rho} + \sqrt{1-r}\sqrt{1-\rho}\big).
\end{equation*}
Thus, the objective function is written as
\begin{equation}\label{eq:bivariate-objective}
    \hat F_n(\rho) = 2d- 2 A_n(z)\sqrt{1+\rho} -2 D_n(z) \sqrt{1-\rho}. 
\end{equation}
By the assumption of $A_n(z), D_n(z) >0$, we can compute the first and second derivatives of $\hat F_n$ with respect to $\rho$ in the interior $(-1, 1)$:
\begin{align*}
    \hat F_n'(\rho) &= -A_n(z)(1+\rho)^{-1/2} + D_n(z)(1-\rho)^{-1/2},\\
    \hat F_n''(\rho) &= \frac{1}{2}A_n(z)(1+\rho)^{-3/2} + \frac{1}{2}D_n(z)(1-\rho)^{-3/2}.
\end{align*}
The second derivative is strictly positive, so $\hat F_n$ is convex and has a unique minimizer within $(-1, 1)$. Solving the first-order optimality condition $\hat F_n'(\rho) = 0$, we obtain the unique solution
$$\hat \rho_F(z) := \frac{A_n(z)^2 - D_n(z)^2}{A_n(z)^2+ D_n(z)^2},$$
with $\s(\hat\rho_F(z)) = S_F(z)$. 

Now consider one iteration of Algorithm~\ref{alg:proj-Riem-GD} with $\eta=1$ and $S^{(0)}=\hat S_1$. Since bivariate correlation matrices commute, for each $i$,
$$T_{S^{(0)}}^{\hat S_i}=\hat S_i^{1/2}(S^{(0)})^{-1/2}.$$
Then in line 3, the Riemannian gradient computation with $\eta=1$ becomes
$$G^{(1)} = \frac{1}{n}\sum_{i=1}^n s_i T_{S^{(0)}}^{\hat S_i} = \bigg(\frac{1}{n}\sum_{i=1}^n s_i \hat S_i^{1/2}\bigg) (S^{(0)})^{-1/2}. $$
As the matrices share the same eigenspaces, the computations can be done as follows. Let $P_{1}:=2^{-1}1_21_2^\top$ and $P_\perp:=I_2-P_1$ denote the projections onto the respective eigenspaces with eigenvalues $1 + \rho$ and $1-\rho$ of the bivariate correlation matrix $\s(\rho)$. Then for each $i$, 
\begin{equation*}
\hat S_i^{1/2} = \sqrt{1+\hat\rho_i}\,P_{1} + \sqrt{1-\hat\rho_i}\,P_\perp,
\end{equation*}
hence letting
\begin{equation*}
M_n:=\frac{1}{n}\sum_{i=1}^n s_i\,\hat S_i^{1/2} = A_n(z)P_{1} + D_n(z)P_\perp
\end{equation*}
gives the Riemannian gradient $G^{(1)}=M_n(S^{(0)})^{-1/2}$. Line 4 then retracts $G^{(1)}$ onto $\Scal_d^{++}$ as
\begin{equation*}
\Sigma^{(1)} = M_n^2 = A_n(z)^2P_{1} + D_n(z)^2P_\perp.
\end{equation*}
The diagonal entries of $\s^{(1)}$ are constant:
\begin{equation*}
\diag(\Sigma^{(1)}) = c(z)\,1_2,\qquad c(z):=\frac{A_n(z)^2 + D_n(z)^2}{2},
\end{equation*}
and the off-diagonal entry of $\s^{(1)}$ is $(A_n(z)^2 - D_n(z)^2)/2$.
Thus by Lemma~\ref{lemma:BW-projection},
\begin{equation*}
S^{(1)} = P_{\Ecal_d}(\Sigma^{(1)}) = c(z)^{-1}\Sigma^{(1)} = \s(\rho^{(1)}),
\end{equation*}
with
\begin{equation*}
\rho^{(1)} = \frac{A_n(z)^2 - D_n(z)^2}{A_n(z)^2 +D_n(z)^2}.
\end{equation*}
Comparing the two formulas for $\hat\rho_F(z)$ and $\rho^{(1)}$, we get $\rho^{(1)}=\hat\rho_F(z)$ and thus $S^{(1)}=\hat S_F(z)$.
\qed

\paragraph{Geometric illustration.}
We provide a further illustration of the $A_n(z), D_n(z) > 0$ condition in Proposition~\ref{prop:one-step-bivar} with the BW metric geometry in the $d=2$ case. Parametrizing $\Ecal_2$ endowed with the BW metric
$$\Bcal^2(\rho_1, \rho_2) = 4 - 2\sqrt{(1+\rho_1)(1+\rho_2)} - 2\sqrt{(1-\rho_1)(1-\rho_2)}$$
on $[-1, 1]$, we can isometrically embed $\Ecal_2$ into the first quadrant of the circle of radius $\sqrt{2}$, $\sqrt{2}\,\mathbb{S}_+^1\subset\R^2$, with the Euclidean chordal metric. Specifically, the isometry is given by
$$\varphi:\Ecal_2\to\sqrt{2}\,\mathbb{S}_+^1,\qquad \rho\mapsto\big(\sqrt{1+\rho},\,\sqrt{1-\rho}\big)^\top,$$
whose inverse is $\varphi^{-1}(x,y)=(x^2-y^2)/2$.

Using this isometry, consider embeddings of correlations $\varphi(\hat\rho_i) =: (x_i, y_i)\in\sqrt{2}\,\mathbb{S}_+^1 $. Then the quantities $A_n(z)$ and $D_n(z)$ becomes the affine combination in the ambient $\R^2$ space:
$$(A_n(z), D_n(z)) = \sum_{i=1}^n s_i\, (x_i, y_i) \in\R^2. $$
Therefore, the requirement of $A_n(z), D_n(z) > 0$ is equivalent to require the affine combination $\sum_{i=1}^n s_i \,\varphi(\hat\rho_i)$ to lie in the first quadrant of $\R^2$. Since the embedded points $\varphi(\hat\rho_i)$ already lie in the first quadrant, violating the positivity of $(A_n(z), D_n(z))$ requires an extreme case that makes large negative linear weights $s_i$.

Specifically, if $A_n(z) = D_n(z) = 0$, then the objective function $\hat F_n(\rho)$ in \eqref{eq:bivariate-objective} becomes constant, a meaningless case in the sense of regression. If only one of them is zero, the objective $\hat F_n$ becomes monotone in $\rho$, meaning that the minimizer $\hat\rho_F(z)$ lies in the boundary $\pm 1$. If all of them are nonzero but one is negative, it again yields monotone $\hat F_n(\rho)$, forcing the solution to lie on the boundary. Taken together, violation of the condition $A_n(z), D_n(z) > 0$ corresponds to fitting an extreme correlation matrix with $\hat\rho_F(z) \in \{\pm 1\}$ (i.e., perfect correlation), which is of less interest when the input data correlations $\hat\rho_i$ do not lie on the boundary.
\qed

\subsubsection{Equicorrelation extension}\label{sec:supp-equicorrelation-extension}

The same argument of Proposition~\ref{prop:one-step-bivar} applies when $\hat S_i$ are equicorrelation matrices, but under the additional \textit{uniqueness} assumption on $\hat S_F(z)$ (this assumption is mild in most practical scenarios). Under this assumption, for the objective function $\hat F_n(z, \s) = \frac{1}{n}\sum_{i=1}^n s_i\Bcal^2(\hat S_i, \s)$, the unique $\hat S_F(z)$ must be permutation-invariant; i.e.,
$$P S_F(z) P^\top = S_F(z)$$
for any permutation matrices $P$. This is because every $\hat S_i$ satisfies $P\hat S_i P^\top = \hat S_i$ and the BW metric satisfies $\Bcal(PSP^\top, P\s P^\top) = \Bcal(S, \s)$ \citep{bhatiaBuresWassersteinDistance2019}. Thus, the minimizer $\hat S_F(z)$ is enforced to lie within the equicorrelation family, in which we can perform a similar computational procedure to the bivariate case, concluding that a single iteration of Algorithm~\ref{alg:proj-Riem-GD} also yields the unique equicorrelation minimizer $\hat S_F(z)$.
\qed

\section{Outline and background of the proofs of results in Section~\ref{sec:theory}}\label{sec:supp-sec5-proof-outline}

To prove the asymptotic results presented in Section~\ref{sec:theory}, this section provides an outline and the necessary background. From here on, for symmetric matrices $M$ and $\lambda>0$, we write $M \succeq \lambda I_d$ when $\lambda_{\min}(M)\ge \lambda$ and $M \succ 0$ when $\lambda_{\min}(M)>0$. We also consider an interior of $\Ecal_d^{++}$ where eigenvalues are lower-bounded; specifically, for any positive number $\lambda\in(0,1)$, we define
\begin{equation}\label{eq:corr-eigen-lowerbound}
    \Ecal_d(\lambda) = \{ \s\in\Ecal_d: \lambda_{\min}(\s) \ge \lambda\},
\end{equation}
which is a compact, convex subset of $\Ecal_d$.

\subsection*{Outline of the proof}

The steps involved in the proofs are organized as follows:
\begin{itemize}
    \item Section~\ref{sec:supp-prelim-matcal} introduces the necessary background on multilinear algebra and differential calculus, with their application to the set of positive definite correlation matrices $\Ecal_d^{++}$. In Section~\ref{sec:supp-diff-bw-prop}, we introduce some technical results, computational/differential properties of the squared BW distance on $\Ecal_d^{++}$ that will be used throughout in the proof.
    \item Section~\ref{sec:supp-proof-oracle-rate} presents the proof of Theorem~\ref{thm:frechet-oracle-rate}, which establishes the consistency of the nonparanormal Fr\'echet regression estimator in the oracle case. 
    \item Section~\ref{sec:supp-empirical-rate-proof} provides the proof of Theorem~\ref{thm:frechet-real-rate}, extending the oracle result to the empirical-response setting.
\end{itemize}

\noindent We also verify the technical assumptions in some specific cases in Section~\ref{sec:supp-verify-assump-equicorr}.

\subsection{Differential calculus on matrix manifolds}\label{sec:supp-prelim-matcal}

We briefly review the differential calculus on vector spaces and its specialization to the set of correlation matrices. Further details can be found in Chapter 2 of \citet{abrahamManifoldsTensorAnalysis2012}.

\subsubsection{Symmetric multilinear operators}\label{sec:supp-multilinear}

Let $\mathbb{V}$ and $\mathbb{W}$ be finite-dimensional normed vector spaces, with norms denoted by $\|\cdot\|$. Define $\Lcal(\mathbb{V}, \mathbb{W})$ as the space of linear operators from $\mathbb{V}$ to $\mathbb{W}$, equipped with the operator norm $\|L\|_{op} = \sup_{\|v\| \le 1} \|L[v]\|$. We define the spaces of multilinear operators recursively: $\Lcal^1(\mathbb{V}, \mathbb{W}) = \Lcal(\mathbb{V}, \mathbb{W})$ and $\Lcal^k(\mathbb{V}, \mathbb{W}) = \Lcal(\mathbb{V}, \Lcal^{k-1}(\mathbb{V}, \mathbb{W}))$ for $k \ge 2$. Elements of $\Lcal^k(\mathbb{V}, \mathbb{W})$ are identified with $k$-multilinear maps $L: \mathbb{V} \times \dots \times \mathbb{V} \to \mathbb{W}$. When $\mathbb{W}=\R$, we denote the space of $k$-multilinear forms by $\Lcal^k(\mathbb{V})$.

An operator $L \in \Lcal^k(\mathbb{V}, \mathbb{W})$ is called \textit{symmetric} if $L[v_1, \dots, v_k] = L[v_{\sigma(1)}, \dots, v_{\sigma(k)}]$ for any permutation $\sigma:\{1, \dots, k\}\to\{1, \dots, k\}$, $\forall v_1,\ldots,v_k\in\mathbb{V}$. 
For a symmetric $k$-linear form $L \in \Lcal^k(\mathbb{V}, \mathbb{W})$, we distinguish between the operator norm $$\|L\|_{op} = \sup_{\|v_1\|\le 1, \dots, \|v_k\|\le 1} \|L[v_1, \dots, v_k]\|$$ and the polar norm $$\|L\|_{P} = \sup_{\|v\|\le 1} \|L[v, \dots, v]\|.$$ These norms are topologically equivalent \citep[Section~2.2B]{abrahamManifoldsTensorAnalysis2012}, satisfying 
\begin{equation}\label{eq:polar-norm-equivalence}
    \|L\|_{P} \le \|L\|_{op} \le \frac{k^k}{k!} \|L\|_{P}.
\end{equation}
For a matrix $M$ viewed as a linear operator, $\|M\|_{op}$ is defined using Euclidean norms; i.e., $\|M\|_{op}$ equals the spectral norm.

\subsubsection{Differential calculus on normed vector spaces}\label{sec:supp-diff-cal}

Let $U \subset \mathbb{V}$ be an open subset. A map $F: U \to \mathbb{W}$ is said to be Fr\'echet differentiable at $x \in U$ if there exists a linear operator $d_x F \in \Lcal(\mathbb{V}, \mathbb{W})$ such that 
$$\lim_{\|h\| \to 0} \|F(x+h) - F(x) - d_x F[h]\| / \|h\| = 0.$$
Higher-order differentials are defined recursively using the map $x \mapsto d_x F$ from $U$ to $\Lcal(\mathbb{V}, \mathbb{W})$. If this map is differentiable at $x$, its derivative is denoted by $d^2_x F \in \Lcal(\mathbb{V}, \Lcal(\mathbb{V}, \mathbb{W})) \cong \Lcal^2(\mathbb{V}, \mathbb{W})$.
In general, the $k$-th order differential $d^k_x F$ is defined as differential of $x \mapsto d^{k-1}_x F$, identified as a symmetric $k$-multilinear form in $\Lcal^k(\mathbb{V}, \mathbb{W})$. If the target space is real-valued $\mathbb{W} = \R$, we also write $\nabla_xF = d_xF$, $\nabla_x^2F = d_x^2F$, $\nabla_x^3 F= d_x^3F,$ ... to indicate the usual gradient, Hessian, and higher-order differentials.

We say $F$ is of class $C^k$ on $U$ if the $k$-th differential $d^k_x F$ exists for all $x \in U$ and the map $x \mapsto d^k_x F$ is continuous from $U$ to $\Lcal^k(\mathbb{V}, \mathbb{W})$. The exponent $k$ can be infinite, meaning that $F$ is infinitely differentiable.
It is well-known that the composition of $C^k$ maps is again $C^k$; see, e.g., Theorem~2.4.3 of \citet{abrahamManifoldsTensorAnalysis2012}.

The following mean value inequality of vector-valued smooth functions will be useful to derive Lipschitzness and boundedness of such operators over compact sets:
\begin{lemma}[Mean Value Inequality, Proposition 2.4.8 of \citealt{abrahamManifoldsTensorAnalysis2012}]\label{lemma:mean-value-inequality}
    Let $U \subset \mathbb{V}$ be an open set and $F: U \to \mathbb{W}$ be a $C^1$ map. Let $x, y \in U$ be such that the line segment connecting them lies in $U$. Then,
    $$ \|F(y) - F(x)\| \le \sup_{t \in [0, 1]} \|d_{x+t(y-x)} F\|_{op} \cdot \|y - x\|. $$
\end{lemma}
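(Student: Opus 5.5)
The mean value inequality reduces to the fundamental theorem of calculus along the segment. Set $\gamma(t)=x+t(y-x)$ for $t\in[0,1]$. By hypothesis $\gamma([0,1])\subset U$, so $g(t):=F(\gamma(t))$ is a $C^1$ curve in $\mathbb{W}$. The chain rule for Fr\'echet derivatives, which is the composition property of $C^k$ maps recalled above, gives
\[
g'(t)=d_{\gamma(t)}F[y-x],
\]
and $t\mapsto g'(t)$ is continuous. Since $\mathbb{W}$ is finite dimensional, we may integrate componentwise in any basis. This yields $F(y)-F(x)=g(1)-g(0)=\int_0^1 d_{\gamma(t)}F[y-x]\,dt$, with the integral taken as a Riemann integral of a continuous $\mathbb{W}$-valued function.

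Next we take norms and bound the integrand. Riemann sums satisfy the triangle inequality, and the norm is continuous, so the integral inherits the bound $\|\int_0^1 v(t)\,dt\|\le\int_0^1\|v(t)\|\,dt$. By the definition of the operator norm,
\[
\|d_{\gamma(t)}F[y-x]\|\le \|d_{\gamma(t)}F\|_{op}\,\|y-x\|.
\]
Hence
\[
\|F(y)-F(x)\|\le \int_0^1\|d_{\gamma(t)}F\|_{op}\,dt\;\|y-x\|\le \sup_{t\in[0,1]}\|d_{\gamma(t)}F\|_{op}\,\|y-x\|.
\]

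The only delicate point is justifying the vector-valued integral and its norm bound for a general norm on $\mathbb{W}$. To sidestep it, I would use Hahn--Banach, which in finite dimensions follows from duality. Pick a linear functional $\ell$ with $\|\ell\|_{op}=1$ and $\ell(F(y)-F(x))=\|F(y)-F(x)\|$. Apply the scalar mean value theorem to the real-valued $C^1$ function $\ell\circ g$. This gives some $\tau\in(0,1)$ with
\[
\|F(y)-F(x)\|=\ell\bigl(d_{\gamma(\tau)}F[y-x]\bigr)\le\|d_{\gamma(\tau)}F\|_{op}\,\|y-x\|,
\]
which is at most the supremum. This route avoids integration entirely, so I would present it as the main argument.
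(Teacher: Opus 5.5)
Your proof is correct. The paper gives no proof of its own here: the lemma is simply cited from Abraham, Marsden, and Ratiu (their Proposition~2.4.8) and used as a black box for the Lipschitz bounds in Lemma~\ref{lemma:lipschitz-transport} and the Hessian perturbation step in Lemma~\ref{lemma:uniform-pd-hat}. So your argument is a self-contained justification of a cited fact, not a competing route, and the useful comparison is between your two versions.

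\textbf{Integral route.} Your first route is the standard one: integrate $d_{\gamma(t)}F[y-x]$ along the segment. It genuinely uses the $C^1$ hypothesis, which makes $t\mapsto d_{\gamma(t)}F[y-x]$ continuous and hence Riemann integrable. In return it gives the slightly sharper averaged bound
\[
\|F(y)-F(x)\|\le \Big(\int_0^1\|d_{\gamma(t)}F\|_{op}\,dt\Big)\|y-x\|.
\]

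\textbf{Norming-functional route.} Your main route picks $\ell$ with $\|\ell\|_{op}=1$ and $\ell(F(y)-F(x))=\|F(y)-F(x)\|$, then applies the scalar mean value theorem to $\ell\circ g$. It avoids integration entirely. It only needs $F$ to be differentiable at each point of the segment, with no continuity of $dF$, so it proves a marginally more general statement. Its one nontrivial input is the existence of $\ell$. In finite dimensions this is the supporting-hyperplane theorem applied to the unit ball of $\mathbb{W}$ at the boundary point $(F(y)-F(x))/\|F(y)-F(x)\|$, and the case $F(y)=F(x)$ is trivial.

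In the paper's applications the maps are $C^\infty$ on compact convex sets, so either version suffices.
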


\subsubsection{Differential calculus on $\Ecal_d$}\label{sec:supp-diff-calculus-corr}

Recall that $\Scal_d^{++}$ is an open subset of $\Scal_d$. The set of positive definite correlation matrices $\Ecal_d^{++}$ is the intersection of $\Scal_d^{++}$ and the affine subspace $\{M\in\Scal_d: \diag(M) = 1_d\}$; i.e., $\Ecal_d^{++}$ is a submanifold of $\Scal_d^{++}$. Thus, for any differentiable function $F:\Ecal_d^{++}\to\mathbb{W}$, the domain of its differentials $d_Q F$ at $Q \in \Ecal_d^{++}$ is identified with the tangent space 
$$\Tcal_d = \{M \in \Scal_d : \diag(M) = 0_d\},$$
associated with the affine subspace $\{M\in\Scal_d: \diag(M) = 1_d\}$, and we write $\Tcal = \Tcal_d$ for brevity. For higher-order differentials $d_Q^k F \in \Lcal^k(\Tcal, \mathbb{W})$, we use the notation
$$\|d_Q^k F\|_\Tcal  := \|d_Q^kF\|_{P}$$
to indicate the polar norm on $\Lcal^k(\Tcal, \mathbb{W})$.

When there is a map $G: \Scal_d^{++} \to \mathbb{W}$ that extends $F$, i.e., $G|_{\Ecal_d^{++}} = F$, the differential calculus of $F$ can be inherited from that of $G$. Specifically, the differential of $F$ equals the differential of $G$ restricted to the tangent space $\Tcal$:
$$ d_Q F[M] = d_Q G[M], \quad \forall Q \in \Ecal_d^{++}, \ M \in \Tcal. $$
This relationship extends recursively to higher-order differentials. 
In particular, the polar norm of $F = G|_{\Ecal_d^{++}}$ can be represented as
$$\big\|d_Q^kF\big\|_\Tcal = \sup_{M\in\Tcal,\,\|M\|_F\le 1} \|d_Q^kF[M, \ldots, M]\|, $$
which is thus less than or equal to the ambient polar norm $\big\|d_Q^kG\big\|_P$.

\subsection{Technical lemmas: computational/differential properties of $\Bcal^2$ on $\Ecal_d$}\label{sec:supp-diff-bw-prop}

We collect basic computational and differential properties of the squared BW distance $\Bcal^2(Q, \s)$ for our asymptotic analysis on the correlation domain $\Ecal_d$. Note that eigenvalues are at most $d$ for all correlation matrices.

Viewing $f_Q(\s) := \Bcal^2(Q, \s)$ as a function on $\s\in \Scal_d^{++}$, the gradient of $f_Q$ is given by 
$$\nabla f_Q(\s) = I_d - T_\s^Q,$$
where $T_\s^Q = \s^{-1/2}(\s^{1/2}Q \s^{1/2})^{1/2}\s^{-1/2}$ is the optimal transport map from $\Ncal(0, \s)$ to $\Ncal(0, Q)$ \citep{kroshninStatisticalInferenceBures2021}. The higher-order derivatives along a direction $M \in \Scal_d$ are then represented as
$$ \nabla^2 f_Q(\s)[M, M] := - \langle dT_\s^Q[M], M \rangle_F, \quad \nabla^3 f_Q(\s)[M, M, M] := - \langle d^2T_\s^Q[M, M], M \rangle_F, $$
where $dT_\s^Q$ and $d^2T_\s^Q$ are differentials of the map $\s \mapsto T_\s^Q$ ($d_\s$ is written as $d$ for brevity). For our analysis with correlation matrices, we will only consider tangential directions $M\in\Tcal$.

We first present the smoothness of the transport map:

\begin{lemma}\label{lemma:transport-smoothness}
    The map $(\s, Q)\mapsto T_\s^Q$ is $C^\infty$ on $\Ecal_d^{++}\times \Ecal_d^{++}$.
\end{lemma}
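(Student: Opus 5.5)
The plan is to write $T_\s^Q$ as a composition of maps that are known to be smooth on open sets of positive definite matrices. We prove smoothness on the open set $\Scal_d^{++}\times\Scal_d^{++}$; restricting to the submanifold $\Ecal_d^{++}\times\Ecal_d^{++}$ keeps it $C^\infty$, as explained in Section~\ref{sec:supp-diff-calculus-corr}. Concretely, we write
\[
T_\s^Q = R(\s)^{-1}\, R\big(R(\s)\,Q\,R(\s)\big)\, R(\s)^{-1}, \qquad R(M) := M^{1/2},
\]
and check each building block.

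\textbf{Step 1: the matrix square root.} We show $R:\Scal_d^{++}\to\Scal_d^{++}$ is $C^\infty$. Consider the polynomial map $G(A)=A^2$ on $\Scal_d^{++}$. Its differential is $dG_A[H]=AH+HA$, which is the Sylvester operator. This operator is invertible whenever $A\succ 0$: in the eigenbasis of $A$ it acts entrywise as multiplication by $\lambda_i+\lambda_j>0$. $G$ is a bijection of $\Scal_d^{++}$ onto itself by the spectral theorem, and its inverse $R$ is continuous. The inverse function theorem in finite dimensions (e.g. \citet{abrahamManifoldsTensorAnalysis2012}) then shows that $R$ is $C^\infty$.

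As an alternative route, one can use the holomorphic functional calculus. For $M$ in a neighbourhood of $M_0$, take a contour $\Gamma$ in the right half-plane enclosing the spectrum of $M_0$, and write $R(M)=\frac{1}{2\pi i}\oint_\Gamma \sqrt{\zeta}\,(\zeta I-M)^{-1}d\zeta$. The integrand is smooth in $M$, so $R$ is too.

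\textbf{Step 2: the remaining blocks and the composition.} Matrix inversion is $C^\infty$ on invertible matrices, being rational in the entries. Matrix multiplication is bilinear and hence $C^\infty$. The map $(\s,Q)\mapsto R(\s)QR(\s)$ takes values in $\Scal_d^{++}$ whenever $Q\succ0$ and $\s\succ 0$, so composing with $R$ stays inside the domain where Step~1 applies. By the chain rule for $C^k$ maps (Theorem~2.4.3 of \citet{abrahamManifoldsTensorAnalysis2012}), the full composition is $C^\infty$ on $\Scal_d^{++}\times\Scal_d^{++}$. Its restriction to $\Ecal_d^{++}\times\Ecal_d^{++}$, viewed along the tangent directions $\Tcal\times\Tcal$, is therefore $C^\infty$.

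The only step with real content is Step~1, namely invertibility of the Sylvester operator. The rest is bookkeeping: we must check that every intermediate matrix stays in the open cone $\Scal_d^{++}$, which is exactly why the statement requires both $\s\succ0$ and $Q\succ0$.
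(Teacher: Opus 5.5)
Your proposal is correct and follows the paper's proof. Both write $T_\s^Q$ as a composition of matrix square roots, inversions and products, each $C^\infty$ on $\Scal_d^{++}$, and conclude by the chain rule. The only difference is that you prove the smoothness of $A\mapsto A^{1/2}$ yourself, via the inverse function theorem applied to $A\mapsto A^2$ (using invertibility of $H\mapsto AH+HA$ when $A\succ 0$) or via the holomorphic functional calculus, whereas the paper simply cites this fact from Del Moral and Niclas (2018).
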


\begin{proof}
    Recall $T_\s^Q = \s^{-1/2}(\s^{1/2}Q \s^{1/2})^{1/2}\s^{-1/2}$.
    The matrix square root map $A \mapsto A^{1/2}$ is $C^\infty$ on $\Scal_d^{++}$ \citep{delmoralTaylorExpansionSquare2018}.
    Matrix multiplication is bilinear, hence $C^\infty$, and matrix inversion is $C^\infty$ on $\Scal_d^{++}$. 
    The map $(\s, Q)\mapsto T_\s^Q$ is thus a composition of $C^\infty$ maps and is therefore $C^\infty$.
\end{proof}

Next, we present uniform boundedness and Lipschitz properties of $T_\s^Q$ and its differentials within the compact interior of $\Ecal_d^{++}$. In what follows, we view $dT_\s^Q$ and $d^2T_\s^Q$ as multilinear operators on $\Tcal$, endowed with the polar norm $\|\cdot\|_\Tcal$.

\begin{lemma}\label{lemma:derivative-bounds}
    Suppose $Q, \s \in \Ecal_d(\lambda)$ for some $\lambda\in(0, 1)$. Then, there exists a constant $C > 0$ depending only on $\lambda$ and $d$ such that
    \begin{align*}
        \|T_\s^Q\|_F &\le C, \\
        \|dT_\s^Q[M]\|_F &\le C \|M\|_F, \\
        \|d^2T_\s^Q[M, M]\|_F &\le C \|M\|_F^2,
    \end{align*}
    for all $M \in \Tcal$. Consequently, the higher-order derivatives of $f_Q(\s) = \Bcal^2(Q, \s)$ are uniformly bounded on the tangent space $\Tcal$:
    $$ \|\nabla f_{Q}(\s)\|_F \le C,\quad \|\nabla^2 f_Q(\s)\|_{\Tcal} \le C, \quad \|\nabla^3 f_Q(\s)\|_{\Tcal} \le C. $$
\end{lemma}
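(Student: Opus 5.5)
The argument is a compactness one. By Lemma~\ref{lemma:transport-smoothness}, the map $(\s,Q)\mapsto T_\s^Q$ is $C^\infty$ on the open set $\Ecal_d^{++}\times\Ecal_d^{++}$ (open relative to the affine slice $\{\diag=1_d\}$). Hence the maps
\[
(\s,Q)\mapsto T_\s^Q,\qquad (\s,Q)\mapsto dT_\s^Q\in\Lcal(\Tcal,\Scal_d),\qquad (\s,Q)\mapsto d^2T_\s^Q\in\Lcal^2(\Tcal,\Scal_d)
\]
are all continuous there. The set $\Ecal_d(\lambda)\times\Ecal_d(\lambda)$ is compact: it is closed and bounded, since every entry of a correlation matrix lies in $[-1,1]$. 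It is also contained in $\Ecal_d^{++}\times\Ecal_d^{++}$ because $\lambda>0$. A continuous function on a compact set is bounded, so there is a constant $C_0(\lambda,d)$ with
\[
\|T_\s^Q\|_F\le C_0,\qquad \|dT_\s^Q\|_{op}\le C_0,\qquad \|d^2T_\s^Q\|_{op}\le C_0
\]
on this set. These operator norms are taken over $\Tcal$ with the Frobenius norm. Because all norms on the finite-dimensional spaces $\Lcal^k(\Tcal,\Scal_d)$ are equivalent, it does not matter which norm is used for continuity. Plugging $M$, or $(M,M)$, into the operator-norm bounds gives $\|dT_\s^Q[M]\|_F\le C_0\|M\|_F$ and $\|d^2T_\s^Q[M,M]\|_F\le C_0\|M\|_F^2$.

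The bounds for $f_Q$ then follow from the formulas stated just before the lemma, together with Cauchy--Schwarz in the Frobenius inner product:
\[
\|\nabla f_Q(\s)\|_F\le \|I_d\|_F+\|T_\s^Q\|_F\le \sqrt d+C_0,
\]
\[
|\nabla^2 f_Q(\s)[M,M]|=|\langle dT_\s^Q[M],M\rangle_F|\le C_0\|M\|_F^2,
\]
\[
|\nabla^3 f_Q(\s)[M,M,M]|=|\langle d^2T_\s^Q[M,M],M\rangle_F|\le C_0\|M\|_F^3 .
\]
Taking the supremum over $\|M\|_F\le1$ with $M\in\Tcal$ bounds the polar norms $\|\cdot\|_\Tcal$. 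Setting $C=\sqrt d+C_0$ covers every claim.

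The only delicate point is justifying that the higher differentials of $f_Q$ restricted to $\Ecal_d^{++}$ are given by these formulas. Here I rely on Section~\ref{sec:supp-diff-calculus-corr}: the differentials of a restriction $F=G|_{\Ecal_d^{++}}$ are the ambient differentials evaluated on tangent directions in $\Tcal$. The ambient function $G$ is $\Bcal^2(Q,\cdot)$ on $\Scal_d^{++}$, which is smooth because its gradient $I_d-T_\s^Q$ is smooth.

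One point needs care in the continuity step. Lemma~\ref{lemma:transport-smoothness} gives joint smoothness in $(\s,Q)$, so the differential in $\s$ is continuous jointly in $(\s,Q)$. This joint continuity is exactly what makes the bound uniform over $Q$ as well as over $\s$.

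Alternatively, if one wanted explicit constants, the Lipschitz-type bounds could be obtained from the mean value inequality (Lemma~\ref{lemma:mean-value-inequality}). The compactness argument already suffices for constants depending only on $(\lambda,d)$.
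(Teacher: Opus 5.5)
Your proposal is correct and follows essentially the same compactness argument as the paper: joint smoothness of $(\s,Q)\mapsto T_\s^Q$ from Lemma~\ref{lemma:transport-smoothness}, continuity of the $\s$-differentials as maps into $\Lcal(\Tcal,\Scal_d)$ and $\Lcal^2(\Tcal,\Scal_d)$, and boundedness of these continuous maps on the compact set $\Ecal_d(\lambda)\times\Ecal_d(\lambda)$. You also write out the Cauchy--Schwarz step that yields the $f_Q$ bounds, which the paper leaves implicit.
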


\begin{proof}

    Since $\Ecal_d(\lambda)\times \Ecal_d(\lambda)$ is compact and $(\s, Q)\mapsto T_\s^Q$ is $C^\infty$ (Lemma~\ref{lemma:transport-smoothness}), we have $$\sup_{Q,\s\in\Ecal_d(\lambda)}\|T_\s^Q\|_F < \infty.$$
    Moreover, the partial differentials $(\s, Q)\mapsto dT_\s^Q$ and $(\s, Q)\mapsto d^2T_\s^Q$ (with respect to $\s$) are continuous as maps into $\Lcal(\Tcal,\Scal_d)$ and $\Lcal^2(\Tcal,\Scal_d)$, respectively. Therefore, their operator norms are uniformly bounded on $\Ecal_d(\lambda)\times \Ecal_d(\lambda)$:
    $$\sup_{Q,\s\in\Ecal_d(\lambda)} \sup_{\|M\|_F\le 1}\|dT_\s^Q[M]\|_F < \infty, \qquad        \sup_{Q,\s\in\Ecal_d(\lambda)} \sup_{\|M\|_F\le 1}\|d^2T_\s^Q[M, M]\|_F < \infty.$$
    This implies all the stated bounds, including those defined with the function $f_Q$.
\end{proof}

\begin{lemma}\label{lemma:lipschitz-transport}
    Fix $\lambda \in (0, 1)$. There exists a constant $C > 0$ depending only on $d$ and $\lambda$ such that:
    \begin{enumerate}
        \item for any $\s \in \Ecal_d(\lambda)$ and any $Q_1, Q_2 \in \Ecal_d(\lambda)$,
        \begin{align*}
            \|T_{\s}^{Q_1} - T_{\s}^{Q_2}\|_F &\le C \|Q_1 - Q_2\|_F,\\
            \|dT_{\s}^{Q_1} - dT_{\s}^{Q_2}\|_{\Tcal} &\le C \|Q_1 - Q_2\|_F;
        \end{align*}
        \item for any $Q\in \Ecal_d(\lambda)$ and any $\s_1, \s_2 \in \Ecal_d(\lambda)$,
        \begin{align*}
            \|T_{\s_1}^Q - T_{\s_2}^Q\|_F &\le C \|\s_1 - \s_2\|_F,\\
            \|dT_{\s_1}^Q - dT_{\s_2}^Q\|_{\Tcal} &\le C \|\s_1 - \s_2\|_F,\\
            \|d^2T_{\s_1}^Q - d^2T_{\s_2}^Q\|_{\Tcal} &\le C \|\s_1 - \s_2\|_F.
        \end{align*}
    \end{enumerate}
\end{lemma}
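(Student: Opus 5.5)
The plan is to reduce every Lipschitz estimate to the mean value inequality (Lemma~\ref{lemma:mean-value-inequality}) applied to smooth maps on a compact convex set. First, $\Ecal_d(\lambda)$ is convex. If $\s_1,\s_2\in\Ecal_d(\lambda)$, every point $\s_1+t(\s_2-\s_1)$ has unit diagonal, and by concavity of $\lambda_{\min}$ its smallest eigenvalue is at least $\lambda$. So the whole segment stays in $\Ecal_d(\lambda)$. The same holds for segments between $Q_1,Q_2$.

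To apply Lemma~\ref{lemma:mean-value-inequality}, which is stated on open subsets of normed spaces, I would work on an open neighbourhood of the segment. One option is the open set $\{\s\in\Scal_d:\lambda_{\min}(\s)>\lambda/2\}$. On this set, $T_\s^Q$ is defined for any symmetric positive definite argument and is $C^\infty$ jointly in $(\s,Q)$ by the argument of Lemma~\ref{lemma:transport-smoothness}, which used only smoothness of square root, inversion and multiplication on $\Scal_d^{++}$. Alternatively, one can apply the mean value inequality to the restriction to the affine slice $\{\diag=1_d\}$, viewed as an open subset of the normed space $\Tcal$ after translating by $I_d$. The differences $Q_1-Q_2$ and $\s_1-\s_2$ lie in $\Tcal$ anyway, so this route gives exactly what is needed.

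For part 1, I fix $\s\in\Ecal_d(\lambda)$ and consider the maps $Q\mapsto T_\s^Q\in\Scal_d$ and $Q\mapsto dT_\s^Q\in\Lcal(\Tcal,\Scal_d)$. The mean value inequality bounds the difference at $Q_1,Q_2$ by $\sup_t\|d_Q(\cdot)\|_{op}\,\|Q_1-Q_2\|_F$, where the supremum runs along the segment. The derivatives involved are the mixed partials $\partial_Q T$ and $\partial_Q\partial_\s T$. Both are continuous by joint $C^\infty$ smoothness, so their operator norms are uniformly bounded over the compact set $\Ecal_d(\lambda)\times\Ecal_d(\lambda)$. That gives a constant depending only on $(d,\lambda)$.

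Part 2 is analogous with $Q$ fixed. Here the relevant bounds are on $\partial_\s T$, $\partial_\s^2 T$ and $\partial_\s^3 T$ as multilinear operators on $\Tcal$, again uniform on the compact product by continuity. The quantity $\|\cdot\|_{\Tcal}$ in the statement is the polar norm, while the mean value inequality produces operator norms. By \eqref{eq:polar-norm-equivalence}, the polar norm is at most the operator norm, so a bound on $\|dT_{\s_1}^Q-dT_{\s_2}^Q\|_{op}$ immediately bounds the polar norm. The bound for $d^2T$ works the same way, through the operator norm on $\Lcal^2(\Tcal,\Scal_d)$.

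The only delicate point is justifying that the segment lies in the domain where the smoothness and compactness bounds apply. Convexity of $\Ecal_d(\lambda)$ handles this. The other thing to check is that the suprema are taken over a compact set on which the relevant differential maps are continuous. This follows as in Lemma~\ref{lemma:derivative-bounds}, with one more derivative order.
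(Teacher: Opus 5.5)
Your proposal is correct and takes essentially the same route as the paper. Both arguments use joint $C^\infty$ smoothness from Lemma~\ref{lemma:transport-smoothness}, uniform bounds on the relevant partial differentials over the compact set $\Ecal_d(\lambda)\times\Ecal_d(\lambda)$, convexity of $\Ecal_d(\lambda)$, and the mean value inequality (Lemma~\ref{lemma:mean-value-inequality}). The partials you need include the mixed one $\partial_Q\partial_\s T$ and the third-order $\partial_\s^3 T$. Your remarks on choosing an open domain for the mean value inequality, and on passing from the operator norm to the polar norm $\|\cdot\|_{\Tcal}$ via \eqref{eq:polar-norm-equivalence}, spell out details the paper leaves implicit.
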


\begin{proof}
    By Lemma~\ref{lemma:transport-smoothness}, the maps $(\s, Q)\mapsto T_\s^Q$, $(\s, Q)\mapsto dT_\s^Q$, and $(\s, Q)\mapsto d^2T_\s^Q$ are $C^\infty$, and hence their partial differentials (with respect to $\s$ or $Q$) are continuous.
    Since $\Ecal_d(\lambda)\times \Ecal_d(\lambda)$ is compact, the corresponding operator norms of these partial differentials are uniformly bounded.
    Because $\Ecal_d(\lambda)$ is convex, the line segment joining any two points stays in $\Ecal_d(\lambda)$. The stated Lipschitz bounds then follow from the mean value inequality (Lemma~\ref{lemma:mean-value-inequality}).
\end{proof}

Finally, the following lemma tightly bounds the BW distance by the Frobenius norm. The first result establishes the H\"older continuity of $f_Q(\s) := \Bcal^2(Q, \s)$ with respect to the Frobenius norm and does not depend on the minimum eigenvalues of correlation matrices. The second inequality gives an upper bound of $\Bcal$ when \textit{only one} of the minimum eigenvalues is controlled. 

\begin{lemma}\label{lemma:lipschitz-holder-bw}
    Let $Q, \s, \s' \in\Ecal_d$ be any correlation matrices.
    \begin{enumerate}
        \item $|\Bcal^2(Q, \s) - \Bcal^2(Q, \s')| \le 2d^{3/4}\|Q\|_{op}^{1/2}\|\s - \s'\|_F^{1/2}. $
        \item Suppose that $Q\succeq \lambda I_d$ for some $\lambda > 0$. Then,
        $$\Bcal(Q , \s)  \le \frac{1}{\sqrt{\lambda}}\|Q - \s\|_F.$$
        
    \end{enumerate}
\end{lemma}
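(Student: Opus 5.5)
For both parts I would work with the explicit formula $\Bcal^2(Q,\s)=\Tr Q+\Tr\s-2\Tr(Q^{1/2}\s Q^{1/2})^{1/2}$. On $\Ecal_d$ all traces equal $d$, so $\Bcal^2(Q,\s)=2d-2\|Q^{1/2}\s^{1/2}\|_*$, where $\|\cdot\|_*$ is the nuclear norm. This uses $\Tr(Q^{1/2}\s Q^{1/2})^{1/2}=\Tr|\s^{1/2}Q^{1/2}|$.

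\textbf{Part 1.} The identity gives
$$|\Bcal^2(Q,\s)-\Bcal^2(Q,\s')|=2\big|\,\|Q^{1/2}\s^{1/2}\|_*-\|Q^{1/2}\s'^{1/2}\|_*\big|\le 2\|Q^{1/2}(\s^{1/2}-\s'^{1/2})\|_*,$$
by the reverse triangle inequality. Next I bound the nuclear norm by $\sqrt d$ times the Frobenius norm, and then by submultiplicativity:
$$\|Q^{1/2}(\s^{1/2}-\s'^{1/2})\|_*\le \sqrt d\,\|Q\|_{op}^{1/2}\|\s^{1/2}-\s'^{1/2}\|_F.$$
The remaining step is the Hölder continuity of the matrix square root. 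I would use the Powers–Størmer type inequality $\|A^{1/2}-B^{1/2}\|_F^2\le \|A-B\|_*$ for $A,B\succeq0$, together with $\|A-B\|_*\le\sqrt d\|A-B\|_F$. This gives $\|\s^{1/2}-\s'^{1/2}\|_F\le d^{1/4}\|\s-\s'\|_F^{1/2}$. Combining, the constant is $2\sqrt d\cdot d^{1/4}=2d^{3/4}$, exactly as claimed. The main thing to check is the Powers–Størmer inequality in Frobenius form. If needed, I would prove it via the operator-monotone estimate $\Tr(A^{1/2}-B^{1/2})^2\le\Tr|A-B|$.

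\textbf{Part 2.} Here I would use the variational characterization $\Bcal(Q,\s)=\min_U\|Q^{1/2}-\s^{1/2}U\|_F$, already cited in the proof of Lemma~\ref{lemma:BW-projection}. Taking $U=I$ gives $\Bcal(Q,\s)\le\|Q^{1/2}-\s^{1/2}\|_F$. Then I apply the standard perturbation bound for square roots when one matrix is well-conditioned: $\|Q^{1/2}-\s^{1/2}\|_F\le \|Q-\s\|_F/(\lambda_{\min}(Q)^{1/2}+\lambda_{\min}(\s)^{1/2})$.

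To prove this bound, set $X=Q^{1/2}-\s^{1/2}$. Then $Q^{1/2}X+X\s^{1/2}=Q-\s$ holds, since $Q^{1/2}(Q^{1/2}-\s^{1/2})+(Q^{1/2}-\s^{1/2})\s^{1/2}=Q-\s$. Work in the eigenbases of $Q^{1/2}$ (eigenvalues $a_i$) and $\s^{1/2}$ (eigenvalues $b_j$). The Sylvester equation then gives entrywise $\tilde X_{ij}(a_i+b_j)=\widetilde{(Q-\s)}_{ij}$ with $a_i+b_j\ge\sqrt\lambda$, because $b_j\ge0$. Hence $\|X\|_F\le\|Q-\s\|_F/\sqrt\lambda$, which requires only $Q\succeq\lambda I_d$ and no condition on $\s$.

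The only delicate point is handling a possibly singular $\s$, and the Sylvester argument avoids this since $b_j\ge0$ suffices. If the paper's intended constant differs, this route gives at least $1/\sqrt\lambda$, as claimed.
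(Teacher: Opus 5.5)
Your proof is correct and follows essentially the paper's route. Part 1 uses the same closed form with all traces equal to $d$, the same square-root bound (Powers--St{\o}rmer, equivalently Bhatia's Theorem X.1.3 in Frobenius form) and two Cauchy--Schwarz steps; Part 2 uses the same reduction $\Bcal(Q,\s)\le\|Q^{1/2}-\s^{1/2}\|_F$.

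The differences are minor. In Part 1 you peel off $Q^{1/2}$ first, via the nuclear-norm identity and the reverse triangle inequality, and apply the square-root bound to $\s,\s'$ rather than to $Q^{1/2}\s Q^{1/2}$ and $Q^{1/2}\s' Q^{1/2}$; this gives the same constant $2d^{3/4}$. In Part 2 your Sylvester-equation argument is a self-contained proof of the perturbation bound that the paper cites as Bhatia's Problem X.5.5.
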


\begin{proof}
    \textit{1.} Note that $\Tr[Q] = \Tr[\s] = \Tr[\s'] = d$ since they are correlation matrices. By the closed form expression of $\Bcal^2$,
    \begin{align*}
        |\Bcal^2(Q, \s) - \Bcal^2(Q, \s')| &= 2 \big|\!\Tr\big[(Q^{1/2}\s' Q^{1/2})^{1/2} - (Q^{1/2}\s Q^{1/2})^{1/2}\big]\big| \\
        & \le 2\sqrt{d} \big\|(Q^{1/2}\s' Q^{1/2})^{1/2} - (Q^{1/2}\s Q^{1/2})^{1/2}\big\|_F,
    \end{align*}
    where the inequality follows from the Cauchy--Schwarz inequality $\Tr(MI_d)\le \sqrt{d}\|M\|_F$. Let $A = Q^{1/2}\s Q^{1/2}$ and $B = Q^{1/2}\s' Q^{1/2}$, which are positive semidefinite matrices. By Theorem X.1.3 of \citet{bhatiaMatrixAnalysis1997}, 
    \begin{align*}
        \|A^{1/2} - B^{1/2}\|_F &\le \| |A- B|^{1/2}\|_F = \left(\Tr[|A- B|]\right)^{1/2}  = \left(\sum_{j=1}^d \sigma_j(A-B)\right)^{1/2},
    \end{align*}
    where $|M| := (M^\top M)^{1/2}$ and $\sigma_j(M)$ denotes the $j$th singular value of $M$. Applying the Cauchy--Schwarz inequality to the sum of singular values, 
    \begin{align*}
        \left(\Tr[|A- B|]\right)^{1/2} \le d^{1/4} \|A- B\|_F^{1/2},
    \end{align*}
    which yields
    $$|\Bcal^2(Q, \s) - \Bcal^2(Q, \s')| \le 2d^{3/4}\big\|Q^{1/2}\s' Q^{1/2} - Q^{1/2}\s Q^{1/2}\big\|_F^{1/2}.$$
    Applying the inequality $\|ML\|_F \le \|M\|_{op}\|L\|_F$ twice, we obtain
    $$|\Bcal^2(Q, \s) - \Bcal^2(Q, \s')| \le2d^{3/4} \|Q\|_{op}^{1/2} \|\s - \s'\|_F^{1/2}.$$

    \vspace{4mm}
    \noindent\textit{2.} The BW distance satisfies 
    $$\Bcal(Q, \s) = \inf_{M^\top M = Q, L^\top L = \s} \|M - L\|_F \le \|Q^{1/2} - \s^{1/2}\|_F,$$
    where the infimum is taken over $M, L\in\R^{d\times d}$ \citep[Theorem~1]{bhatiaBuresWassersteinDistance2019}. Since $$\lambda_{\min}(Q^{1/2} + \s^{1/2}) \ge \lambda_{\min}(Q^{1/2}) + \lambda_{\min}(\s^{1/2}) \ge \sqrt\lambda,$$
    Problem X.5.5 of \citet{bhatiaMatrixAnalysis1997} implies that 
    $$\|Q^{1/2} - \s^{1/2}\|_F \le \frac{1}{\sqrt{\lambda}}\|Q - \s\|_F,$$
    completing the proof.

\end{proof}

\section{Proof of the oracle case}\label{sec:supp-proof-oracle-rate}

This section is dedicated to proving the convergence of the oracle nonparanormal Fr\'echet regression estimator (Theorem~\ref{thm:frechet-oracle-rate}).

\subsection{Notation and outline}

The main part of the proof is the Fr\'echet regression of the latent correlation component. Recall that $F(z, \s) = \E[s(Z, z)\Bcal^2(S, \s)]$ is the population objective function, whose oracle empirical counterpart is $\tilde F_n(z, \s) = \frac{1}{n}\sum_{i=1}^ns_n(Z_i, z)\Bcal^2(S_i, \s)$. We derive the convergence rate from the rate of the gradient of $\tilde F_n$ at the population optimum $S_F(z)$, facilitated by a local Taylor expansion. To this end, we introduce the following notations of the gradient, Hessian, and third derivative of the objective functions:
\begin{align*}
    \tilde G_n(z) &= \frac{1}{n}\sum_{i=1}^ns_n(Z_i, z)(I_d - T_{ S_F(z)}^{S_i}), & G(z) &= \E[s(Z, z)(I_d - T_{ S_F(z)}^{S})], 
    \\
    \tilde H_n(z) &= -\frac{1}{n}\sum_{i=1}^ns_n(Z_i, z) dT_{ S_F(z)}^{S_i}, & H(z) &= -\E[s(Z, z)dT_{ S_F(z)}^{S}], 
    \\ 
    \tilde K_n(z, \s) &= -\frac{1}{n}\sum_{i=1}^ns_n(Z_i, z) d^2T_{\s}^{S_i}, & K(z, \s) &=  -\E[s(Z, z) d^2T_{\s}^{S}].
\end{align*}

\paragraph{Section outline.} 
Theorem~\ref{thm:frechet-oracle-rate} is proved in the subsequent Section~\ref{sec:supp-nptrate-pf}. The proof leverages technical results on the uniform convergence of the empirical objective function $\tilde F_n$, gradient $\tilde G_n$, Hessian $\tilde H_n$, and third-order derivative $\tilde K_n$ to their population counterparts, which are deferred to Section~\ref{sec:supp-uniform-convergences}. 

\subsection{Proof of Theorem~\ref{thm:frechet-oracle-rate}}\label{sec:supp-nptrate-pf}
We prove Theorem~\ref{thm:frechet-oracle-rate} through the following four steps:
\begin{enumerate}
    \item \textit{Preliminary uniform consistency (Lemma~\ref{lemma:slow-rate}):} We first derive a uniform convergence of $\tilde S_F(z)$ in terms of the Frobenius norm. This ensures that the estimator eventually lies within a local neighborhood where the objective function is strongly convex.
    \item \textit{Rate refinement via local Taylor expansion (Lemma~\ref{lemma:Frobenius-rate}):} Inside this local neighborhood, we employ a Taylor expansion of the gradient $\nabla \tilde F_n$ to establish that the estimator's rate can be inherited from the parametric convergence rate of the gradient $\tilde G_n$ projected to $\Tcal$.
    \item \textit{Metric transition:} The Frobenius norm rate is then translated into the BW metric $\Bcal$, completing the proof of the uniform parametric rate for the correlation component $\Bcal(\tilde S_F(z), S_F(z))$.
    \item \textit{Marginal parametric rates and final overall NPT rate:} We lastly derive the same uniform parametric rate for the marginal components $d_W(\tilde\omega_{F,j}(z), \omega_{F,j}(z))$ through a sharper analysis than \citet{petersenFrechetRegressionRandom2019}, leading to the same parametric rate for the overall NPT rate $d_{NPT}(\tilde\omega_F(z),\omega_F(z))$. This concludes the proof of Theorem~\ref{thm:frechet-oracle-rate}.
\end{enumerate}

\paragraph{Step 1: Preliminary uniform consistency.}

We first establish the uniform separation of the minimizer (often \textit{assumed} in standard $M$-estimation theory) using the compactness principle and joint continuity of the objective function we will establish in Lemma~\ref{lemma:F-joint-continuity}.

\begin{lemma}\label{lemma:uniform-separation}
    Under Assumption~\ref{assump:correlation-uniqueness}, the map $z\mapsto S_F(z)$ is continuous on $\{z\in\R^p:\|z\|\le B\}$. Moreover, for every $\delta>0$ such that the constraint set below is nonempty,
    $$\inf_{\|z\|\le B}\,
        \inf_{\s\in\Ecal_d:\|\s-S_F(z)\|_F\ge\delta}
        \big\{F(z,\s)-F(z,S_F(z))\big\}>0.$$
\end{lemma}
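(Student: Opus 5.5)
The argument is a compactness argument resting on the joint continuity of $(z,\s)\mapsto F(z,\s)$ on $K\times\Ecal_d$, where $K=\{z:\|z\|\le B\}$. This continuity is the content of Lemma~\ref{lemma:F-joint-continuity}, which I would prove (or invoke) first. Write $F(z,\s)=\E[\Bcal^2(S,\s)]+(z-\E Z)^\top\cov(Z)^{-1}\E[(Z-\E Z)\Bcal^2(S,\s)]$. This expression is affine in $z$ with coefficients depending on $\s$. Continuity in $\s$ comes from Lemma~\ref{lemma:lipschitz-holder-bw}-1: since $\|S\|_{op}\le d$,
\[
|\Bcal^2(S,\s)-\Bcal^2(S,\s')|\le 2d^{5/4}\|\s-\s'\|_F^{1/2}.
\]
Taking expectations against $|s(Z,z)|$, which is integrable and uniformly bounded in $L^1$ over $z\in K$ given finite second moments of $Z$, gives equicontinuity in $\s$ uniformly over $z\in K$. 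Affinity in $z$ with bounded coefficients, using $\Bcal^2\le 2d$, gives Lipschitz continuity in $z$ uniformly over $\s$. Together these give joint (indeed uniform) continuity on the compact set $K\times\Ecal_d$.

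\textbf{Continuity of $z\mapsto S_F(z)$.} This is the standard argmin-continuity argument. Take $z_k\to z$ in $K$ and suppose $S_F(z_k)\not\to S_F(z)$. By compactness of $\Ecal_d$, pass to a subsequence with $S_F(z_k)\to\s^\ast\ne S_F(z)$. For every $\s$ we have $F(z_k,S_F(z_k))\le F(z_k,\s)$. Letting $k\to\infty$ and using joint continuity gives $F(z,\s^\ast)\le F(z,\s)$, so $\s^\ast$ is a minimizer. This contradicts Assumption~\ref{assump:correlation-uniqueness}.

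\textbf{Uniform separation.} Define
\[
g(z)=\inf\{F(z,\s)-F(z,S_F(z)):\s\in\Ecal_d,\ \|\s-S_F(z)\|_F\ge\delta\}.
\]
For each fixed $z$, this infimum is attained on a compact set that excludes the unique minimizer, so $g(z)>0$ whenever the constraint set is nonempty. Positivity must then be made uniform in $z$. Suppose instead that there are $z_k\in K$ and $\s_k$ with $\|\s_k-S_F(z_k)\|_F\ge\delta$ and $F(z_k,\s_k)-F(z_k,S_F(z_k))\to0$. Extract a subsequence with $z_k\to z$ and $\s_k\to\s^\ast$. By continuity of $S_F$, we get $S_F(z_k)\to S_F(z)$ and hence $\|\s^\ast-S_F(z)\|_F\ge\delta$. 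Joint continuity then gives $F(z,\s^\ast)=F(z,S_F(z))$, which again contradicts uniqueness.

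The step I expect to need the most care is the uniform-in-$z$ continuity of $F$. It must hold without minimum-eigenvalue control on $\s$ over all of $\Ecal_d$, which is exactly why the Hölder bound of Lemma~\ref{lemma:lipschitz-holder-bw}-1 is used rather than the smooth derivative bounds. One also has to check that the constraint set is nonempty at the relevant $z$; the statement presupposes this, and for small $\delta$ it holds at every $z$, since $\Ecal_d$ has diameter at least $2$ in the off-diagonals for $d\ge2$.
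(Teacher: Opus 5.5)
Your proposal is correct and follows essentially the same route as the paper. Both arguments get joint continuity of $F$ from the H\"older bound of Lemma~\ref{lemma:lipschitz-holder-bw}-1 (as in Lemma~\ref{lemma:F-joint-continuity}), continuity of $z\mapsto S_F(z)$ from the subsequence/uniqueness argument, and uniform separation from compactness. The one cosmetic difference is in the last step: the paper notes that $\mathcal{K}_\delta=\{(z,\s):\|z\|\le B,\ \|\s-S_F(z)\|_F\ge\delta\}$ is compact and that the continuous, strictly positive gap $F(z,\s)-F(z,S_F(z))$ attains a positive minimum on it, whereas you phrase that same compactness step as a sequential contradiction.
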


\begin{proof}
    By Lemma~\ref{lemma:F-joint-continuity}, $F$ is \textit{jointly continuous} on the compact product set $\{z\in\R^p:\|z\|\le B\}\times\Ecal_d$. This joint continuity and uniqueness Assumption~\ref{assump:correlation-uniqueness} imply that the \textit{regression function $z\mapsto S_F(z)$ is continuous} through the following standard compactness argument. 
    
    Fix any sequence $z_m\to z$. Since $\Ecal_d$ is compact, every subsequence of $\{S_F(z_m)\}$ has a further convergence subsequence $S_F(z_{m_k})\to S^*$. This subsequential limit $S^*$ has to minimize $F(z,\cdot)$ since $F(z_{m_k}, S_F(z_{m_k})) \le F(z_{m_k}, Q)$ for every $Q\in\Ecal_d$, which implies $F(z, S^*) \le F(z, Q), \ \forall Q\in\Ecal_d$ due to the \textit{joint continuity} of $F$. Since the minimizer is unique, it equals $S_F(z)$. Therefore, every subsequence of $\{S_F(z_m)\}$ has a convergent subsequence converging to the fixed point $S_F(z)$, implying that the entire sequence $\{S_F(z_m)\}$ also converges to $S_F(z)$.
    
    Hence,
    \[
        \mathcal K_\delta
        :=
        \big\{(z,\s)\in\R^p\times\Ecal_d: \|z\|\le B,\
        \|\s-S_F(z)\|_F\ge\delta\big\}
    \]
    is compact. The function
    \[
        (z,\s)\mapsto F(z,\s)-F(z,S_F(z))
    \]
    is continuous and strictly positive on $\mathcal K_\delta$ by uniqueness. It therefore attains a strictly positive minimum on $\mathcal K_\delta$.
\end{proof}

This separation in turn yields the following convergence result.

\begin{lemma}\label{lemma:slow-rate}
    Under Assumption~\ref{assump:correlation-uniqueness}, we have
    $$\sup_{\|z\|\le B}\|\tilde S_F(z) -  S_F(z)\|_F = o_p(1).$$
    Consequently, if Assumption~\ref{assump:correlation-mineigen} holds in addition, %
    $$\lim_{n\to\infty} \P \bigg( \inf_{\|z\|\le B}\lambda_{\min}(\tilde S_F(z)) \ge \frac{\lambda_0}{2}\bigg) = 1. $$
\end{lemma}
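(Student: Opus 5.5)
The plan is the standard argmin-consistency argument: combine the uniform separation from Lemma~\ref{lemma:uniform-separation} with a uniform law of large numbers for the empirical objective. The key intermediate claim is
$$\sup_{\|z\|\le B}\sup_{\s\in\Ecal_d}\big|\tilde F_n(z,\s)-F(z,\s)\big| = o_p(1).$$
To prove it, I will use the linear structure of the weights. I write
$$s_n(Z_i,z)=1+(Z_i-\overline Z)^\top\widehat{\cov}(Z)^{-1}(z-\overline Z),$$
and likewise for $s(Z,z)$. Expanding both objectives, each becomes a bilinear expression in $(1,z)$ built from sample averages of $\Bcal^2(S_i,\s)$ and $Z_i\Bcal^2(S_i,\s)$, together with $\overline Z$ and $\widehat{\cov}(Z)$. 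The latter two converge in probability by the LLN, assuming $\cov(Z)$ is invertible and $\E\|Z\|^2<\infty$ as in Petersen--M\"uller. Because $\|z\|\le B$, it then suffices to show that
$$\sup_{\s\in\Ecal_d}\Big\|\tfrac1n\sum_i (1,Z_i^\top)^\top\Bcal^2(S_i,\s)-\E[(1,Z^\top)^\top\Bcal^2(S,\s)]\Big\|=o_p(1).$$

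For this uniform LLN over the compact set $\Ecal_d$, I will use Lemma~\ref{lemma:lipschitz-holder-bw}-1. Since $\|S\|_{op}\le d$, it gives
$$|\Bcal^2(S,\s)-\Bcal^2(S,\s')|\le 2d^{5/4}\|\s-\s'\|_F^{1/2}$$
uniformly in $S$. The functions are also bounded, as $\Bcal^2\le 2d$. So the class is uniformly H\"older with an envelope $(1+\|Z\|)$ that is integrable. The argument is then a finite $\epsilon$-net of $\Ecal_d$ plus the pointwise LLN at each net point, with the oscillation controlled by $2d^{5/4}\epsilon^{1/2}\cdot\frac1n\sum_i(1+\|Z_i\|)=O_p(\epsilon^{1/2})$. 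Equivalently, it is a Glivenko--Cantelli bracketing argument. I expect this step to be the main technical point, though it is routine given compactness and the H\"older bound.

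Consistency then follows. Fix $\delta>0$ and let $\eta>0$ be the separation constant from Lemma~\ref{lemma:uniform-separation}. If $\|\tilde S_F(z)-S_F(z)\|_F\ge\delta$ for some $\|z\|\le B$, then
$$\eta\le F(z,\tilde S_F(z))-F(z,S_F(z))\le 2\sup|\tilde F_n-F| + \big[\tilde F_n(z,\tilde S_F(z))-\tilde F_n(z,S_F(z))\big],$$
and the bracket is $\le0$ because $\tilde S_F(z)$ minimizes $\tilde F_n(z,\cdot)$. So this event forces $\sup|\tilde F_n-F|\ge\eta/2$, and its probability tends to zero. If the constraint set in Lemma~\ref{lemma:uniform-separation} is empty, the claim is trivial for that $\delta$.

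For the eigenvalue statement, I use Weyl's inequality:
$$\lambda_{\min}(\tilde S_F(z))\ge\lambda_{\min}(S_F(z))-\|\tilde S_F(z)-S_F(z)\|_{op}\ge\lambda_0-\sup_z\|\tilde S_F(z)-S_F(z)\|_F,$$
where the last step uses Assumption~\ref{assump:correlation-mineigen}(ii). The supremum is below $\lambda_0/2$ with probability tending to one by the first part.
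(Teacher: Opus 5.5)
Your proposal is correct and follows the paper's proof: uniform convergence of $\tilde F_n$ to $F$, the uniform separation of Lemma~\ref{lemma:uniform-separation}, the basic argmin inequality $F(z,\tilde S_F(z))-F(z,S_F(z))\le 2\sup|\tilde F_n-F|$, and Weyl's inequality for the eigenvalue claim. The only difference is the uniform-convergence step. The paper cites its Lemma~\ref{lemma:F-uniform-convergence}, which uses a Donsker argument to get an $O_p(n^{-1/2})$ rate. You instead expand the linear weights and run an $\epsilon$-net Glivenko--Cantelli argument, proving only the weaker $o_p(1)$ statement, which is all consistency requires.
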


\begin{proof}
    The first convergence follows the standard $M$-estimation argument from the uniform convergence of objective functions, as elaborated in the following. Let 
    $$r_n = \sup_{(z, \s)\in\Theta} |\tilde F_n(z, \s) - F(z, \s)|,$$ 
    which is $O_p(n^{-1/2})$ by Lemma~\ref{lemma:F-uniform-convergence}. Since $\tilde S_F(z)$ minimizes $\tilde F_n(z, \cdot)$, we have $\tilde F_n(z, \tilde S_F(z)) \le \tilde F_n(z,  S_F(z))$. This implies
    \begin{align*}
        F(z, \tilde S_F(z)) - F(z,  S_F(z)) &= F(z, \tilde S_F(z)) - \tilde F_n(z, \tilde S_F(z)) + \tilde F_n(z, \tilde S_F(z)) - \tilde F_n(z,  S_F(z)) \\
        &\quad + \tilde F_n(z,  S_F(z)) - F(z,  S_F(z)) \\
        &\le r_n + 0 + r_n = 2r_n.
    \end{align*}
    Lemma~\ref{lemma:uniform-separation} and the rate $r_n = O_p(n^{-1/2})$ imply that, for any $\delta>0$, 
    $$2r_n < \inf_{\|z\|\le B} \inf_{\s:\|\s- S_F(z)\|_F \ge \delta} F(z, \s) - F(z,  S_F(z))$$
    on an event with probability approaching 1. On this event, it is impossible that
    $\|\tilde S_F(z)-S_F(z)\|_F\ge\delta$ for some $z$, hence
    $$\sup_{\|z\|\le B}\|\tilde S_F(z)-S_F(z)\|_F < \delta.$$
    Therefore, we have $\sup_{\|z\|\le B}\|\tilde S_F(z)-S_F(z)\|_F=o_p(1)$.

    For the second part, let $\lambda_0 > 0$ be the constant from Assumption~\ref{assump:correlation-mineigen}.
    By Weyl's inequality,
    $$ \lambda_{\min}(\tilde S_F(z)) \ge \lambda_{\min}( S_F(z)) - \|\tilde S_F(z) -  S_F(z)\|_{op} \ge \lambda_0 - \|\tilde S_F(z) -  S_F(z)\|_F. $$
    Since $\sup_{\|z\|\le B} \|\tilde S_F(z) -  S_F(z)\|_F = o_p(1)$, the right-hand side is eventually greater than $\lambda_0/2$ with probability approaching 1, uniformly in $z$, which concludes the proof.
\end{proof}

\paragraph{Step 2: Rate refinement via Taylor expansion.}

\begin{lemma}\label{lemma:Frobenius-rate}
    Under Assumptions \ref{assump:correlation-mineigen}--\ref{assump:local-convexity-radius}, we have
    $$\sup_{\|z\|\le B}\|\tilde S_F(z) -  S_F(z)\|_F = O_p(n^{-1/2}).$$
\end{lemma}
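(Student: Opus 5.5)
The plan is a second-order Taylor expansion of the empirical objective $\tilde F_n(z,\cdot)$ along the segment from $S_F(z)$ to $\tilde S_F(z)$ inside $\Ecal_d$. By Lemma~\ref{lemma:slow-rate}, with probability tending to one, $\inf_{\|z\|\le B}\lambda_{\min}(\tilde S_F(z))\ge\lambda_0/2$ and $\sup_z\|\tilde S_F(z)-S_F(z)\|_F\le\delta$ for any prescribed $\delta>0$. Since $\Ecal_d(\lambda_0/2)$ is convex, the whole segment $\s_t=S_F(z)+tM_z$, with $M_z:=\tilde S_F(z)-S_F(z)\in\Tcal$ and $t\in[0,1]$, stays in $\Ecal_d(\lambda_0/2)$. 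By Assumption~\ref{assump:correlation-mineigen}(i), every $S_i$ also lies in $\Ecal_d(\lambda_0/2)$, so Lemmas~\ref{lemma:derivative-bounds} and \ref{lemma:lipschitz-transport} apply with a single constant $C=C(d,\lambda_0)$.

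On that event, optimality of $\tilde S_F(z)$ gives $0\ge\tilde F_n(z,\tilde S_F(z))-\tilde F_n(z,S_F(z))$. Expanding the right-hand side with an integral-form remainder, I will write it as
\[
\langle\tilde G_n(z),M_z\rangle_F+\tfrac12\tilde H_n(z)[M_z,M_z]+R_n(z),
\]
where $R_n(z)$ is controlled by $\sup_{\s\in\Ecal_d(\lambda_0/2)}\|\tilde K_n(z,\s)\|_\Tcal$ times $\|M_z\|_F^3$. Assumption~\ref{assump:local-convexity-radius} gives $H(z)[M,M]\ge\kappa\|M\|_F^2$. So if I can show $\sup_z\|\tilde H_n(z)-H(z)\|_\Tcal=o_p(1)$ and $\sup_{z,\s}\|\tilde K_n(z,\s)\|_\Tcal=O_p(1)$, then choosing $\delta$ small (for instance $C\delta\le\kappa/8$) yields, with probability tending to one and uniformly in $z$,
\[
\tfrac{\kappa}{4}\|M_z\|_F^2\le|\langle P_\Tcal\tilde G_n(z),M_z\rangle_F|\le\|P_\Tcal\tilde G_n(z)\|_F\|M_z\|_F.
\]
Here $P_\Tcal$ zeroes the diagonal; replacing $\tilde G_n$ by its projection is harmless because $M_z\in\Tcal$. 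It follows that $\sup_z\|M_z\|_F\le(4/\kappa)\sup_z\|P_\Tcal\tilde G_n(z)\|_F$.

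It then remains to prove $\sup_{\|z\|\le B}\|P_\Tcal\tilde G_n(z)\|_F=O_p(n^{-1/2})$. First, $P_\Tcal G(z)=0$, since $S_F(z)$ is an interior minimizer (by Assumption~\ref{assump:correlation-mineigen}(ii)) of the smooth function $F(z,\cdot)$ over the affine constraint set. Next, I will write $s_n(Z_i,z)=1+(Z_i-\overline Z)^\top\widehat{\cov}(Z)^{-1}(z-\overline Z)$. This splits $\tilde G_n(z)$ into a finite number of sample means of the bounded matrices $(I_d-T_{S_F(z)}^{S_i})$ and $Z_i(I_d-T_{S_F(z)}^{S_i})$, multiplied by smooth functions of $z$ and of $\overline Z,\widehat{\cov}(Z)$; the latter converge at rate $n^{-1/2}$. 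The remaining $z$-dependence enters only through $S_F(z)$. I will therefore bound
\[
\sup_{\Sigma\in\Ecal_d(\lambda_0)}\Big\|\frac1n\sum_{i}\big(w_i(I_d-T_\Sigma^{S_i})-\E[w(I_d-T_\Sigma^S)]\big)\Big\|_F,
\]
with $w\in\{1,Z\}$, by a Lipschitz-indexed empirical process over the compact finite-dimensional set $\Ecal_d(\lambda_0)$. Lemma~\ref{lemma:lipschitz-transport} gives Lipschitzness in $\Sigma$ with envelope $C(1+\|Z\|)$, so a bracketing/Dudley bound yields $O_p(n^{-1/2})$ without logarithmic factors. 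I would assume $Z$ has a finite second moment, as is implicit in the paper. The Hessian and third-derivative bounds follow the same way, now using the second part of Lemma~\ref{lemma:lipschitz-transport} together with Lemma~\ref{lemma:derivative-bounds}.

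I expect the main obstacle to be the \emph{uniformity} of the gradient rate over $\|z\|\le B$. This step requires a clean decomposition of the random weights $s_n$ and an empirical process argument indexed by $\Sigma$, rather than by $z$ directly, so that the unknown continuous map $z\mapsto S_F(z)$ (Lemma~\ref{lemma:uniform-separation}) is absorbed by taking a supremum over $\Ecal_d(\lambda_0)$. The Taylor and strong convexity step is routine once the constants from Lemmas~\ref{lemma:derivative-bounds}--\ref{lemma:lipschitz-transport} are available on the convex compact set $\Ecal_d(\lambda_0/2)$.
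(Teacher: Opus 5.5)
Your proposal is correct and follows the paper's proof in all essentials: work on the high-probability event of Lemma~\ref{lemma:slow-rate}, Taylor-expand along the segment in the convex set $\Ecal_d(\lambda_0/2)$, lower-bound the Hessian using Assumption~\ref{assump:local-convexity-radius} plus uniform Hessian convergence, bound the cubic term, and finish with a Lipschitz-indexed empirical-process bound on $\Pi_\Tcal\tilde G_n(z)$ together with $\Pi_\Tcal G(z)=0$. The differences are minor. You expand the objective value using only $\tilde F_n(z,\tilde S_F(z))\le\tilde F_n(z,S_F(z))$, whereas the paper expands the gradient pairing and uses the first-order condition. You also expand $s_n$ into sample moments indexed by $\Sigma$ alone, whereas the paper passes through $\check G_n$ and a $(z,\Sigma,M)$-indexed class. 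One correction: the $n^{-1/2}$ rate you invoke for $\widehat{\cov}(Z)$ needs a finite fourth moment of $Z$, which is what the paper implicitly assumes; a finite second moment is not enough.
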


\begin{proof}
    Let $\Delta_n(z) = \tilde S_F(z) -  S_F(z) \in\Tcal$. We proceed on the event of Lemma~\ref{lemma:slow-rate}, where the uniform eigenvalue lower bound $\lambda_{\min}(\tilde S_F(z)) \ge \lambda_0/2$ holds with probability approaching one.
    
    Consider the scalar-valued function $g(\s) = \langle \nabla \tilde F_n(z, \s), \Delta_n(z) \rangle_F$ defined on $\Ecal_d$. Applying Taylor's theorem to $g(S)$ around $ S_F(z)$ yields
    \begin{align*}
        \langle \nabla \tilde F_n(z, \tilde S_F(z)), \Delta_n(z) \rangle_F &= g(\tilde S_F(z)) \\
        &= g( S_F(z)) + \nabla g( S_F(z))[\Delta_n(z)] + \frac{1}{2}\nabla^2 g(\overline{S}(z))[\Delta_n(z), \Delta_n(z)] \\
        &= \langle \nabla \tilde F_n(z,  S_F(z)), \Delta_n(z) \rangle_F + \nabla^2\tilde F_n(z,  S_F(z))[\Delta_n(z), \Delta_n(z)] \\
        &\quad + \frac{1}{2}\nabla^3 \tilde F_n(z, \overline{S}(z)) [\Delta_n(z), \Delta_n(z), \Delta_n(z)],
    \end{align*}
    where $\overline{S}(z)$ lies on the line segment connecting $ S_F(z)$ and $\tilde S_F(z)$. Since $\Ecal_d$ is convex, $\overline{S}(z) \in \Ecal_d$. Since $\tilde S_F(z)$ minimizes $\tilde F_n(z, \cdot)$ in the interior of $\Ecal_d$, it satisfies the first-order optimality condition $\langle \nabla \tilde F_n(z, \tilde S_F(z)), M \rangle_F = 0$ for all $M \in \Tcal$; i.e., the left-hand side of above is zero since $\Delta_n(z)\in\Tcal$.
    Rearranging the terms and using the notations $\tilde G_n, \tilde H_n, \tilde K_n$, we obtain
    \begin{equation}\label{eq:supp-pf-taylor-expansion}
        -\langle \tilde G_n(z), \Delta_n(z) \rangle_F = \tilde H_n(z)[\Delta_n(z), \Delta_n(z)] + \frac{1}{2} \tilde K_n(z, \overline{S}(z))[\Delta_n(z), \Delta_n(z), \Delta_n(z)].
    \end{equation}
    
    We bound each term in \eqref{eq:supp-pf-taylor-expansion} uniformly over $\|z\|\le B$ using convergence results in Section~\ref{sec:supp-uniform-convergences}.
    By Lemma~\ref{lemma:H-K-boundedness}, with probability approaching one, the empirical Hessian $\tilde H_n(z)$ satisfies
    \begin{equation}\label{eq:supp-pf-hessian-lower}
        \tilde H_n(z)[\Delta_n(z), \Delta_n(z)] \ge \frac{\kappa}{2} \|\Delta_n(z)\|_F^2,
    \end{equation}
    where $\kappa$ is from Assumption~\ref{assump:local-convexity-radius}.
    To obtain a bound of the cubic term $\tilde K_n$, recall that we are on the event on which $\lambda_{\min}(\tilde S_F(z)) \ge \lambda_0/2$ holds uniformly in $z$. Since $\overline{S}(z)$ is a convex combination of $\tilde S_F(z)$ and $ S_F(z)$ and since $\lambda_{\min}( S_F(z)) \ge \lambda_0$ uniformly by Assumption~\ref{assump:correlation-mineigen}, we also have $\lambda_{\min}(\overline{S}(z)) \ge \lambda_0/2$ on this joint event.
    Then, Lemma~\ref{lemma:H-K-boundedness} implies that 
    $$C_n := \sup_{\|z\|\le B}\|\tilde K_n(z, \overline{S}(z))\|_{\Tcal} = O_p(1).$$
    Thus,
    \begin{equation}\label{eq:supp-pf-third-upper}
        \left| \tilde K_n(z, \overline{S}(z))[\Delta_n(z), \Delta_n(z), \Delta_n(z)] \right| \le C_n \|\Delta_n(z)\|_F^3.
    \end{equation}
    Combining the bounds \eqref{eq:supp-pf-hessian-lower} and \eqref{eq:supp-pf-third-upper} with \eqref{eq:supp-pf-taylor-expansion}, we have
    \begin{align*}
        \frac{\kappa}{2} \|\Delta_n(z)\|_F^2 - \frac{C_n}{2} \|\Delta_n(z)\|_F^3 &\le \left| \langle \tilde G_n(z), \Delta_n(z) \rangle_F \right| \\
        &= \left| \langle \Pi_{\Tcal} \tilde G_n(z), \Delta_n(z) \rangle_F \right| \\
        &\le \|\Pi_{\Tcal} \tilde G_n(z)\|_F \|\Delta_n(z)\|_F
    \end{align*}
    with probability approaching one; here, $\Pi_\Tcal$ denotes the projection onto the tangent space $\Tcal$. This implies either $\|\Delta_n(z)\|_F = 0$ or
    $$ \frac{\kappa}{2} \|\Delta_n(z)\|_F - \frac{C_n}{2} \|\Delta_n(z)\|_F^2 \le \|\Pi_{\Tcal} \tilde G_n(z)\|_F. $$
    By Lemma~\ref{lemma:slow-rate}, $\sup_{\|z\|\le B}\|\Delta_n(z)\|_F = o_p(1)$. Since $C_n = O_p(1)$, we have $C_n\sup_{\|z\|\le B}\|\Delta_n(z)\|_F = o_p(1)$. Thus, for sufficiently large $n$, $\frac{C_n}{2} \|\Delta_n(z)\|_F \le \frac{\kappa}{4}$ uniformly over $\|z\|\le B$, implying that $\|\Delta_n(z)\|_F$ is controlled by the empirical gradient:
    $$ \frac{\kappa}{4} \|\Delta_n(z)\|_F \le \|\Pi_{\Tcal} \tilde G_n(z)\|_F. $$
    Finally, Lemma~\ref{lemma:G-uniform-convergence} states that $\sup_{\|z\|\le B} \|\Pi_{\Tcal} \tilde G_n(z)\|_F = O_p(n^{-1/2})$.
    Therefore,
    $$ \sup_{\|z\|\le B} \|\Delta_n(z)\|_F \le \frac{4}{\kappa} \sup_{\|z\|\le B} \|\Pi_{\Tcal} \tilde G_n(z)\|_F = O_p(n^{-1/2}), $$
    which completes the proof.
\end{proof}

\paragraph{Step 3: Convergence in BW metric.}

We then translate the Frobenius rate of Lemma~\ref{lemma:Frobenius-rate} into the BW metric, concluding the rate derivation for the latent correlation estimator $\tilde S_F(z)$.

By Lemma~\ref{lemma:slow-rate} and Assumption~\ref{assump:correlation-mineigen}, we have $\tilde S_F(z), S_F(z) \in \Ecal_d(\lambda_0 /2)$ uniformly over $\|z\|\le B$, on the event with probability approaching one. Then, on this event, the second part of Lemma~\ref{lemma:lipschitz-holder-bw} implies that there exists a constant $C_{\lambda_0} > 0$ depending only on $\lambda_0$ such that
$$ \Bcal(\tilde S_F(z),  S_F(z)) \le C_{\lambda_0} \|\tilde S_F(z) -  S_F(z)\|_F $$
uniformly over $\|z\|\le B$.
Combining this with Lemma~\ref{lemma:Frobenius-rate}, we conclude
$$ \sup_{\|z\|\le B} \Bcal(\tilde S_F(z),  S_F(z)) = O_p(n^{-1/2}). $$

\paragraph{Step 4: Marginal rates and the final overall rate in NPT. }
Finally, we complete the proof of Theorem~\ref{thm:frechet-oracle-rate} by establishing the parametric rates for the marginal components $d_W(\tilde \omega_{F, j}(z), \omega_{F, j}(z))$, $j=1,\dots,d$, uniformly over $\|z\|\le B$.

To this end, let $\tilde q_{F, j}(z)$ and $q_{F, j}(z)$ denote the corresponding quantile functions of $\tilde \omega_{F, j}(z)$ and $\omega_{F, j}(z)$, respectively. Recall that $q_{ij}$ denotes the quantile of the $j$th marginal $\omega_{ij}$ of the response $\omega_i$. We first perform Fr\'echet regression on the pairs $\{(Z_i, q_{ij})\}_{i=1}^n$, where the response metric space is the Hilbert space $L^2(0,1)$, to obtain a Fr\'echet regression estimator
$$\tilde q_{F, j}'(z) = \frac{1}{n}\sum_{i=1}^n s_n(Z_i, z)q_{ij}$$
and its population counterpart
$$q_{F, j}'(z) = \E[s(Z, z) q_j], $$
where $q_j$ is the quantile of the population marginal $\omega_j$ of $\omega$. These closed-form derivations rely crucially on the Hilbert norm $\|\cdot\|_{L^2}$ and satisfy
$$ \tilde q_{F, j}(z) = \Pi_{\Kcal} \tilde q_{F, j}'(z), \quad\text{and}\quad q_{F, j}(z) = \Pi_\Kcal q_{F, j}'(z), $$
where $\Pi_\Kcal: L^2(0,1)\to\Kcal$ denotes the $L^2$-projection map onto the quantile cone $\Kcal$; for detailed derivation, see, e.g., the proof of Theorem~1 of \citet{zhouWassersteinRegressionEmpirical2024}. 
Since $\Kcal$ is closed and convex in $L^2(0,1)$ \citep{bigotGeodesicPCAWasserstein2017}, we have
$$d_W(\tilde \omega_{F, j}(z), \omega_{F, j}(z)) = \|\tilde q_{F, j}(z)- q_{F, j}(z)\|_{L^2} \le \|\tilde q_{F, j}'(z)- q_{F, j}'(z)\|_{L^2} $$
due to the contraction property of the projection $\Pi_\Kcal$. Then, Corollary~2 of \citet{petersenFrechetRegressionRandom2019} implies
$$\sup_{\|z\|\le B} \|\tilde q_{F, j}'(z)- q_{F, j}'(z)\|_{L^2} = O_p(n^{-1/2}), $$
which deduces the desired uniform parametric rate
$$\sup_{\|z\|\le B} d_W(\tilde \omega_{F, j}(z), \omega_{F, j}(z)) = O_p(n^{-1/2}). $$

Lastly, these parametric rates of the marginal components $d_W(\tilde \omega_{F, j}(z), \omega_{F, j}(z))$ and the correlation $\Bcal(\tilde S_F(z), S_F(z))$, derived in Step 3, conclude the overall NPT rate
$$\sup_{\|z\|\le B} d_{NPT}(\tilde\omega_F(z), \omega_F(z)) = O_p(n^{-1/2}).$$
\qed

\subsection{Technical lemmas for the oracle analysis}\label{sec:supp-uniform-convergences}

Here, we establish various technical lemmas: continuity and uniform convergence results, with their consequences.

We first recall the uniform convergence of the global Fr\'echet regression weights $s_n(Z_i, z)$ to $s(Z, z)$, established in \citet{petersenFrechetRegressionRandom2019}. 
Write
\begin{align*}
    W_{0n}(z) &= -\overline{Z}^\top \widehat{\cov}(Z)^{-1}(z- \overline{Z}) + \E[Z]^\top \cov(Z)^{-1} (z- \E[Z]), \\
    W_{1n}(z) &= \widehat{\cov}(Z)^{-1}(z - \overline{Z}) - \cov(Z)^{-1}(z-\E[Z]),
\end{align*}
which yields the following decomposition
$$s_n(Z_i, z) - s(Z_i, z) = W_{0n}(z) + W_{1n}(z)^\top Z_i.$$
They implicitly assumed that $Z$ has finite fourth moment, $\cov(Z)^{-1}$ is well-defined, and $\|\cov(Z)^{-1}\|_{op} <\infty$. Then, the central limit theorem for $W_{0n}$ and $W_{1n}$ and the law of large numbers imply the following two uniform convergence/boundedness:
\begin{align}
    \sup_{\|z\|\le B} \frac{1}{n}\sum_{i=1}^n |s_n(Z_i, z) - s(Z_i, z)| &= O_p(n^{-1/2}), \quad\text{and}  \label{eq:weight-convergence} \\ 
    \sup_{\|z\|\le B} \frac{1}{n}\sum_{i=1}^n |s_n(Z_i, z)| &= O_p(1). \label{eq:weight-uniform-bound}
\end{align}
Using these results, we establish the \textit{joint continuity} of the population objective function.

\begin{lemma}\label{lemma:F-joint-continuity}
    The population objective
    $$F(z,\s)=\E\big[s(Z,z)\Bcal^2(S,\s)\big]$$
    is jointly continuous on $\{z\in\R^p:\|z\|\le B\}\times\Ecal_d$.
\end{lemma}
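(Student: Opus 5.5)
The plan is to control the increment $|F(z,\s)-F(z',\s')|$ by a triangle inequality that splits it into a $z$-increment and a $\s$-increment, and to bound each one separately using the explicit linear form of $s(Z,z)$ and the bounds on $\Bcal^2$ over $\Ecal_d$. Write
\begin{align*}
|F(z,\s)-F(z',\s')| &\le \big|\E[(s(Z,z)-s(Z,z'))\Bcal^2(S,\s)]\big| \\
&\quad + \big|\E[s(Z,z')(\Bcal^2(S,\s)-\Bcal^2(S,\s'))]\big|.
\end{align*}
Two basic facts drive the estimates. First, on the compact set $\Ecal_d$ the squared BW distance is uniformly bounded: since $\Tr[S]=\Tr[\s]=d$ and the cross trace term is nonnegative, we have $0\le \Bcal^2(S,\s)\le 2d$. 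Second, $\|S\|_{op}\le d$ holds for every correlation matrix $S$.

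For the $z$-increment, $s(Z,z)-s(Z,z') = (Z-\E[Z])^\top\cov(Z)^{-1}(z-z')$. Using the bound $\Bcal^2\le 2d$, the first term is therefore at most
$$2d\,\E\|Z-\E[Z]\|\,\|\cov(Z)^{-1}\|_{op}\,\|z-z'\|.$$
This is finite under the standing moment assumptions on $Z$ that are implicit in \eqref{eq:weight-convergence}, namely a finite second moment and an invertible covariance. The term is thus Lipschitz in $z$.

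For the $\s$-increment, I would apply Lemma~\ref{lemma:lipschitz-holder-bw}, part~1, which gives
$$|\Bcal^2(S,\s)-\Bcal^2(S,\s')|\le 2d^{3/4}\|S\|_{op}^{1/2}\|\s-\s'\|_F^{1/2}\le 2d^{5/4}\|\s-\s'\|_F^{1/2}.$$
This bound is deterministic and uniform in $S$. Consequently the second term is bounded by $2d^{5/4}\,\sup_{\|z'\|\le B}\E|s(Z,z')|\,\|\s-\s'\|_F^{1/2}$. It remains to check that $\sup_{\|z'\|\le B}\E|s(Z,z')|$ is finite, and it is: $\E|s(Z,z')|\le 1+\E\|Z-\E Z\|\,\|\cov(Z)^{-1}\|_{op}(B+\|\E Z\|)$.

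Combining the two estimates gives a joint modulus of continuity of the form $C_1\|z-z'\|+C_2\|\s-\s'\|_F^{1/2}$. This is in fact uniform continuity, which in particular yields joint continuity. No real obstacle arises. The only point requiring care is that the weight $s$ can be negative, so neither dominated convergence nor monotone arguments apply directly. The explicit H\"older bound of Lemma~\ref{lemma:lipschitz-holder-bw} sidesteps this, because it does not depend on any eigenvalue lower bound and hence remains valid on all of $\Ecal_d$, including singular matrices.
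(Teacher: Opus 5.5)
Your proof is correct and follows essentially the same route as the paper: the same split into a $z$-increment and a $\s$-increment, the bound $\Bcal^2\le 2d$, and Lemma~\ref{lemma:lipschitz-holder-bw} part~1 combined with $\|S\|_{op}\le d$, yielding the same modulus $C_1\|z-z'\|+C_2\|\s-\s'\|_F^{1/2}$ (the paper packages this as a pointwise envelope $M(Z)=C(1+\|Z\|)$ before taking expectations). One small correction to your closing aside: dominated convergence does not require nonnegative integrands, so the integrable bound $2d\,\bigl(1+\|Z-\E[Z]\|\,\|\cov(Z)^{-1}\|_{op}(B+\|\E[Z]\|)\bigr)$ together with pointwise continuity in $(z,\s)$ would also give joint continuity directly.
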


\begin{proof}
    We introduce the notation $f_{z,\s}(Z,S):=s(Z,z)\Bcal^2(S,\s)$ for later empirical process argument. For $(z,\s),(z',\s')\in\{z\in\R^p:\|z\|\le B\}\times\Ecal_d$,
    \begin{align*}
        |f_{z,\s}(Z,S)-f_{z',\s'}(Z,S)|
        &\le
        |s(Z,z)-s(Z,z')|\Bcal^2(S,\s)\\
        &\quad+
        |s(Z,z')|
        \big|\Bcal^2(S,\s)-\Bcal^2(S,\s')\big|.
    \end{align*}
    Since $\Bcal^2(S,\s)\le2d$ and
    \[
        |s(Z,z)-s(Z,z')|
        \le
        \|Z-\E[Z]\|\,
        \|\cov(Z)^{-1}\|_{op}\,
        \|z-z'\|,
    \]
    while Lemma~\ref{lemma:lipschitz-holder-bw} gives
    \[
        \big|\Bcal^2(S,\s)-\Bcal^2(S,\s')\big|
        \le 2d^{5/4}\|\s-\s'\|_F^{1/2},
    \]
    there exists $M(Z)=C(1+\|Z\|)$ with $C>0$ such that
    \begin{equation}\label{eq:objective-holder-envelope}
        |f_{z,\s}(Z,S)-f_{z',\s'}(Z,S)|
        \le
        M(Z)\left(
        \|z-z'\|+\|\s-\s'\|_F^{1/2}
        \right).
    \end{equation}
    Moreover, $\E[M(Z)]<\infty$. Taking expectations in~\eqref{eq:objective-holder-envelope} shows that
    \[
        |F(z,\s)-F(z',\s')|
        \le
        \E[M(Z)]
        \left(
        \|z-z'\|+\|\s-\s'\|_F^{1/2}
        \right),
    \]
    proving joint continuity.
\end{proof}

Next, we introduce intermediate quantities using the population weights to establish the required uniform convergence results:

\begin{align*}
    \check{F}_n(z, \s) &= \frac{1}{n}\sum_{i=1}^n s(Z_i, z)\Bcal^2(S_i, \s), & \check G_n(z) &= \frac{1}{n}\sum_{i=1}^n s(Z_i, z)\big(I_d - T_{ S_F(z)}^{S_i}\big), \\
    \check H_n(z) &= -\frac{1}{n}\sum_{i=1}^n s(Z_i, z) dT_{ S_F(z)}^{S_i}, & \check K_n(z,\s)&=  -\frac{1}{n}\sum_{i=1}^n s(Z_i, z)d^2T_{\s}^{S_i}.
\end{align*}
We prove the uniform convergence results for $\tilde F_n$, $\tilde G_n$, $\tilde H_n$, and $\tilde K_n$ and their consequences. 

\begin{lemma}\label{lemma:F-uniform-convergence}
    Let $\Theta = \{(z, \s) \in \R^p \times \Ecal_d: \|z\|\le B\}$. Then,
    $$\sup_{(z, \s)\in\Theta} \big|\tilde F_n (z, \s) - F(z, \s)\big| = O_p(n^{-1/2}).$$
\end{lemma}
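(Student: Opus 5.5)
We split the error through the intermediate objective $\check F_n$, which uses the population weights:
$$\sup_{\Theta}|\tilde F_n-F|\le \sup_{\Theta}|\tilde F_n-\check F_n|+\sup_{\Theta}|\check F_n-F|.$$

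\emph{First term.} Since $0\le \Bcal^2(S_i,\s)\le 2d$ for correlation matrices, we have
$$|\tilde F_n(z,\s)-\check F_n(z,\s)|\le 2d\,\frac{1}{n}\sum_{i=1}^n|s_n(Z_i,z)-s(Z_i,z)|.$$
The right-hand side does not depend on $\s$. By \eqref{eq:weight-convergence}, its supremum over $\|z\|\le B$ is $O_p(n^{-1/2})$.

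\emph{Second term.} This is an empirical process $\sup_{(z,\s)\in\Theta}|(P_n-P)f_{z,\s}|$ over the class $\{f_{z,\s}(Z,S)=s(Z,z)\Bcal^2(S,\s):(z,\s)\in\Theta\}$. The index set $\Theta$ is compact in $\R^p\times\Scal_d$ with dimension $p+d(d-1)/2$. By \eqref{eq:objective-holder-envelope}, the class satisfies
$$|f_{z,\s}-f_{z',\s'}|\le M(Z)\,\rho\big((z,\s),(z',\s')\big),\qquad \rho:=\|z-z'\|+\|\s-\s'\|_F^{1/2},$$
where $M(Z)=C(1+\|Z\|)$ has finite second moment because $Z$ has finite fourth moments. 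The class is therefore Lipschitz in a parameter with respect to the metric $\rho$. An $\epsilon$-cover of $\Theta$ in $\rho$ needs at most $C\epsilon^{-p}\cdot\epsilon^{-d(d-1)}$ points, since a $\rho$-ball of radius $\epsilon$ contains a Frobenius ball of radius $\epsilon^2$ in $\s$. By Theorem 2.7.11 of van der Vaart and Wellner, the $L^2(P)$ bracketing number of the class at scale $2\epsilon\|M\|_{L^2}$ is then polynomial in $1/\epsilon$. The bracketing entropy integral $\int_0^1\sqrt{\log N_{[\,]}(\epsilon)}\,d\epsilon$ is finite, since only logarithms of powers of $1/\epsilon$ appear.

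The class also has a square-integrable envelope: $|f_{z,\s}|\le 2d(1+\|Z-\E Z\|\|\cov(Z)^{-1}\|_{op}(B+\|\E Z\|))$. It is therefore $P$-Donsker, and the maximal inequality (Theorem 2.14.2 of van der Vaart and Wellner) gives
$$\E^*\sup_\Theta|\sqrt n(P_n-P)f_{z,\s}|=O(1),$$
so this term is $O_p(n^{-1/2})$.

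The main obstacle is the second step. The map $\s\mapsto\Bcal^2(S,\s)$ is only $1/2$-Hölder on all of $\Ecal_d$ (Lemma~\ref{lemma:lipschitz-holder-bw}), because Assumption~\ref{assump:correlation-mineigen} need not hold on the boundary. We cannot use Lipschitz smoothness and must instead work with the Hölder metric $\rho$. This only inflates the covering exponent, which is harmless for the entropy integral, and it avoids logarithmic factors because $\Theta$ is compact and finite-dimensional. Combining the two bounds proves the lemma.
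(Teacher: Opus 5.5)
Your proposal is correct and follows the paper's proof. Both split through $\check F_n$, control the weight-error term by $\Bcal^2\le 2d$ together with \eqref{eq:weight-convergence}, and handle the empirical-process term by a bracketing/Donsker argument based on the H\"older-in-parameter bound \eqref{eq:objective-holder-envelope}. Your use of the metric $\rho$ with Theorem 2.7.11 of van der Vaart and Wellner is just a more careful version of the paper's appeal to the Lipschitz-in-parameter example (Example 19.7 of van der Vaart).
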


\begin{proof}
    Write $\theta = (z, \s)$ for brevity.
    By the triangle inequality, $|\tilde F_n(\theta) - F(\theta)| \le |\tilde F_n(\theta) - \check F_n(\theta)| + |\check F_n(\theta) - F(\theta)|$.
    The first term is bounded by
    \begin{align*}
        \sup_{\theta\in\Theta}|\tilde F_n(\theta) - \check F_n(\theta)| &\le \sup_{(z, \s)\in\Theta} \frac{1}{n}\sum_{i=1}^n |s_n(Z_i, z) - s(Z_i, z)| \Bcal^2(S_i, \s),
    \end{align*}
    which is $O_p(n^{-1/2})$ by \eqref{eq:weight-convergence} and $\Bcal^2(S_i, \s) \le 2d$ for any $S_i, \s \in\Ecal_d$.
    
    For the second term, we follow the standard empirical process argument \citep[Section~19.2]{vaartAsymptoticStatistics2000}. Given the joint distribution $J= (Z, S)$, define the function class
    $$\Fcal = \{ f_{\theta}(J) = s(Z, z)\Bcal^2(S, \s): \theta=(z, \s) \in \Theta \}.$$ 
    We endow $\Theta$ with the metric $\|\theta - \theta'\|^2 = \|z - z'\|^2 + \|\s - \s'\|_F^2$ and show that $\Fcal$ is Donsker using the continuity argument in Lemma~\ref{lemma:F-joint-continuity}. For any $\theta = (z, \s), \theta' = (z', \s') \in \Theta$, the inequality \eqref{eq:objective-holder-envelope} yields the envelope
    $$|f_{\theta}(J) - f_{\theta'}(J)| \le M(Z) \|\theta - \theta'\|^{1/2},$$
    where $M(Z) = C (1 + \|Z\|)$ for some constant $C$, having a finite second moment. %
    Since $\Theta$ is a compact subset of a finite-dimensional Euclidean space (dimension $K$), the bracketing integral of $\Fcal$ is finite \citep[Example~19.7]{vaartAsymptoticStatistics2000}, implying that $\Fcal$ is a Donsker class. This concludes the uniform rate $\sup_{\Theta} |\check F_n - F| = O_p(n^{-1/2})$. %
\end{proof}

\begin{lemma}\label{lemma:G-uniform-convergence}
    Let $\Pi_\Tcal$ denote the orthogonal projection from $\Scal_d$ onto the tangent space $\Tcal = \{S\in\Scal_d: \diag(S) = 0\}$. Under Assumption~\ref{assump:correlation-mineigen}, we have
    $$\sup_{\|z\|\le B} \|\Pi_\Tcal \tilde G_n(z)\|_F = O_p(n^{-1/2})$$
\end{lemma}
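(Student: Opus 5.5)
The key observation is that the population gradient vanishes in the tangent directions. Since $S_F(z)$ minimizes $F(z,\cdot)$ over $\Ecal_d$ and lies in the interior $\Ecal_d(\lambda_0)\subset\Ecal_d^{++}$ (Assumption~\ref{assump:correlation-mineigen}), the first-order condition gives $\Pi_\Tcal G(z)=0$ for every $\|z\|\le B$. We use that differentiation under the expectation is permitted because $T_{S_F(z)}^{S}$ is uniformly bounded (Lemma~\ref{lemma:derivative-bounds}). Consequently $\Pi_\Tcal\tilde G_n(z)=\Pi_\Tcal(\tilde G_n(z)-G(z))$, and it suffices to bound $\sup_{\|z\|\le B}\|\tilde G_n(z)-G(z)\|_F$, because the projection is a contraction.

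We split this with the intermediate quantity $\check G_n$, so that $\tilde G_n-G=(\tilde G_n-\check G_n)+(\check G_n-G)$.

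\textbf{Weight-perturbation term.} We have $\|\tilde G_n(z)-\check G_n(z)\|_F\le \frac1n\sum_i|s_n(Z_i,z)-s(Z_i,z)|\,\|I_d-T_{S_F(z)}^{S_i}\|_F$. Since $S_i,S_F(z)\in\Ecal_d(\lambda_0)$, Lemma~\ref{lemma:derivative-bounds} bounds the transport factor by a constant $1+C$. By \eqref{eq:weight-convergence}, this term is then $O_p(n^{-1/2})$ uniformly in $z$.

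\textbf{Empirical-process term.} This is the main obstacle. The difficulty is that $z$ enters both through the weight and through $S_F(z)$, and the only regularity of $z\mapsto S_F(z)$ available so far is continuity (Lemma~\ref{lemma:uniform-separation}), not Lipschitzness. To avoid this, we index by $\theta=(z,\s)\in\Theta_0:=\{\|z\|\le B\}\times\Ecal_d(\lambda_0)$, which is compact. We consider the matrix-valued class $g_\theta(Z,S)=s(Z,z)(I_d-T_\s^S)$, entrywise, and bound $\sup_{\Theta_0}\|\frac1n\sum_i g_\theta(Z_i,S_i)-\E g_\theta\|_F$. Substituting $\s=S_F(z)$ then controls $\check G_n-G$, since $\check G_n(z)$ and $G(z)$ are exactly $g_{(z,S_F(z))}$ averaged and in expectation.

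To show each entry class is Donsker, we mimic the proof of Lemma~\ref{lemma:F-uniform-convergence}. We write
\[
\|g_\theta-g_{\theta'}\|_F\le |s(Z,z)-s(Z,z')|\,\|I_d-T_\s^S\|_F+|s(Z,z')|\,\|T_\s^S-T_{\s'}^S\|_F .
\]
Part 2 of Lemma~\ref{lemma:lipschitz-transport} (with $\lambda=\lambda_0$, using $S\in\Ecal_d(\lambda_0)$ a.s.) and the bound from Lemma~\ref{lemma:derivative-bounds} give an envelope $M(Z)\|\theta-\theta'\|$ with $M(Z)=C(1+\|Z\|)$, which is square integrable. Because $\Theta_0$ is a compact subset of Euclidean space, this Lipschitz-in-parameter property yields a finite bracketing integral (\citealt[Example~19.7]{vaartAsymptoticStatistics2000}). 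Each entry class is therefore Donsker, so the supremum is $O_p(n^{-1/2})$. Finitely many entries ($d$ fixed) then give $\sup_z\|\check G_n(z)-G(z)\|_F=O_p(n^{-1/2})$.

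Combining the two terms completes the proof.
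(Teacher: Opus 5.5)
Your proof is correct and follows essentially the same route as the paper. Both arguments use the $\check G_n$ split, the same weight-perturbation bound via \eqref{eq:weight-convergence} and Lemma~\ref{lemma:derivative-bounds}, and a Lipschitz-in-parameter bracketing (Donsker) argument over a compact index set containing $(z,S_F(z))$, together with $\Pi_\Tcal G(z)=0$; the only cosmetic difference is that you treat the $d^2$ entries of $s(Z,z)(I_d-T_\s^S)$ as separate classes indexed by $(z,\s)$, whereas the paper uses the scalar variational class $s(Z,z)\langle I_d-T_\s^S,M\rangle_F$ indexed by $(z,\s,M)$ with $\|M\|_F\le 1$, which is equivalent.
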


\begin{proof}
    By triangle inequality and using the fact $\|\Pi_\Tcal \|\le 1$,
    \begin{align*}
        \sup_{\|z\|\le B}\|\Pi_\Tcal \tilde G_n(z)\|_F &\le \sup_{\|z\|\le B} \|\Pi_\Tcal (\tilde G_n(z) -\check G_n(z))\|_F + \sup_{\|z\|\le B} \|\Pi_\Tcal \check G_n(z)\|_F \\
        &\le \underbrace{\sup_{\|z\|\le B} \|\tilde G_n(z) - \check G_n(z)\|_F}_{(i)} + \underbrace{\sup_{\|z\|\le B} \|\Pi_\Tcal \check G_n(z)\|_F}_{(ii)}.
    \end{align*}

    For part (i), we use the uniform bound $\|I_d - T_{ S_F(z)}^{S_i}\|_F \le C $ over $\|z\|\le B$ for some constant $C >0 $ dependent on $\lambda_0$ (Lemma~\ref{lemma:derivative-bounds}), where $\lambda_0$ is from Assumption~\ref{assump:correlation-mineigen}. By the weight convergence \eqref{eq:weight-convergence}, it holds that
    \begin{align*}
        \sup_{\|z\|\le B} \|\tilde G_n(z) - \check G_n(z)\|_F &\le \sup_{\|z\|\le B} \frac{1}{n}\sum_{i=1}^n |s_n(Z_i, z) - s(Z_i, z)| \|I_d - T_{ S_F(z)}^{S_i}\|_F = O_p(n^{-1/2}).
    \end{align*}

    For part (ii), recall the notations $J=(Z, S)$ and $J_i = (Z_i, S_i)$ and consider the stochastic process in the \textit{variational form}
    $$f_{(z, \s, M)}(J) = s(Z, z)\langle I_d - T_{\s}^{S}, M\rangle_F$$
    parametrized by
    $$ \Omega' = \{(z, \s, M)\in\R^p\times \Ecal_d(\lambda_0)\times \Tcal:\|z\|\le B,\ \|M\|_F \le 1 \} $$
    where $\Ecal_d(\lambda_0)$ is defined in \eqref{eq:corr-eigen-lowerbound}.
    Since $S_F(z)\in\Ecal_d(\lambda_0)$ uniformly over $\|z\|\le B$ (Assumption~\ref{assump:correlation-mineigen}), the function class $\{f_{(z,  S_F(z), M)}:\|z\|\le B,\, M\in\Tcal\}$ is a subclass of $\{f_{(z, \s, M)}:(z, \s, M)\in\Omega'\}$.
    For any two indices $(z, \s, M), (z', \s', M')\in\Omega'$, by the triangle inequality and Lemmas~\ref{lemma:derivative-bounds} and \ref{lemma:lipschitz-transport},
    \begin{align*}
        |f_{(z, \s, M)}(J) - f_{(z', \s', M')}(J)|
        &\le |s(Z, z) - s(Z, z')| |\langle I_d - T_{\s}^{S}, M\rangle_F| \\
        &\quad + |s(Z, z')| |\langle I_d - T_{\s}^{S}, M - M'\rangle_F| \\
        &\quad + |s(Z, z')| |\langle T_{\s}^{S} - T_{\s'}^{S}, M'\rangle_F| \\
        &\le M'(Z) (\|z - z'\| + \|\s - \s'\|_F + \|M - M'\|_F),
    \end{align*}
    where $M'(Z) = C'(1+\|Z\|)$ for some $C'>0$.
    As $\Omega'$ is finite-dimensional and compact, an analogous argument to Lemma~\ref{lemma:F-uniform-convergence} implies that the class $\{f_{(z, \s, M)}:(z, \s, M)\in\Omega'\}$ is Donsker. Restricting the resulting uniform convergence to the subclass $\{f_{(z,  S_F(z), M)}:\|z\|\le B,\, M\in\Tcal\}$ and using the optimality condition $\Pi_\Tcal G(z) = 0 $, we have 
    \begin{align*}
        \sup_{\|z\|\le B} \|\Pi_\Tcal \check G_n(z)\|_F &= \sup_{\|z\|\le B} \sup_{M\in\Tcal,\,\|M\|_F\le 1} \big|\langle \check G_n(z) - G(z), M\rangle_F \big|\\
        &= \sup_{\|z\|\le B} \sup_{M\in\Tcal,\,\|M\|_F\le 1} \Bigg| \frac{1}{n}\sum_{i=1}^n f_{(z, S_F(z), M)}(J_i) - f_{(z, S_F(z), M)}(J) \Bigg| 
        = O_p(n^{-1/2}).
    \end{align*}
\end{proof}

\begin{lemma}\label{lemma:H-K-uniform-convergence}
    Under Assumption~\ref{assump:correlation-mineigen}, we have
    $$\sup_{\|z\|\le B} \|\tilde H_n(z) - H(z)\|_{\Tcal} = O_p(n^{-1/2}),$$
    and for any $\lambda > 0$,
    $$\sup_{\|z\|\le B,\,  \s\in\Ecal_d(\lambda)} \|\tilde K_n(z, \s) - K(z, \s)\|_{\Tcal} = O_p(n^{-1/2}),$$
    where $\Ecal_d(\lambda)$ is defined in \eqref{eq:corr-eigen-lowerbound}.
\end{lemma}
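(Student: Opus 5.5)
The argument mirrors the two-step decomposition used for Lemmas~\ref{lemma:F-uniform-convergence} and \ref{lemma:G-uniform-convergence}. For the Hessian, split
$$\tilde H_n(z)-H(z) = \big(\tilde H_n(z)-\check H_n(z)\big) + \big(\check H_n(z)-H(z)\big).$$

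\textbf{Weight-perturbation term.} By Assumption~\ref{assump:correlation-mineigen}, both $S_i$ and $S_F(z)$ lie in $\Ecal_d(\lambda_0)$. Lemma~\ref{lemma:derivative-bounds} then gives $\|dT_{S_F(z)}^{S_i}\|_{\Tcal}\le C$ uniformly in $i$ and $z$. Hence
$$\sup_{\|z\|\le B}\|\tilde H_n(z)-\check H_n(z)\|_{\Tcal}\le C\sup_{\|z\|\le B}\frac1n\sum_{i=1}^n|s_n(Z_i,z)-s(Z_i,z)| = O_p(n^{-1/2})$$
by \eqref{eq:weight-convergence}.

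\textbf{Centered empirical term.} Pass to the variational (polar-norm) form. Define
$$f_{(z,\s,M)}(J)=s(Z,z)\,\langle dT_{\s}^{S}[M],M\rangle_F,$$
indexed by the compact finite-dimensional set
$$\Omega''=\{(z,\s,M):\|z\|\le B,\ \s\in\Ecal_d(\lambda_0),\ M\in\Tcal,\ \|M\|_F\le1\}.$$
Then $\|\check H_n(z)-H(z)\|_{\Tcal}$ is the supremum over $M$ of $|(\mathbb{P}_n-P)f_{(z,S_F(z),M)}|$.

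To show this class is Donsker, derive a Lipschitz-in-parameter envelope via the triangle inequality:
\begin{itemize}
\item the $z$-increment is controlled by $|s(Z,z)-s(Z,z')|\le C\|Z-\E Z\|\,\|z-z'\|$ times the bound from Lemma~\ref{lemma:derivative-bounds};
\item the $\s$-increment is controlled by the bound $\|dT_{\s}^S-dT_{\s'}^S\|_{\Tcal}\le C\|\s-\s'\|_F$ from Lemma~\ref{lemma:lipschitz-transport};
\item the $M$-increment uses bilinearity: $\langle dT[M],M\rangle-\langle dT[M'],M'\rangle$ is bounded by $2C\|M-M'\|_F$, using symmetry or the operator/polar norm equivalence \eqref{eq:polar-norm-equivalence}.
\end{itemize}
This gives an envelope $M(Z)=C(1+\|Z\|)$ with finite second moment. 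Example~19.7 of van der Vaart then yields a finite bracketing integral, so the class is Donsker. The supremum of the centered empirical process is therefore $O_p(n^{-1/2})$, and restricting to $\s=S_F(z)$ finishes the Hessian claim.

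\textbf{Third-order term.} The same template applies to $\tilde K_n$, now with $\s$ ranging freely over $\Ecal_d(\lambda)$ and $f=s(Z,z)\langle d^2T_\s^S[M,M],M\rangle_F$. Set $\lambda'=\min(\lambda,\lambda_0)$ so that both $\s$ and $S$ lie in $\Ecal_d(\lambda')$. Lemmas~\ref{lemma:derivative-bounds}--\ref{lemma:lipschitz-transport} (boundedness of $d^2T$ and Lipschitzness in $\s$) give the envelope, and the cubic dependence on $M$ is Lipschitz on the unit ball with constant $3C$. The weight-perturbation term is handled exactly as before. No restriction step is needed here, since the supremum is taken over all of $\Ecal_d(\lambda)$.

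The main obstacle is conceptual rather than computational. It is ensuring that the supremum over $\Tcal$-directions with polar norm truly reduces to the scalar process indexed by $M$ with a uniform Lipschitz envelope. This holds because $\|\cdot\|_{\Tcal}$ is defined exactly as the supremum over unit $M$ of $|L[M,\dots,M]|$. The remaining difficulty is checking that the constants in Lemmas~\ref{lemma:derivative-bounds}--\ref{lemma:lipschitz-transport} depend only on $(d,\lambda')$, so that the envelope is uniform over the random $S$.
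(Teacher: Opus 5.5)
Your proposal is correct and follows essentially the same route as the paper. Like the paper, you split off the weight-perturbation term using \eqref{eq:weight-convergence} and Lemma~\ref{lemma:derivative-bounds}, then control the centered term through the variational (polar-norm) empirical process indexed by the compact set $\{\|z\|\le B\}\times\Ecal_d(\cdot)\times\{M\in\Tcal:\|M\|_F\le 1\}$ with Lipschitz envelope $C(1+\|Z\|)$ from Lemmas~\ref{lemma:derivative-bounds}--\ref{lemma:lipschitz-transport}, restricting to $\s=S_F(z)$ for the Hessian and using $\lambda\wedge\lambda_0$ for the third-order term. Your explicit multilinearity bound for the $M$-increment only fills in a detail that the paper leaves as ``analogous to Lemma~\ref{lemma:G-uniform-convergence}.''
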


\begin{proof}
    As ideas parallel to those in the proof of Lemma~\ref{lemma:G-uniform-convergence}, we sketch the main steps. For the Hessian,
    \begin{align*}
        \sup_{\|z\|\le B} \|\tilde H_n(z) - H(z)\|_{\Tcal}
        &\le \sup_{\|z\|\le B} \|\tilde H_n(z) - \check H_n(z)\|_{\Tcal} + \sup_{\|z\|\le B} \|\check H_n(z) - H(z)\|_{\Tcal}.
    \end{align*}
    The first term is bounded by Lemma~\ref{lemma:derivative-bounds} and \eqref{eq:weight-convergence}.

    For the second term, we similarly consider the stochastic process in the variational form
    $$g_{(z, \s, M)}(J) = -s(Z, z) dT_{\s}^{S}[M, M]$$
    over the compact set $\{(z, \s, M)\in\R^p\times \Ecal_d(\lambda_0)\times \Tcal: \|z\|\le B,\, \|M\|_F \le 1\}$. By the analogous argument to Lemma~\ref{lemma:G-uniform-convergence}, the bounds in Lemmas~\ref{lemma:derivative-bounds} and \ref{lemma:lipschitz-transport} imply that the process $g_{(z, \s, M)}(J)$ satisfies the Lipschitz property, proving that the class is Donsker. This yields the convergence
    $$\sup_{\|z\|\le B} \|\check H_n(z) - H(z)\|_{\Tcal} = \sup_{\|z\|\le B}\sup_{M\in\Tcal,\,\|M\|_F\le 1}  (\check H_n(z) - H(z))[M, M] = O_p(n^{-1/2}).$$

    For the third differential, decompose
    \begin{align*}
        \sup_{\|z\|\le B,\,  \s\in\Ecal_d(\lambda)} \|\tilde K_n(z, \s) - K(z, \s)\|_{\Tcal}
        &\le \sup_{\|z\|\le B,\,  \s\in\Ecal_d(\lambda)} \|\tilde K_n(z, \s) - \check K_n(z, \s)\|_{\Tcal} \\
        &\qquad + \sup_{\|z\|\le B,\,  \s\in\Ecal_d(\lambda)} \|\check K_n(z, \s) - K(z, \s)\|_{\Tcal}.
    \end{align*}
    The first term is $O_p(n^{-1/2})$ by \eqref{eq:weight-convergence} and Lemma~\ref{lemma:derivative-bounds} (applied with the eigenvalue lower bound $\lambda\wedge\lambda_0$). For the second term, consider the variational form
    $$h_{(z, \s, M)}(J) = -s(Z, z) d^2T_{\s}^{S}[M, M, M]$$
    over the compact set $\{(z, \s, M)\in\R^p\times \Ecal_d(\lambda)\times \Tcal: \|z\|\le B,\, \|M\|_F \le 1\}$. By Lemmas~\ref{lemma:derivative-bounds} and \ref{lemma:lipschitz-transport}, this class is Donsker, which yields the desired $O_p(n^{-1/2})$ rate.
\end{proof}

\begin{lemma}\label{lemma:H-K-boundedness}
    Under Assumptions~\ref{assump:correlation-mineigen} and \ref{assump:local-convexity-radius}, we have
    $$\lim_{n\to\infty} \P \bigg(\inf_{\|z\|\le B} \inf_{M\in\Tcal,\, \|M\|_F= 1} \tilde H_n(z)[M, M] \ge \frac{\kappa}{2}\bigg) = 1. $$
    Furthermore, for any $\lambda > 0$,
    $$\sup_{\|z\|\le B,\,  \s\in\Ecal_d(\lambda)} \|\tilde K_n(z, \s)\|_{\Tcal}  = O_p(1).$$
\end{lemma}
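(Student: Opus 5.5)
Both claims follow from Lemma~\ref{lemma:H-K-uniform-convergence}: compare the empirical operators with their population counterparts, which are controlled by Assumption~\ref{assump:local-convexity-radius} and by compactness respectively.

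\emph{Hessian lower bound.} For any $\|z\|\le B$ and $M\in\Tcal$ with $\|M\|_F=1$ we can write
\[
\tilde H_n(z)[M,M] = H(z)[M,M] + (\tilde H_n(z)-H(z))[M,M] \ge \kappa - \sup_{\|z\|\le B}\|\tilde H_n(z)-H(z)\|_{\Tcal}.
\]
The first term is at least $\kappa$ by Assumption~\ref{assump:local-convexity-radius}. The second is bounded below by minus the polar norm, which is exactly the quantity controlled in the first part of Lemma~\ref{lemma:H-K-uniform-convergence}. Note that $H(z)$ is the Hessian of $F(z,\cdot)$ at $S_F(z)$ restricted to $\Tcal$, namely $-\E[s(Z,z)\,dT_{S_F(z)}^S]$. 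This identification needs differentiation under the expectation. That interchange is justified by the uniform bounds of Lemma~\ref{lemma:derivative-bounds} on $\Ecal_d(\lambda_0)$ together with the integrable envelope $|s(Z,z)|\le C(1+\|Z\|)$.

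Since the sup is $O_p(n^{-1/2})=o_p(1)$, the event $\{\sup_z\|\tilde H_n(z)-H(z)\|_\Tcal\le \kappa/2\}$ has probability tending to one. On that event the infimum is at least $\kappa/2$.

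\emph{Third-order term.} By the triangle inequality,
\[
\|\tilde K_n(z,\s)\|_\Tcal\le \|K(z,\s)\|_\Tcal+\|\tilde K_n(z,\s)-K(z,\s)\|_\Tcal .
\]
The second term is $O_p(n^{-1/2})$ uniformly over $\|z\|\le B$ and $\s\in\Ecal_d(\lambda)$, by Lemma~\ref{lemma:H-K-uniform-convergence}. For the first term, Lemma~\ref{lemma:derivative-bounds}, applied with the eigenvalue bound $\lambda\wedge\lambda_0$ (valid since $S\succeq\lambda_0 I_d$ a.s.), gives $\|d^2T_\s^S[M,M]\|_F\le C\|M\|_F^2$. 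Hence $\|K(z,\s)\|_\Tcal\le C\,\E|s(Z,z)|\le C\,(1+\E\|Z-\E Z\|\,\|\cov(Z)^{-1}\|_{op}(B+\|\E Z\|))$. This is a deterministic constant uniform over the compact index set. If $\lambda>1$ the set $\Ecal_d(\lambda)$ is empty (or trivial), so there is nothing to prove.

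The only mild obstacle is the polar-norm bookkeeping. The bound from Lemma~\ref{lemma:derivative-bounds} is for the quadratic map $M\mapsto d^2T[M,M]$. Pairing it with $M$ via Cauchy--Schwarz gives the cubic form bound $|K[M,M,M]|\le C\|M\|_F^3$, which is the polar norm $\|\cdot\|_\Tcal$ by definition. So no use of \eqref{eq:polar-norm-equivalence} is needed.
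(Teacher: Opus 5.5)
Your proposal is correct and follows the paper's argument. The Hessian lower bound comes from Assumption~\ref{assump:local-convexity-radius} together with the uniform $o_p(1)$ convergence in Lemma~\ref{lemma:H-K-uniform-convergence}. The cubic term is handled by the same triangle inequality.

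The one difference is in bounding the population term. You bound $\|K(z,\s)\|_{\Tcal}\le C\,\E|s(Z,z)|$ directly from Lemma~\ref{lemma:derivative-bounds}, whereas the paper invokes continuity of $K$ on a compact set; your route is slightly more explicit. Your remark identifying $H(z)=-\E[s(Z,z)\,dT_{S_F(z)}^S]$ with the Hessian in Assumption~\ref{assump:local-convexity-radius} also fills in a step the paper leaves implicit. For that interchange of derivative and expectation, use the bounds on a neighborhood such as $\Ecal_d(\lambda_0/2)$, since $S_F(z)$ may lie on the boundary of $\Ecal_d(\lambda_0)$.
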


\begin{proof}
    By Assumption~\ref{assump:local-convexity-radius}, the population Hessian satisfies $H(z)[M, M] \ge \kappa \|M\|_F^2$ for all $M\in\Tcal$ and all $\|z\|\le B$.
    Using the uniform convergence result in Lemma~\ref{lemma:H-K-uniform-convergence}, we have
    \begin{align*}
        \tilde H_n(z)[M, M] &= H(z)[M, M] + (\tilde H_n(z) - H(z))[M, M] \\
        &\ge \kappa \|M\|_F^2 - \|\tilde H_n(z) - H(z)\|_{\Tcal} \|M\|_F^2 \\
        &\ge (\kappa - o_p(1)) \|M\|_F^2
    \end{align*}
    uniformly over $\|z\|\le B$ and $M\in\Tcal$. This proves the first claim.
    
    For the second claim, we use the triangle inequality:
    $$\|\tilde K_n(z, \s)\|_{\Tcal} \le \|K(z, \s)\|_{\Tcal} + \|\tilde K_n(z, \s) - K(z, \s)\|_{\Tcal}.$$
    The second term is $O_p(n^{-1/2})$ by Lemma~\ref{lemma:H-K-uniform-convergence}.
    The first term is uniformly bounded because the function $(z, \s) \mapsto K(z, \s)$ is continuous (Lemma~\ref{lemma:transport-smoothness}) on the compact set $\{(z, \s): \|z\|\le B,  \s\in\Ecal_d(\lambda)\}$.
\end{proof}

\section{Proof of the empirical-response case and the Wasserstein rate}\label{sec:supp-empirical-rate-proof}

In this section, we prove Theorem~\ref{thm:frechet-real-rate} in the case responses $\hat\omega_i$ are empirically estimated from $N_i$ samples from each $\omega_i$. 

By the triangle inequality and the oracle result (Theorem~\ref{thm:frechet-oracle-rate}),
$$d_{NPT}(\hat\omega_F(z), \omega_F(z)) \le d_{NPT}(\hat\omega_F(z), \tilde\omega_F(z)) + d_{NPT}(\tilde\omega_F(z), \omega_F(z)), $$
it remains to control the sampling error within the discrepancy $d_{NPT}(\hat\omega_F(z), \tilde\omega_F(z))$. Its marginal components $d_W(\hat\omega_{F, j}(z), \tilde\omega_{F,j}(z))$ achieves the $O_p(r_N)$ rate as proved in \citet{zhouWassersteinRegressionEmpirical2024}; in their paper, the authors expressed the rate as $O_p(N^{-1/4})$ under the compact support condition, while here we use the general expression $O_p(r_N)$ (the same argument obtains this). Therefore, it remains to control the correlation component $\Bcal^2(\hat S_F(z), \tilde S_F(z))$.

To this end, we control the gradient of the empirical objective function $\hat F_n(z, \s)$ analogously to the oracle case. We recall the notations for the objective function and its minimizer, as well as introduce notations for the gradient and the Hessian:
\begin{align*}
    \hat F_n(z, \s) &= \frac{1}{n}\sum_{i=1}^n s_n(Z_i, z) \Bcal^2(\hat{S}_i, \s), & \hat S_F(z) &= \argmin_{ \s\in\Ecal_d} \hat F_n(z, \s), \\
    \hat G_n(z, \s) &= \frac{1}{n}\sum_{i=1}^n s_n(Z_i, z)\big(I_d - T_{\s}^{\hat{S}_i}\big), & \hat H_n(z, \s) &=  -\frac{1}{n}\sum_{i=1}^n s_n(Z_i, z) dT_{\s}^{\hat{S}_i}.
\end{align*}
Our proof relies on a technical \textit{perturbation argument for the Hessian} $\hat H_n$ (Lemma~\ref{lemma:uniform-pd-hat}) and \textit{convexity of the set} $\Ecal_d$ (Section~\ref{sec:supp-real-case-proof}) to bound $\Bcal^2(\hat S_F(z), \tilde S_F(z))$ in terms of $\hat G_n$ and $\tilde G_n$. We then derive the gradient $O_p(N^{-1/2})$ rate using differential properties in Section~\ref{sec:supp-diff-bw-prop} and Cauchy--Schwarz inequality, combined with Lemma~\ref{lemma:supp-weight-and-correlation-mse}. The rest of the details of the proof are given in Section~\ref{sec:supp-real-case-proof}, and necessary technical lemmas are deferred to Section~\ref{sec:supp-realcase-lemmas}.

\subsection{Proof of Theorem~\ref{thm:frechet-real-rate}}\label{sec:supp-real-case-proof}

We bound the oracle--empirical discrepancy $\Delta_n(z) = \hat S_F(z) - \tilde S_F(z)$ in the Frobenius norm, then translate it into the BW metric.

Differentiating the empirical functional $\hat F_n(z, \cdot)$, consider the scalar-valued function 
$$h(\s) = \big\langle \nabla_\s \hat F_n(z, \s), \Delta_n(z) \big\rangle_F = \big\langle \hat G_n(z, \s), \Delta_n(z)\big\rangle_F.$$
Lemma~\ref{lemma:consistency-hat} implies that the event $\{ \inf_{\|z\|\le B}\lambda_{\min}(\hat S_F(z))\ge \lambda_0/2\}$ holds with probability approaching one. Proceeding on this event, since $\hat S_F(z)$ minimizes $\hat F_n(z, \cdot)$ over $\Ecal_d$ and $\Delta_n(z)\in\Tcal$, we have $h(\hat S_F(z)) = 0$ by the optimality condition. Applying the mean value theorem to $h(\s)$ around $\tilde S_F(z)$, there exists an intermediate point $\overline{S}(z)$ on the line segment connecting $\hat S_F(z)$ and $\tilde S_F(z)$ such that
\begin{equation}\label{eq:empirical-response-Taylor}
    0 = h(\hat S_F(z)) = \langle \hat G_n(z, \tilde S_F(z)), \Delta_n(z) \rangle_F + \hat H_n(z, \overline{S}(z))[\Delta_n(z), \Delta_n(z)]. 
\end{equation}
Here, it is crucial to note that $\overline S(z)$ is contained in the interior of $\Ecal_d$ because the domain is \textit{convex}.

Define the gradient of the oracle objective $\tilde F_n$ by
$$\tilde G_n(z, \s) := \nabla_\s \tilde F_n(z, \s) = \frac{1}{n}\sum_{i=1}^n s_n(Z_i, z)\big(I_d - T_{\s}^{S_i}\big).$$
Since $\tilde S_F(z)$ minimizes $\tilde F_n(z, \cdot)$, we also have $\langle \tilde G_n(z, \tilde S_F(z)), \Delta_n(z) \rangle_F = 0$ on the event with probability approaching 1 in Lemma~\ref{lemma:slow-rate}, which also ensures $\tilde S_F(z)\succ 0$ with high probability. Subtracting this from the right-hand side of \eqref{eq:empirical-response-Taylor} and rearranging the terms, we obtain
$$ - \langle \hat G_n(z, \tilde S_F(z)) - \tilde G_n(z, \tilde S_F(z)), \Delta_n(z) \rangle_F = \hat H_n(z, \overline{S}(z))[\Delta_n(z), \Delta_n(z)]. $$

To bound the right-hand side, we need the uniform positive definiteness of the empirical Hessian $\hat H_n(z, \cdot)$ at $\overline{S}(z)$.
First, Lemma~\ref{lemma:consistency-hat} establishes the consistency $\sup_{\|z\|\le B} \|\hat S_F(z) -  S_F(z)\|_F = o_p(1)$. Since $\tilde S_F(z)$ is also uniformly consistent, the intermediate point $\overline{S}(z)$ converges uniformly to $S_F(z)$.
Then, the perturbation argument Lemma~\ref{lemma:uniform-pd-hat} guarantees that with probability approaching one,
$$ \inf_{\|z\|\le B} \hat H_n(z, \overline{S}(z))[M, M] \ge \frac{\kappa}{4}\|M\|_F^2 $$
for all $M\in\Tcal$. Using this lower bound and the Cauchy--Schwarz inequality,
$$ \frac{\kappa}{4} \|\Delta_n(z)\|_F^2 \le \|\hat G_n(z, \tilde S_F(z)) - \tilde G_n(z, \tilde S_F(z))\|_F \|\Delta_n(z)\|_F, $$
which implies either $\|\Delta_n(z)\|_F= 0$ or 
$$ \frac{\kappa}{4} \|\Delta_n(z)\|_F \le \|\hat G_n(z, \tilde S_F(z)) - \tilde G_n(z, \tilde S_F(z))\|_F. $$

Therefore, it remains to bound this difference between the gradients. By Lemma~\ref{lemma:slow-rate}, the event $\{ \inf_{\|z\|\le B}\lambda_{\min}(\tilde S_F(z))\ge \lambda_0/2\}$ holds with probability approaching one. Also, by Lemma~\ref{lemma:supp-empirical-corrlowerbound}, the event $\{\inf_i \lambda_{\min}(\hat S_i) \ge \lambda_0/2\}$ holds with probability approaching one as $n\to \infty$ under Assumption~\ref{assump:observation-sizes}. On this joint event, Lemma~\ref{lemma:lipschitz-transport} implies that the optimal transport map $Q \mapsto T_{\tilde S_F(z)}^Q$ is $C$-Lipschitz continuous with respect to the Frobenius norm, whenever $Q\in\Ecal_d(\lambda_0/2)$. Using this Lipschitz property and applying the Cauchy--Schwarz inequality, we have uniformly over $\|z\|\le B$:
\begin{align*}
    \|\hat G_n(z, \tilde S_F(z)) - \tilde G_n(z, \tilde S_F(z))\|_F &\le \frac{1}{n}\sum_{i=1}^n |s_n(Z_i, z)| \|T_{\tilde S_F(z)}^{\hat{S}_i} - T_{\tilde S_F(z)}^{S_i}\|_F \\
    &\le \frac{C}{n}\sum_{i=1}^n |s_n(Z_i, z)| \|\hat{S}_i -  S_i\|_F \\
    &\le C \left(\frac{1}{n}\sum_{i=1}^n s_n(Z_i, z)^2\right)^{1/2} \left(\frac{1}{n}\sum_{i=1}^n \|\hat{S}_i -  S_i\|_F^2\right)^{1/2}.
\end{align*}
By Lemma~\ref{lemma:supp-weight-and-correlation-mse}, the right-hand side is uniformly $O_p(N^{-1/2})$, and thus
$$ \sup_{\|z\|\le B} \|\hat G_n(z, \tilde S_F(z)) - \tilde G_n(z, \tilde S_F(z))\|_F = O_p(N^{-1/2}). $$
Consequently, the convergence rate of $\Delta_n(z)$ satisfies:
$$\sup_{\|z\|\le B} \|\hat S_F(z) - \tilde S_F(z)\|_F = O_p(N^{-1/2}),$$
which yields the combined rate
$$\sup_{\|z\|\le B} \|\hat S_F(z) -  S_F(z)\|_F = O_p(n^{-1/2} + N^{-1/2}).$$
As in the proof of Theorem~\ref{thm:frechet-oracle-rate}, this Frobenius rate translates into the BW metric since $\hat S_F(z)$ eventually lies in the interior of $\Ecal_d$:
$$\sup_{\|z\|\le B} \Bcal(\hat S_F(z),  S_F(z)) = O_p(n^{-1/2} + N^{-1/2}).$$
Finally, as mentioned in the beginning of this section, for each $j=1,\ldots,d$, the same argument as in \citet{zhouWassersteinRegressionEmpirical2024} yields the marginal rates $\sup_{\|z\|\le B} d_W(\hat\omega_{F, j}(z), \omega_{F, j}(z)) = O_p(n^{-1/2} + r_N)$.
Combining these rates completes the proof of Theorem~\ref{thm:frechet-real-rate}.
\qed

\subsection{Technical lemmas}\label{sec:supp-realcase-lemmas}

\begin{lemma}\label{lemma:supp-empirical-corrlowerbound}
    Under Assumptions~\ref{assump:correlation-mineigen} and \ref{assump:observation-sizes},
    $$ \lim_{n\to\infty} \P \bigg(\min_{i=1,\ldots, n} \lambda_{\min}(\hat{S}_i) \ge \frac{\lambda_0}{2} \bigg) = 1.$$
\end{lemma}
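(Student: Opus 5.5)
Since $\lambda_{\min}(S_i)\ge\lambda_0$ almost surely by Assumption~\ref{assump:correlation-mineigen}(i), Weyl's inequality gives
$$\lambda_{\min}(\hat S_i)\ge \lambda_0-\|\hat S_i-S_i\|_{op}\ge \lambda_0-\|\hat S_i-S_i\|_F .$$
It therefore suffices to show
$$\P\Big(\max_{1\le i\le n}\|\hat S_i-S_i\|_F>\lambda_0/2\Big)\to 0 .$$
The plan is to combine the elementwise sub-Gaussian concentration \eqref{eq:element-wise-concentration} of the Kendall's $\tau$ sine-transform estimator with a union bound over the $n$ subjects and the $d(d-1)/2$ off-diagonal entries.

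The concentration must be applied conditionally on the latent correlation. Conditional on $\omega_i$ (hence on $S_i$), the $N_i$ samples are i.i.d.\ from $\omega_i\in NPN(d)$. Inequality \eqref{eq:element-wise-concentration} is Hoeffding's inequality for a U-statistic with bounded kernel composed with the $\pi/2$-Lipschitz sine map, so its constant does not depend on $S_i$ or on the marginals. Hence, for every $j\neq k$ and $t>0$,
$$\P\big(|(\hat S_i)_{jk}-(S_i)_{jk}|>t \,\big|\, \omega_i\big)\le 2\exp(-N_i t^2/\pi^2)\le 2\exp(-Nt^2/\pi^2).$$
The diagonals of $\hat S_i$ and $S_i$ both equal $1$, so $\|\hat S_i-S_i\|_F>\lambda_0/2$ forces some off-diagonal entry to deviate by more than $t_0:=\lambda_0/(2d)$. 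This holds because $\|\cdot\|_F\le d\max_{jk}|\cdot|$.

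Taking expectations over $\omega_i$ and applying the union bound over $i$ and the pairs $(j,k)$ gives
$$\P\Big(\max_i\|\hat S_i-S_i\|_F>\lambda_0/2\Big)\le n\, d(d-1)\exp\!\Big(-\frac{N\lambda_0^2}{4d^2\pi^2}\Big)=d(d-1)\exp\!\Big(\log n-cN\Big),$$
with $c=\lambda_0^2/(4d^2\pi^2)$. By Assumption~\ref{assump:observation-sizes}, $N\to\infty$ and $(\log n)/N\to0$. Then $\log n-cN=N\big((\log n)/N-c\big)\to-\infty$, so the bound vanishes with $d$ fixed, which proves the lemma.

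The main obstacle is justifying that the Hoeffding constant in \eqref{eq:element-wise-concentration} is uniform over the random $S_i$, so that the conditional bound can be integrated. I would handle this by recalling the structure of the estimator from \citet{liuHighdimensionalSemiparametricGaussian2012}. Kendall's $\hat\tau$ is a U-statistic of order two with kernel bounded in $[-1,1]$, so Hoeffding's U-statistic inequality gives $\P(|\hat\tau-\tau|>s)\le2\exp(-Ns^2/4)$ for any distribution. The map $\tau\mapsto\sin(\pi\tau/2)$ is $\pi/2$-Lipschitz. Neither step involves the underlying correlation, which yields the uniformity. I would also note that the $N_i$ may differ across subjects, which is why the bound is stated in terms of the minimum $N$.
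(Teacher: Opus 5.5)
Your proposal is correct and follows the paper's proof essentially step for step. Both arguments use Weyl's inequality to reduce the claim to $\max_i \|\hat S_i - S_i\|_F \le \lambda_0/2$, then apply the entrywise concentration \eqref{eq:element-wise-concentration} at threshold $\lambda_0/(2d)$ with a union bound over entries and subjects, giving a bound of order $n\,d^2\exp(-N\lambda_0^2/(4\pi^2 d^2))\to 0$ under Assumption~\ref{assump:observation-sizes}. Your explicit conditioning on $\omega_i$, with the remark that the Hoeffding constant does not depend on the distribution, is a justification the paper leaves implicit, and counting only off-diagonal pairs gives you the slightly sharper prefactor $d(d-1)$ in place of the paper's $2d^2$.
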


\begin{proof}
    By Weyl's inequality, we have $|\lambda_{\min}(\hat{S}_i) - \lambda_{\min}( S_i)| \le \|\hat{S}_i -  S_i\|_{op}$. Using the matrix norm relation $\|A\|_{op}\le \|A\|_F $, we have
    $$\lambda_{\min}(\hat{S}_i) \ge \lambda_{\min}( S_i) - \|\hat{S}_i -  S_i\|_{op} \ge \lambda_{\min}( S_i) - \|\hat{S}_i -  S_i\|_F.$$
    Under Assumption~\ref{assump:correlation-mineigen}, $\lambda_{\min}( S_i) \ge \lambda_0$ almost surely for all $i$.
    Thus, the event $\{\min_{i} \lambda_{\min}(\hat{S}_i) \ge \lambda_0/2\}$ is implied by the event $\{\max_{i} \|\hat{S}_i -  S_i\|_{F} \le \lambda_0/2\}$.

    It thus remains to show that
    $\P\big( \max_{i=1,\ldots,n} \|\hat{S}_i -  S_i\|_{F} > \lambda_0/2 \big)\to 0$.
    Using the element-wise concentration of $\hat S_i$ as used in \eqref{eq:element-wise-concentration}, expanding the squared Frobenius norm followed by the union bound implies
    \begin{align*}
        \P(\|\hat S_i - S_i\|_F > \lambda_0/2)
        &\le 2d^2\exp\left(-N_i(\lambda_0/2\pi d)^2\right).
    \end{align*}
    Applying the union bound again and $N=\min_i N_i$, 
    \begin{align*}
        \P\left( \max_{i=1,\ldots,n} \|\hat{S}_i -  S_i\|_{F} > \frac{\lambda_0}{2} \right) 
        &\le \sum_{i=1}^n \P\left( \|\hat{S}_i -  S_i\|_{F} > \frac{\lambda_0}{2} \right) \\
        &\le 2n d^2 \exp\left( - N \left(\frac{\lambda_0}{2\pi d}\right)^2 \right).
    \end{align*}
    As $N/\log{n} \to \infty$ as $n\to \infty$ (Assumption~\ref{assump:observation-sizes}), the final term converges to zero, establishing the desired convergence in probability.

\end{proof}

\begin{lemma}\label{lemma:supp-weight-and-correlation-mse}
    Under Assumption~\ref{assump:observation-sizes}, we have
    \begin{align*}
        \sup_{\|z\|\le B} \frac{1}{n}\sum_{i=1}^n s_n(Z_i, z)^2 &= O_p(1),\\
        \frac{1}{n}\sum_{i=1}^n \|\hat{S}_i -  S_i\|_F^2 &= O_p(N^{-1}).
    \end{align*}
    Consequently,
    $$\sup_{\|z\|\le B}\left(\frac{1}{n}\sum_{i=1}^n s_n(Z_i, z)^2\right)^{1/2}\left(\frac{1}{n}\sum_{i=1}^n \|\hat{S}_i -  S_i\|_F^2\right)^{1/2} = O_p(N^{-1/2}).$$
\end{lemma}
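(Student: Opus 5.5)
For the first bound I will use the decomposition $s_n(Z_i,z)=s(Z_i,z)+\{s_n(Z_i,z)-s(Z_i,z)\}$, where $s_n(Z_i,z)-s(Z_i,z)=W_{0n}(z)+W_{1n}(z)^\top Z_i$. By $(a+b)^2\le 2a^2+2b^2$,
\begin{align*}
\frac1n\sum_{i=1}^n s_n(Z_i,z)^2 \le \frac2n\sum_{i=1}^n s(Z_i,z)^2 + \frac4n\sum_{i=1}^n\big\{W_{0n}(z)^2+\|W_{1n}(z)\|^2\|Z_i\|^2\big\}.
\end{align*}
Since $|s(Z_i,z)|\le 1+\|Z_i-\E[Z]\|\,\|\cov(Z)^{-1}\|_{op}(B+\|\E[Z]\|)$ uniformly over $\|z\|\le B$, the first average is at most $C\,n^{-1}\sum_i(1+\|Z_i\|^2)$. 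This is $O_p(1)$ by the law of large numbers, using the finite second (indeed fourth) moment of $Z$ already assumed for \eqref{eq:weight-convergence}. The quantities $\sup_{\|z\|\le B}|W_{0n}(z)|$ and $\sup_{\|z\|\le B}\|W_{1n}(z)\|$ are $O_p(n^{-1/2})$ by the same central limit theorem argument behind \eqref{eq:weight-convergence}, because $\overline Z$ and $\widehat\cov(Z)^{-1}$ are root-$n$ consistent and $W_{0n},W_{1n}$ are affine in $z$ on a compact set. Hence the second average is $o_p(1)$, and the first claim follows.

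For the second bound I will take expectations conditionally on the random measures $\omega_i$. For each $i$, the estimate $\hat S_i$ is the Kendall's $\tau$ sine estimator built from $N_i$ i.i.d. samples of $\omega_i$, so the concentration \eqref{eq:element-wise-concentration} holds conditionally on $S_i$. The constant $\pi^2$ there does not depend on $S_i$, so the computation giving \eqref{eq:Frob-upperbound-expectation} yields
\begin{align*}
\E\big[\|\hat S_i-S_i\|_F^2\,\big|\,S_i\big]\le \frac{2\pi^2 d(d-1)}{N_i}\le\frac{2\pi^2 d(d-1)}{N}.
\end{align*}
Averaging over $i$ and taking full expectations gives $\E[n^{-1}\sum_i\|\hat S_i-S_i\|_F^2]\le 2\pi^2 d(d-1)/N$. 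Markov's inequality then gives $O_p(N^{-1})$, and this is uniform in $n$. Assumption~\ref{assump:observation-sizes} is needed only to ensure $N\to\infty$ and that the $N_i$ are well defined. The bound itself needs no further restriction, because the diagonals of $\hat S_i-S_i$ vanish.

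The final claim is the product of the square roots of the two bounds: $O_p(1)^{1/2}\cdot O_p(N^{-1})^{1/2}=O_p(N^{-1/2})$. The supremum over $z$ affects only the first factor, since the second factor does not depend on $z$.

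The main point to check is the uniformity in $z$ of the weight bound. This reduces to the uniform root-$n$ control of $W_{0n}$ and $W_{1n}$, which follows from the linear-in-$z$ structure together with the invertibility of $\cov(Z)$ (so that $\widehat\cov(Z)^{-1}$ exists with probability tending to one).
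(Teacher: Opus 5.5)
Your proof is correct, and your treatment of the second bound matches the paper's: apply \eqref{eq:Frob-upperbound-expectation} to each $i$, average, and use Markov. Your explicit conditioning on $S_i$ is a detail the paper leaves implicit; it is justified because the constant in \eqref{eq:element-wise-concentration} does not depend on $S_i$.

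For the first bound, the paper does not perturb around the population weights. It uses the exact identity
\[
\frac1n\sum_{i=1}^n s_n(Z_i,z)^2 = 1 + (z-\overline{Z})^\top\widehat{\cov}(Z)^{-1}(z-\overline{Z}),
\]
obtained by expanding the square. The cross term vanishes because $\sum_i (Z_i-\overline{Z})=0$, and the quadratic term collapses because $n^{-1}\sum_i (Z_i-\overline{Z})(Z_i-\overline{Z})^\top=\widehat{\cov}(Z)$. Uniform boundedness over $\|z\|\le B$ then needs only consistency of $\overline{Z}$ and $\widehat{\cov}(Z)$, so that $\|\widehat{\cov}(Z)^{-1}\|_{op}=O_p(1)$. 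That is just the law of large numbers with second moments.

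Your decomposition through $W_{0n}$ and $W_{1n}$ also works, but it is more roundabout. As written, it leans on the CLT-based root-$n$ rates, which need fourth moments. Mere $O_p(1)$ control of $\sup_{\|z\|\le B}|W_{0n}(z)|$ and $\sup_{\|z\|\le B}\|W_{1n}(z)\|$ would already suffice.

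What your route buys is generality. It applies to any empirical weights that are uniformly close to population weights with a square-integrable envelope. The paper's identity is specific to the global linear weight, but it is exact and needs weaker moment conditions.
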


\begin{proof}
    By a direct computation, for all $z$,
    $$ \frac{1}{n}\sum_{i=1}^n s_n(Z_i, z)^2 = 1 + (z - \overline{Z})^\top \widehat{\cov}(Z)^{-1} (z - \overline{Z}). $$
    This is uniformly $O_p(1)$ by the law of large numbers and compactness of $\|z\|\le B$.

    For the second claim, we use the upper bound \eqref{eq:Frob-upperbound-expectation} obtained in the proof of Theorem~\ref{thm:npt-convergence-rate}: 
    $$\E[\|\hat{S}_i -  S_i\|_F^2]\le 2d(d-1)\pi^2/N_i \le 2d(d-1)\pi^2/N.$$
    By the linearity of expectation and Markov's inequality,
    $$ \frac{1}{n}\sum_{i=1}^n \|\hat{S}_i -  S_i\|_F^2 = O_p(N^{-1}), $$
    completing the proof.
\end{proof}

\begin{lemma}\label{lemma:consistency-hat}
    Under Assumptions \ref{assump:correlation-mineigen}, \ref{assump:correlation-uniqueness} and \ref{assump:observation-sizes},
    $$ \sup_{\|z\|\le B} \|\hat S_F(z) -  S_F(z)\|_F = o_p(1). $$
\end{lemma}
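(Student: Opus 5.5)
We will follow the same $M$-estimation template as Lemma~\ref{lemma:slow-rate}. Uniform convergence of the empirical objective $\hat F_n$ to the population objective $F$ over $\Theta=\{(z,\s):\|z\|\le B,\ \s\in\Ecal_d\}$ will be combined with the uniform separation of Lemma~\ref{lemma:uniform-separation}. Concretely, set $\hat r_n=\sup_{\Theta}|\hat F_n(z,\s)-F(z,\s)|$. Then the chain of inequalities in the proof of Lemma~\ref{lemma:slow-rate}, with $\tilde F_n$ replaced by $\hat F_n$, gives
\[
\sup_{\|z\|\le B}\{F(z,\hat S_F(z))-F(z,S_F(z))\}\le 2\hat r_n .
\]
Therefore it suffices to show $\hat r_n=o_p(1)$. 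Once this holds, for every $\delta>0$ the separation gap of Lemma~\ref{lemma:uniform-separation} exceeds $2\hat r_n$ with probability tending to one, which forces $\sup_z\|\hat S_F(z)-S_F(z)\|_F<\delta$.

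To control $\hat r_n$, we split it by the triangle inequality:
\[
\hat r_n\le \sup_{\Theta}|\hat F_n-\tilde F_n|+\sup_{\Theta}|\tilde F_n-F|.
\]
The second term is $O_p(n^{-1/2})$ by Lemma~\ref{lemma:F-uniform-convergence}. For the first term, we use the Hölder bound of Lemma~\ref{lemma:lipschitz-holder-bw}(1), applied with the roles arranged so that the varying argument is $S_i$ versus $\hat S_i$ and $\s$ is fixed. Symmetry of $\Bcal$ and $\|\s\|_{op}\le d$ give
\[
|\Bcal^2(\hat S_i,\s)-\Bcal^2(S_i,\s)|\le 2d^{5/4}\|\hat S_i-S_i\|_F^{1/2}
\]
uniformly in $\s\in\Ecal_d$. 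It follows that
\[
\sup_{\Theta}|\hat F_n-\tilde F_n|\le 2d^{5/4}\sup_{\|z\|\le B}\frac1n\sum_{i=1}^n|s_n(Z_i,z)|\,\|\hat S_i-S_i\|_F^{1/2}.
\]
By Cauchy--Schwarz, the right-hand side is at most
\[
2d^{5/4}\Big(\sup_z\tfrac1n\textstyle\sum_i s_n(Z_i,z)^2\Big)^{1/2}\Big(\tfrac1n\textstyle\sum_i\|\hat S_i-S_i\|_F\Big)^{1/2}.
\]
Lemma~\ref{lemma:supp-weight-and-correlation-mse} makes the first factor $O_p(1)$. For the second factor, Jensen's inequality gives $\frac1n\sum_i\|\hat S_i-S_i\|_F\le(\frac1n\sum_i\|\hat S_i-S_i\|_F^2)^{1/2}=O_p(N^{-1/2})$, again by Lemma~\ref{lemma:supp-weight-and-correlation-mse}. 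Hence $\sup_\Theta|\hat F_n-\tilde F_n|=O_p(N^{-1/4})=o_p(1)$, since $N\to\infty$ under Assumption~\ref{assump:observation-sizes}.

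The point that needs care is that this argument must not use any eigenvalue lower bound on $\hat S_i$ or on $\hat S_F(z)$: consistency is exactly what will later provide such bounds. The Hölder estimate of Lemma~\ref{lemma:lipschitz-holder-bw}(1) holds on all of $\Ecal_d$, which is why we route through it rather than through the Lipschitz bounds of Lemma~\ref{lemma:lipschitz-transport}. We also need the minimizer $\hat S_F(z)$ to exist on the compact set $\Ecal_d$, which holds by continuity of $\Bcal^2$. Assumption~\ref{assump:correlation-mineigen} then enters only through Lemma~\ref{lemma:uniform-separation}, via uniqueness under Assumption~\ref{assump:correlation-uniqueness}, and through the setting of Lemma~\ref{lemma:supp-weight-and-correlation-mse}. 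Otherwise the proof is a direct adaptation of Lemma~\ref{lemma:slow-rate}.
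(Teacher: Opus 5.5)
Your proposal is correct and follows the paper's own proof essentially step for step. You use the same triangle-inequality split of $\sup_\Theta|\hat F_n-F|$, the same H\"older bound from Lemma~\ref{lemma:lipschitz-holder-bw}(1) combined with Cauchy--Schwarz/Jensen to get $O_p(n^{-1/2}+N^{-1/4})$, and the same $M$-estimation step via the separation of Lemma~\ref{lemma:uniform-separation}. One small aside: Lemma~\ref{lemma:uniform-separation} needs only Assumption~\ref{assump:correlation-uniqueness}, so Assumption~\ref{assump:correlation-mineigen} is not actually invoked anywhere in this argument, contrary to your closing remark.
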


\begin{proof}
    We first show the uniform convergence of the objective function $\hat F_n$ to $F$ on the set $\Theta = \{(z, \s)\in\R^p\times \Ecal_d: \|z\|\le B \}$ as defined in Lemma~\ref{lemma:F-uniform-convergence}.
    By triangle inequality,
    $$ \sup_{(z, \s)\in\Theta} |\hat F_n(z, \s) - F(z, \s)| \le \sup_{(z, \s)\in\Theta} |\hat F_n(z, \s) - \tilde F_n(z, \s)| + \sup_{(z, \s)\in\Theta} |\tilde F_n(z, \s) - F(z, \s)|. $$
    Lemma~\ref{lemma:F-uniform-convergence} establishes that the second term is $O_p(n^{-1/2}) = o_p(1)$.
    For the first term, the first inequality of Lemma~\ref{lemma:lipschitz-holder-bw} implies that there exists a constant $C > 0$ such that
    $$ |\hat F_n(z, \s) - \tilde F_n(z, \s)| \le C\,\frac{1}{n}\sum_{i=1}^n |s_n(Z_i, z)| \cdot \|\hat{S}_i -  S_i\|_F^{1/2}.
    $$
    Applying the Cauchy--Schwarz inequality twice yields
    \begin{align*}
        \sup_{(z, \s)\in\Theta} |\hat F_n(z, \s) - \tilde F_n(z, \s)|
        &\le C \sup_{\|z\|\le B}\left(\frac{1}{n}\sum_{i=1}^n s_n(Z_i, z)^2\right)^{1/2} \left(\frac{1}{n}\sum_{i=1}^n \|\hat S_i -  S_i\|_F\right)^{1/2}.
        \\
        &\le C \sup_{\|z\|\le B}\left(\frac{1}{n}\sum_{i=1}^n s_n(Z_i, z)^2\right)^{1/2} \left(\frac{1}{n}\sum_{i=1}^n \|\hat S_i -  S_i\|_F^2\right)^{1/4}.
    \end{align*}
    By Lemma~\ref{lemma:supp-weight-and-correlation-mse}, we obtain the rate of the right-hand side as 
    $$\sup_{(z, \s)\in\Theta} |\hat F_n(z, \s) - F(z, \s)| = O_p(n^{-1/2} + N^{-1/4}) = o_p(1).$$
    
    Building on this uniform convergence of $\hat F_n$ to $F$ and the well-separation of the minimum (Lemma~\ref{lemma:uniform-separation}), the standard M-estimation argument (analogous to Lemma~\ref{lemma:slow-rate}) implies the uniform consistency of $\hat S_F(z)$.
\end{proof}

\begin{lemma}\label{lemma:uniform-pd-hat}
    Let $S_n(z)$ be \textit{any} sequence of random $\Ecal_d$-valued functions such that 
    $$\sup_{\|z\|\le B} \|S_n(z) -  S_F(z)\|_F = o_p(1).$$
    Under Assumptions \ref{assump:correlation-mineigen}--\ref{assump:observation-sizes}, the empirical Hessian $\hat H_n(z,\cdot)$ becomes uniformly positive definite at $S_n(z)$:
    $$\lim_{n\to\infty} \P\bigg(\inf_{\|z\|\le B} \inf_{M\in\Tcal,\, \|M\|_F= 1}\hat H_n(z, S_n(z))[M, M] \ge \frac{\kappa}{4} \bigg) = 1. $$
\end{lemma}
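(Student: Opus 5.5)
We compare $\hat H_n(z, S_n(z))$ with the population Hessian $H(z)$ at $S_F(z)$, which satisfies $H(z)[M,M]\ge\kappa\|M\|_F^2$ on $\Tcal$ by Assumption~\ref{assump:local-convexity-radius}. It then suffices to show
$$\sup_{\|z\|\le B}\|\hat H_n(z,S_n(z)) - H(z)\|_\Tcal = o_p(1),$$
since then $\hat H_n(z,S_n(z))[M,M]\ge (\kappa - o_p(1))\|M\|_F^2\ge \tfrac{\kappa}{4}\|M\|_F^2$ uniformly with probability tending to one.

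We work on the joint event on which three conditions hold. First, $\min_i\lambda_{\min}(\hat S_i)\ge\lambda_0/2$, which has probability tending to one by Lemma~\ref{lemma:supp-empirical-corrlowerbound}. Second, $\inf_{\|z\|\le B}\lambda_{\min}(S_n(z))\ge\lambda_0/2$, which follows from Weyl's inequality, Assumption~\ref{assump:correlation-mineigen}(ii) and the assumed uniform consistency, exactly as in Lemma~\ref{lemma:slow-rate}. Third, $\lambda_{\min}(S_i)\ge\lambda_0$ almost surely. On this event every matrix involved lies in the compact convex set $\Ecal_d(\lambda_0/2)$, so Lemmas~\ref{lemma:derivative-bounds} and \ref{lemma:lipschitz-transport} apply with a single constant $C$.

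We then split the difference into three terms:
\begin{align*}
\hat H_n(z,S_n(z)) - H(z) &= \big[\hat H_n(z,S_n(z)) - \tilde H_n(z,S_n(z))\big] + \big[\tilde H_n(z,S_n(z)) - \tilde H_n(z,S_F(z))\big] \\
&\quad + \big[\tilde H_n(z) - H(z)\big],
\end{align*}
where $\tilde H_n(z,\s) := -n^{-1}\sum_i s_n(Z_i,z)\, dT_\s^{S_i}$, so that $\tilde H_n(z)=\tilde H_n(z,S_F(z))$.

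The third term is $O_p(n^{-1/2})$ uniformly by Lemma~\ref{lemma:H-K-uniform-convergence}. For the second term, Lemma~\ref{lemma:lipschitz-transport}(2) gives $\|dT_{S_n(z)}^{S_i} - dT_{S_F(z)}^{S_i}\|_\Tcal\le C\|S_n(z)-S_F(z)\|_F$. Hence this term is at most
$$C\,\Big(\sup_z n^{-1}\sum_i|s_n(Z_i,z)|\Big)\,\sup_z\|S_n(z)-S_F(z)\|_F = O_p(1)\,o_p(1),$$
by \eqref{eq:weight-uniform-bound}. For the first term, Lemma~\ref{lemma:lipschitz-transport}(1), applied with source $S_n(z)$, gives $\|dT_{S_n(z)}^{\hat S_i} - dT_{S_n(z)}^{S_i}\|_\Tcal\le C\|\hat S_i - S_i\|_F$. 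The Cauchy--Schwarz step of Lemma~\ref{lemma:supp-weight-and-correlation-mse} then bounds this term uniformly by $O_p(N^{-1/2})=o_p(1)$, using Assumption~\ref{assump:observation-sizes}.

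The main obstacle is ensuring that the Lipschitz constants are uniform. Lemma~\ref{lemma:lipschitz-transport} requires both arguments to lie in $\Ecal_d(\lambda)$, and the segment arguments inside it require convexity. This is why we must first restrict to the high-probability event on which $S_n(z)$, $S_F(z)$, the $S_i$ and the $\hat S_i$ all lie in $\Ecal_d(\lambda_0/2)$. Because that event has probability tending to one, the resulting $o_p(1)$ bounds transfer to the unconditional statement. The polar-norm bounds are used only through $|L[M,M]|\le\|L\|_\Tcal\|M\|_F^2$, so no constant from \eqref{eq:polar-norm-equivalence} enters.
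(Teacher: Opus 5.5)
Your proof is correct and follows the paper's argument. It uses the same three-term split (the oracle Hessian at $S_F(z)$, the shift from $S_F(z)$ to $S_n(z)$, and the $\hat S_i$ versus $S_i$ gap), works on the same high-probability event where all matrices lie in $\Ecal_d(\lambda_0/2)$, and uses the same Cauchy--Schwarz bound for the last piece. The only difference is minor: you control the middle term directly by Lemma~\ref{lemma:lipschitz-transport}(2) together with \eqref{eq:weight-uniform-bound}, whereas the paper uses the mean value inequality with the $O_p(1)$ bound on $\tilde K_n$ from Lemma~\ref{lemma:H-K-boundedness}, and these are equivalent.
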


\begin{proof}
    We decompose the Hessian as:
    $$ \hat H_n(z, S_n(z)) = \underbrace{\tilde H_n(z,  S_F(z))}_{(i)} + \underbrace{[\tilde H_n(z, S_n(z)) - \tilde H_n(z,  S_F(z))]}_{(ii)} + \underbrace{[\hat H_n(z, S_n(z)) - \tilde H_n(z, S_n(z))]}_{(iii)}. $$
    From Lemma~\ref{lemma:H-K-boundedness}, the first term $(i)$ satisfies $$\inf_{\|z\|\le B}\inf_{M\in\Tcal,\, \|M\|_F= 1}\tilde H_n(z,  S_F(z))[M, M] \ge \kappa/2$$ with probability approaching one.
    
    For the second term $(ii)$, we use the smoothness of the oracle Hessian $\tilde H_n(z, \cdot)$.
    Recall that $\tilde H_n(z, \s) = \nabla_\s^2 \tilde F_n(z, \s)$ and $\tilde K_n(z, \s) = \nabla_\s^3 \tilde F_n(z, \s)$ as defined in Section~\ref{sec:supp-proof-oracle-rate}. By Lemma~\ref{lemma:H-K-boundedness}, 
    $$\sup_{\|z\|\le B,\,  \s\in\Ecal_d(\lambda_0/2)} \|\tilde K_n(z, \s)\|_{\Tcal}  = O_p(1),$$
    with $\lambda_0$ from Assumption~\ref{assump:correlation-mineigen}. Since $S_n(z)$ converges uniformly to $ S_F(z)$, $\inf_{\|z\|\le B} \lambda_{\min}(S_n(z))\ge \lambda_0/2 $ holds with probability approaching one. Then, by mean value inequality (Lemma~\ref{lemma:mean-value-inequality}) and smoothness,
    \begin{align*}
        \sup_{\|z\|\le B}\|\tilde H_n(z, S_n(z)) - \tilde H_n(z,  S_F(z))\|_{\Tcal} &\le \left(\sup_{\|z\|\le B,\,  \s\in\Ecal_d(\lambda_0/2)} \|\tilde K_n(z, \s)\|_{\Tcal}\right)\cdot \sup_{\|z\|\le B}\|S_n(z) -  S_F(z)\|_F \\
        &= o_p(1).
    \end{align*}
    
    For the third term $(iii)$, we control the gap between $\hat H$ and $\tilde H$ using the global Lipschitz continuity of $Q \mapsto dT_\s^Q$ on $\Ecal_d(\lambda_0/2)$ (Lemma~\ref{lemma:lipschitz-transport}).
    On the joint event $\{\inf_{\|z\|\le B}\lambda_{\min}(S_n(z)) \ge \lambda_0/2\}\cap\{\inf_i \lambda_{\min}(\hat S_i) \ge \lambda_0/2\}$, which has probability approaching one, there exists a constant $C>0$ such that
    \begin{align*}
        \sup_{\|z\|\le B}\|\hat H_n(z, S_n(z)) - \tilde H_n(z, S_n(z))\|_{\Tcal}
        &\le \sup_{\|z\|\le B}\frac{1}{n}\sum_{i=1}^n |s_n(Z_i, z)| \|dT_{S_n(z)}^{\hat S_i} - dT_{S_n(z)}^{S_i}\|_{\Tcal} \\
        &\le \sup_{\|z\|\le B}\frac{C}{n}\sum_{i=1}^n |s_n(Z_i, z)| \|\hat S_i -  S_i\|_F \\
        &\le C \left(\sup_{\|z\|\le B}\frac{1}{n}\sum_{i=1}^n s_n(Z_i, z)^2\right)^{1/2} \left(\frac{1}{n}\sum_{i=1}^n \|\hat S_i -  S_i\|_F^2\right)^{1/2},
    \end{align*}
    which is $o_p(1)$ by Lemma~\ref{lemma:supp-weight-and-correlation-mse}.
    
    Therefore,
    $$ \inf_{\|z\|\le B} \inf_{M\in\Tcal,\, \|M\|_F= 1}\hat H_n(z, S_n(z))[M, M] \ge \frac{\kappa}{2} - o_p(1) - o_p(1), $$
    which is at least $\kappa/4$ for sufficiently large $n$ with probability approaching 1.
\end{proof}

\subsection{Proof of Corollary~\ref{cor:wasserstein-rate}}\label{sec:supp-wass-cor-proof}

By Theorem~\ref{thm:topological-equivalence}, 
\begin{align*}
    d_{W}(\tilde \omega_F(z),\omega_F(z)) &\le \sqrt{2}\big(1 \vee |T_{\bgamma}^{\bomega_F(z)}|_{H^1(\gamma)}\big) d_{NPT}(\tilde\omega_F(z),\omega_F(z)), \quad\text{and} \\ 
    d_{W}(\hat \omega_F(z),\omega_F(z)) &\le \sqrt{2}\big(1 \vee |T_{\bgamma}^{\bomega_F(z)}|_{H^1(\gamma)}\big) d_{NPT}(\hat \omega_F(z),\omega_F(z)).
\end{align*}
Thus, Theorems~\ref{thm:frechet-oracle-rate} and \ref{thm:frechet-real-rate} deduce the desired rates of Corollary~\ref{cor:wasserstein-rate}.
\qed

\section{Verification of Assumptions \ref{assump:correlation-uniqueness} and \ref{assump:local-convexity-radius}}\label{sec:supp-verify-assump-equicorr}

In this section, we verify Assumptions~\ref{assump:correlation-uniqueness} and \ref{assump:local-convexity-radius} in several settings: when $S$ is supported in the bivariate correlation family, and when $Z$ and $S$ are independent or weakly dependent. The idea is to verify the \textit{convexity} of the objective function $F(z,\cdot)$ under nondegeneracy Assumption~\ref{assump:correlation-mineigen}. While we here focus on this global property, we note that Assumptions~\ref{assump:correlation-uniqueness} and \ref{assump:local-convexity-radius} are substantially milder than requiring global convexity.

\subsection{Bivariate case}

In the bivariate case, we follow a similar argument in the empirical case discussed in Section~\ref{sec:supp-one-step-bivariate}. Any correlation matrix $\s \in \Ecal_2$ is parameterized by a single correlation coefficient $\rho \in [-1, 1]$, given by
$$ \s(\rho) = \begin{pmatrix} 1 & \rho \\ \rho & 1 \end{pmatrix}. $$
The tangent space $\Tcal$ corresponds to matrices of the form $M = \begin{pmatrix} 0 & h \\ h & 0 \end{pmatrix}$ for $h\in\R$, with squared Frobenius norm $\|M\|_F^2 = 2h^2$. We require Assumption~\ref{assump:correlation-mineigen}, which implies the minimizer $\rho_F(z)$ (corresponding to $ S_F(z)$) is uniformly bounded away from the boundary values $\pm 1$.

Let $S = \s(r)$ and $\s(\rho)$ be defined as above with $r, \rho \in (-1, 1)$. The squared Bures--Wasserstein distance can be written as
$$ \Bcal^2(r, \rho) = 4 - 2\left(\sqrt{1+r}\sqrt{1+\rho} + \sqrt{1-r}\sqrt{1-\rho}\right). $$
Then, the first and second derivatives of the population objective function $F(z, \s(\rho)) = \E[s(Z, z) \Bcal^2(S, \s(\rho))]$ with respect to $\rho$ are
\begin{align*}
    F'(z, \rho) &= -A(z)(1+\rho)^{-1/2} + D(z)(1-\rho)^{-1/2}, \\
    F''(z, \rho) &= \frac{1}{2}A(z)(1+\rho)^{-3/2} + \frac{1}{2}D(z)(1-\rho)^{-3/2},
\end{align*}
where $A(z) := \E[s(Z, z)\sqrt{1+r}]$ and $D(z) := \E[s(Z, z)\sqrt{1-r}]$. In the following, we show that $A(z)$ and $D(z)$ cannot be nonpositive under Assumption~\ref{assump:correlation-mineigen}, implying the global convexity of $F$ and verifying the assumptions.

\begin{proposition}\label{prop:bivariate-verification}
    Under Assumption~\ref{assump:correlation-mineigen}, Assumptions~\ref{assump:correlation-uniqueness} and \ref{assump:local-convexity-radius} hold in the bivariate case ($d=2$). 
\end{proposition}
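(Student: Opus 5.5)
The plan is to show that $A(z)>0$ and $D(z)>0$ for every $\|z\|\le B$. Positivity of these two quantities gives global strict convexity of $\rho\mapsto F(z,\s(\rho))$ on $(-1,1)$, and uniqueness and the Hessian lower bound then follow from the explicit formulas above.

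\textbf{Step 1: exclude nonpositive $A(z)$ or $D(z)$.} Write $\phi(\rho)=F(z,\s(\rho))=4-2A(z)\sqrt{1+\rho}-2D(z)\sqrt{1-\rho}$, which is continuous on $[-1,1]$. Assumption~\ref{assump:correlation-mineigen}(ii) says $\lambda_{\min}(S_F(z))=1-|\rho_F(z)|\ge\lambda_0$, so the minimizer $\rho_F(z)$ lies in $[-1+\lambda_0,1-\lambda_0]$. I would argue by cases, exactly as in the geometric discussion after Proposition~\ref{prop:one-step-bivar}.
\begin{itemize}
\item If $A(z)\le 0$ and $D(z)\le 0$, then $\phi$ is either constant (which would make every $\rho\in[-1,1]$, including $\pm1$, a minimizer) or has no interior minimum. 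In the latter case, if both are negative and nonzero, $-\sqrt{1+\rho}$ and $-\sqrt{1-\rho}$ with negative coefficients give a concave $\phi$, whose minimum is at an endpoint.
\item If exactly one of them is $\le 0$, say $A(z)\le0<D(z)$, then $\phi'(\rho)=-A(1+\rho)^{-1/2}+D(1-\rho)^{-1/2}>0$, so $\phi$ is strictly increasing and the minimizer is $\rho=-1$.
\end{itemize}
Each case contradicts $\rho_F(z)$ lying in the interior with margin $\lambda_0$. The constant case needs one more observation: $S_F(z)$ is defined as an argmin, so a constant $\phi$ admits a minimizer with $\lambda_{\min}=0$. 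I would treat this as incompatible with Assumption~\ref{assump:correlation-mineigen}(ii) read as a statement about every minimizer. Alternatively, I would note that $A\equiv D\equiv 0$ is impossible because $A(z)^2+D(z)^2$ relates to $\E[s(Z,z)\cdot]$ via $\sqrt{1+r}^2+\sqrt{1-r}^2=2$; this is the main delicate point. Hence $A(z),D(z)>0$.

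\textbf{Step 2: uniqueness and a uniform curvature bound.} With $A,D>0$ we have $F''>0$ on $(-1,1)$, so $\phi$ is strictly convex and its minimizer is unique. This gives Assumption~\ref{assump:correlation-uniqueness}, with the explicit form $\rho_F=(A^2-D^2)/(A^2+D^2)$. For Assumption~\ref{assump:local-convexity-radius}, the Hessian in direction $M$ with off-diagonal entry $h$ equals $F''(z,\rho_F(z))h^2=\tfrac12F''\|M\|_F^2$. So it suffices to bound $F''(z,\rho_F(z))\ge \tfrac12\{A(z)+D(z)\}2^{-3/2}$ from below uniformly in $z$.

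\textbf{Step 3: uniformity in $z$.} The maps $z\mapsto A(z)$ and $z\mapsto D(z)$ are affine in $z$, hence continuous, because $s(Z,z)$ is affine in $z$ and $\sqrt{1\pm r}\le\sqrt2$. On the compact ball $\|z\|\le B$, positivity therefore gives $\inf_z\min\{A(z),D(z)\}=:a_0>0$. Consequently $F''(z,\rho_F(z))\ge a_0 2^{-3/2}$, and $\kappa=a_0 2^{-5/2}$ works in Assumption~\ref{assump:local-convexity-radius}.
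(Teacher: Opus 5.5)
Your proposal is correct and follows the paper's proof essentially step for step. It uses the same case analysis showing that unless $A(z),D(z)>0$ the minimizer set of $F(z,\cdot)$ contains $\rho=\pm1$; like the paper, you exclude the flat case by reading Assumption~\ref{assump:correlation-mineigen}(ii) as constraining every minimizer. It then uses strict convexity for uniqueness, and affinity of $A,D$ in $z$ plus compactness to get the same uniform constant $2^{-5/2}\min_{\|z\|\le B}\min\{A(z),D(z)\}$.

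You should drop your ``alternative'' claim that $A(z)=D(z)=0$ is impossible because $(\sqrt{1+r})^2+(\sqrt{1-r})^2=2$. The weights $s(Z,z)$ can be negative, so $(A(z),D(z))$ is an affine combination of points on the arc $\{(x,y):x^2+y^2=2,\ x,y\ge 0\}$ and can equal $(0,0)$; the identity does not pass through the expectation.
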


\begin{proof}
    \noindent\textit{Step 1. Positivity of $A(z)$ and $D(z)$, and verification of Assumption~\ref{assump:correlation-uniqueness}.}
    As in Section~\ref{sec:supp-one-step-bivariate}, if $A(z)$ and $D(z)$ are not both positive, then $F(z,\cdot)$ is concave, monotone, or flat, and its minimizer set contains boundary points of $[-1,1]$. Assumption~\ref{assump:correlation-mineigen} implies $S_F(z)=\s(\rho_F(z))$ uniformly lies in the interior, hence $|\rho_F(z)|<1$, so boundary minimizers are impossible. Therefore $A(z)>0$ and $D(z)>0$, which in turn gives $F''(z,\rho)>0$ for all $\rho\in(-1,1)$ and uniqueness of $\rho_F(z)$---this verifies Assumption~\ref{assump:correlation-uniqueness}.

    \vspace{3mm}
    \noindent\textit{Step 2. Uniform Bounds on Coefficients.}
    Note that 
    $$A(z) = \E[\sqrt{1+r}] + \E[(Z-\E[Z])^\top \sqrt{1+r}]\cov(Z)^{-1}(z - \E[Z])$$ 
    is an affine function of $z$. Since $A(z) > 0$ for all $z$ in the compact set $\{z: \|z\| \le B\}$ (as established in Step 1), its minimum on $\{z: \|z\| \le B\}$ is attained and must be strictly positive.
    Thus, there exists $c > 0$ such that $\min_{\|z\| \le B} A(z) \ge c$ and $\min_{\|z\| \le B} D(z) \ge c$.

    \vspace{3mm}
    \noindent\textit{Step 3. Verification of Assumption~\ref{assump:local-convexity-radius} (Uniform Local Convexity).}
    At the unique minimizer $\rho_F=\rho_F(z)$, the second derivative is
    $$ F''(z, \rho_F) = \frac{1}{2}\frac{A(z)}{(1+\rho_F)^{3/2}} + \frac{1}{2}\frac{D(z)}{(1-\rho_F)^{3/2}}. $$
    Since $1 \pm \rho_F \le 2$, using the uniform lower bound $c$ for $A(z)$ and $D(z)$,
    $$ F''(z, \rho_F) \ge \frac{1}{2}\frac{c}{2^{3/2}} + \frac{1}{2}\frac{c}{2^{3/2}} = \frac{c}{2\sqrt{2}} =: \kappa/2 > 0. $$
    Translating to the tangent space: for $M \in \Tcal \subset\Scal_2$ with $\|M\|_F^2 = 2h^2$, %
    $$ H(z)[M, M] = F''(z, \rho_F) h^2 \ge \frac{\kappa}{2} h^2 = \frac{\kappa}{4} \|M\|_F^2. $$
    This verifies Assumption~\ref{assump:local-convexity-radius}, completing the proof.
\end{proof}

\subsection{Independent and weakly dependent cases}

Next, we verify Assumptions~3 and 4 for the case where the predictor $Z$ and the latent correlation $S$ are independent, then extend the argument to the weakly dependent case.

\begin{proposition}\label{prop:independence-verification} 
    Suppose that $Z$ and $S$ are independent. Under Assumption~2, Assumptions~3 and 4 hold. 
\end{proposition}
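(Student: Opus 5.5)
Under independence of $Z$ and $S$, the weight factorizes: $\E[s(Z,z)\Bcal^2(S,\s)]=\E[s(Z,z)]\,\E[\Bcal^2(S,\s)]=\E[\Bcal^2(S,\s)]$, since $\E[s(Z,z)]=1$. Hence $F(z,\s)=F_0(\s):=\E[\Bcal^2(S,\s)]$ does not depend on $z$, and $S_F(z)\equiv S_0$, the constrained Bures--Wasserstein barycenter of $S$ over $\Ecal_d$. Assumptions~\ref{assump:correlation-uniqueness} and \ref{assump:local-convexity-radius} therefore reduce to properties of one fixed function $F_0$ on $\Ecal_d$, and uniformity in $z$ is automatic. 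The plan is to prove that $F_0$ is strictly convex on $\Ecal_d$, in the flat sense of line segments, with a uniformly positive Hessian along tangent directions $M\in\Tcal$.

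The key computation concerns $f_Q(\s)=\Bcal^2(Q,\s)=\Tr Q+\Tr\s-2\Tr(Q^{1/2}\s Q^{1/2})^{1/2}$. On $\Ecal_d$ the first two terms equal the constant $2d$. The map $\s\mapsto Q^{1/2}\s Q^{1/2}$ is linear, and $A\mapsto\Tr A^{1/2}$ is strictly concave on $\Scal_d^{+}$, since the square root is operator concave and strictly so on $\Scal_d^{++}$. Consequently $f_Q$ is convex along segments in $\Ecal_d$. When $Q\succ0$ the linear map is injective, so $f_Q$ is strictly convex, and $F_0$ inherits strict convexity whenever $\lambda_{\min}(S)\ge\lambda_0>0$ almost surely (Assumption~\ref{assump:correlation-mineigen}(i)). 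Strict convexity on the convex compact set $\Ecal_d$ gives a unique minimizer, which verifies Assumption~\ref{assump:correlation-uniqueness}.

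For Assumption~\ref{assump:local-convexity-radius} we need a quantitative Hessian bound at $S_0$; by Assumption~\ref{assump:correlation-mineigen}(ii), $S_0\succeq\lambda_0 I_d$. Write $A=Q^{1/2}S_0Q^{1/2}$, which satisfies $\lambda_0^2 I_d\preceq A\preceq d^2 I_d$, and set $B=Q^{1/2}MQ^{1/2}$. The second derivative of $\Tr A^{1/2}$ in direction $B$ can be written in the eigenbasis of $A$ via divided differences of $t\mapsto t^{1/2}$:
\[
-\tfrac{d^2}{dt^2}\Tr(A+tB)^{1/2}\big|_{t=0}=\sum_{k,l}\frac{|B_{kl}|^2}{(\sqrt{a_k}+\sqrt{a_l})\sqrt{a_k}\sqrt{a_l}}\cdot\tfrac12\ge c(d,\lambda_0)\|B\|_F^2 .
\]
The inequality uses that the eigenvalues $a_k$ of $A$ are bounded above by $d^2$. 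Next, $\|B\|_F\ge\lambda_{\min}(Q)\|M\|_F\ge\lambda_0\|M\|_F$, which gives $\nabla^2 f_Q(S_0)[M,M]\ge c'\|M\|_F^2$ uniformly over $Q$ with $\lambda_{\min}(Q)\ge\lambda_0$. Taking expectations, with the positive weight identically $1$, yields $H(z)[M,M]\ge\kappa\|M\|_F^2$ with $\kappa=c'$ for all $z$.

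The main obstacle is this uniform Hessian lower bound. One must get the second-order expansion of the trace square root right and control the constants using only $\lambda_0$ and $d$. If the divided-difference formula proves awkward, a fallback is to use Lemma~\ref{lemma:transport-smoothness} and compactness. The function $(Q,M)\mapsto\nabla^2 f_Q(S_0)[M,M]$ is continuous on the compact set $\Ecal_d(\lambda_0)\times\{M\in\Tcal:\|M\|_F=1\}$, and it is strictly positive there by strict convexity; strict convexity alone does not give positive second derivative, so positivity must be checked from the formula above. Its minimum is then a valid $\kappa$.
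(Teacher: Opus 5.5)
Your proof is correct and follows essentially the same route as the paper. Independence collapses $F(z,\cdot)$ to $\E[\Bcal^2(S,\cdot)]$, and uniqueness comes from the (flat) strict convexity of $\s\mapsto\Bcal^2(S,\s)$ when $S\succeq\lambda_0 I_d$. Assumption~4 then follows from the same eigenbasis Hessian formula, which the paper cites from Kroshnin et al.\ rather than deriving, together with the bounds $a_k\le d^2$ and $\|Q^{1/2}MQ^{1/2}\|_F\ge\lambda_0\|M\|_F$; this gives the paper's constant $\lambda_0^2/(2d^3)$.

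One point to tighten: you need strict concavity of $A\mapsto\Tr A^{1/2}$ on all of $\Scal_d^{+}$, because $\Ecal_d$ contains singular matrices. Justify it from strict concavity of $t\mapsto\sqrt t$ on $[0,\infty)$ via the standard trace-function argument, not from operator concavity on $\Scal_d^{++}$.
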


\begin{proof} Using the independence between 
$Z$ and $S$, the population objective function decouples as \begin{align*}
    F(z, \s) &=  \E[s(Z, z)] \E[\Bcal^2(S, \s)],\\
     &= \E[\Bcal^2(S, \s)]
\end{align*} 
where the second equality follows from $\E[s(Z, z)] = 1$. 
Thus, the Fr\'echet regression function $S_F(z)$ becomes the constant Fr\'echet mean $\bar{S} := \argmin_{\s \in \Ecal_d} \E[\Bcal^2(S, \s)]$. The uniqueness Assumption~\ref{assump:correlation-uniqueness} of $\bar{S}$ follows from the \textit{strong convexity} of the squared BW metric on the set of positive definite matrices \citep{bhatiaBuresWassersteinDistance2019}, guaranteed under Assumption 2.

To verify Assumption~4, consider the Hessian $H_0 = \E[\nabla^2 \Bcal^2(S, \s)]|_{\s=\bar{S}}$ of $F(z, \cdot)$ at $\bar{S}$. Since $S \succeq \lambda_0 I_d$ almost surely, each functional $\s \mapsto \Bcal^2(S, \s)$ is strongly convex on $\Ecal_d^{++}$, implying that $H_0$ is strictly positive definite on the tangent space $\Tcal \subset \Scal_d$:
$$ \min_{M \in \Tcal, \|M\|_F=1} H_0[M, M] =: \kappa > 0; $$
see the weak dependence case below for the explicit lower bound for $\kappa$.
This verifies Assumption~4.
\end{proof}

\paragraph*{Weak dependence case}
We extend the argument of the independent case to the case where $Z$ and $S$ have small dependence, with some concrete mathematical derivations. We characterize the dependence as the covariance between $Z$ and the Hessian--quadratic form $\nabla^2_{\s}\Bcal^2(S,\s)[M,M]$ of the BW-based functional $\Bcal^2(S,\cdot)$. Specifically, define the curvature at $\s$ along the direction $M\in\Tcal$ as:
$$
\kappa_{\s,M}(S):=\nabla^2_{\s}\Bcal^2(S,\s)[M,M],\qquad \s\in\Ecal_d(\lambda_0/2),\ M\in\Tcal,\ \|M\|_F=1.
$$

We fist show that this term is positive.
By equation (A.2) of \citet{kroshninStatisticalInferenceBures2021}, 
\begin{align*}
    \kappa_{\s,M}(S) = 
    \sum_{i, j=1}^d\frac{\varDelta_{ij}^2}{\sqrt{\lambda_i\lambda_j}(\sqrt{\lambda_i} + \sqrt{\lambda_j})},
\end{align*}
where $\Lambda=\diag(\lambda_1,\ldots,\lambda_d)$ are eigenvalues of $S^{1/2}\s S^{1/2}$ satisfying $U\Lambda U^\top = S^{1/2}\s S^{1/2}$ and $\varDelta=U^\top S^{1/2}MS^{1/2}U$. Since $\|S^{1/2}\s S^{1/2}\|_{op} \le \|S\|_{op} \|\s\|_{op}$, we have the rough bound 
$$\lambda_i \le \lambda_{\max}(\s) \lambda_{\max}(S) \le d\, \lambda_{\max}(\s).$$
Also, 
$$\|\varDelta\|_F = \|S^{1/2}MS^{1/2}\|_F \ge  \lambda_{\min}(S) \|M\|_F  \ge \lambda_0\|M\|_F$$ 
by applying the inequality $\|AB\|_F \ge \lambda_{\min}(A) \|B\|_F$ twice (for symmetric positive semidefinite $A, B$). These bounds yield an $S$-independent lower bound on the curvature:
$$\kappa_{\s, M}(S) \ge \frac{\lambda_0^2 \|M\|_F^2}{2d^{3/2} \lambda^{3/2}_{\max}(\s)} \ge \frac{\lambda_0^2}{2d^3}\|M\|_F^2,$$
and hence the worst-case expected curvature satisfies
$$\kappa_0:= \inf_{\s\in \Ecal_d} \inf_{M\in\Tcal,\,\|M\|_F=1} \E[\kappa_{\s, M}(S)] \ge \frac{\lambda_0^2}{2d^3} > 0. $$

By standardizing the predictor $Z$, we assume $\E[Z] = 0$ and $\cov(Z) = I_p$ so that $s(Z, z) = 1 + Z^\top z$.
Then,
$$ H(z, \s)[M,M]:= \nabla^2_{\s}F(z,\s)[M,M]=\E[\kappa_{\s,M}(S)] + z^\top \E[Z\,\kappa_{\s,M}(S)]. $$
The first term is the baseline positive curvature. Thus, the convexity of $F(z, \cdot)$ is determined by the dependence-driven perturbation $\E[Z\,\kappa_{\s,M}(S)]$, which is exactly the covariance between $Z$ and the scalar curvature $\kappa_{\s, M}(S)$.
Define
$$
\varepsilon_{\mathrm{dep}}:=\sup_{\s\in\Ecal_d(\lambda_0/2)}\sup_{M\in\Tcal,\|M\|_F=1}\|\E[Z\,\kappa_{\s,M}(S)]\|,
$$
and $C:=\sup_{\|z\|\le B}\| z\|$. Here, $\varepsilon_{\mathrm{dep}}$ measures the maximal predictor--curvature covariance, which vanishes when $Z$ and $S$ are independent. When $C\varepsilon_{\mathrm{dep}} < \kappa_0$, we have
\begin{align*}
    \inf_{\|z\|\le B} \inf_{\s\in\Ecal_d(\lambda_0/2)} \inf_{M\in\Tcal,\,\|M\|_F=1} H(z, \s)[M, M] & \\
    \ge \inf_{\|z\|\le B} \inf_{\s\in\Ecal_d(\lambda_0/2)}\inf_{M\in\Tcal,\,\|M\|_F=1} &\E[\kappa_{\s, M}(S)] - \|z\|\, \|\E[Z\,\kappa_{\s, M}(S)]\| &\\
    \ge \kappa_0 - C \varepsilon_{\mathrm{dep}} > 0.& & 
\end{align*}
Hence Assumption~4 holds, because Assumption~2 places minimizers $S_F(z)$ in $\Ecal_d(\lambda_0)\subset \Ecal_d(\lambda_0/2)$ uniformly over $\|z\|\le B$. Moreover, the same lower bound implies that $F(z,\cdot)$ is strongly convex on the convex set $\Ecal_d(\lambda_0/2)$. Since $S_F(z)\subset\Ecal_d(\lambda_0)$ from Assumption~\ref{assump:correlation-mineigen}, any other minimizer does not lie outside $\Ecal_d(\lambda_0/2)$. We thus have the uniqueness Assumption~\ref{assump:correlation-uniqueness}. 
Note that independence is the special case $\varepsilon_{\mathrm{dep}}=0$. The following proposition summarizes the result:
\begin{proposition}\label{prop:weak-dependence-verification}
    Suppose that $\E[Z] = 0$ and $\cov(Z) = I_p$ by standardization. If $C\, \varepsilon_{\mathrm{dep}} < \kappa_0$ where $C := \sup_{\|z\|\le B} \|z\|$, then under Assumption~2, Assumptions~3 and 4 hold.
\end{proposition}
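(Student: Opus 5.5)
The statement is Proposition~\ref{prop:weak-dependence-verification}, and the derivation just above essentially gives it. I will organize the argument as a uniform lower bound on the Hessian, followed by strong convexity on a convex sublevel region, and finally a localization of minimizers.

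\emph{Step 1: Hessian identity.} After standardization, $s(Z,z)=1+Z^\top z$. For fixed $M\in\Tcal$ and $\s\in\Ecal_d(\lambda_0/2)$, I will write
\[
H(z,\s)[M,M]=\E[\kappa_{\s,M}(S)]+z^\top\E[Z\,\kappa_{\s,M}(S)].
\]
Differentiation under the expectation is justified by Lemma~\ref{lemma:derivative-bounds}: the curvature is uniformly bounded on $\Ecal_d(\lambda_0/2)$ because $S\succeq\lambda_0 I_d$ almost surely. The finite second moment of $Z$ then gives integrable envelopes.

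\emph{Step 2: Uniform curvature lower bound.} I will use the explicit formula of \citet{kroshninStatisticalInferenceBures2021}, the bound $\lambda_i\le d\,\lambda_{\max}(\s)\le d^2$, and $\|\varDelta\|_F\ge\lambda_0\|M\|_F$. Together these give $\kappa_{\s,M}(S)\ge\lambda_0^2/(2d^3)$ pointwise, so $\E[\kappa_{\s,M}(S)]\ge\kappa_0$. Cauchy--Schwarz then yields
\[
H(z,\s)[M,M]\ge\kappa_0-C\varepsilon_{\mathrm{dep}}=:\kappa'>0
\]
uniformly over $\|z\|\le B$, $\s\in\Ecal_d(\lambda_0/2)$, and unit $M\in\Tcal$.

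\emph{Step 3: Assumption~\ref{assump:local-convexity-radius}.} By Assumption~\ref{assump:correlation-mineigen}, $S_F(z)\in\Ecal_d(\lambda_0)\subset\Ecal_d(\lambda_0/2)$. Evaluating the bound of Step~2 at $\s=S_F(z)$ therefore gives Assumption~\ref{assump:local-convexity-radius} with $\kappa=\kappa'$.

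\emph{Step 4: Assumption~\ref{assump:correlation-uniqueness}.} The set $\Ecal_d(\lambda_0/2)$ is convex, and $\Tcal$ is the direction space of its affine hull. Hence the Hessian bound makes $F(z,\cdot)$ $\kappa'$-strongly convex along every segment in $\Ecal_d(\lambda_0/2)$, so it has at most one minimizer there. Any global minimizer must therefore coincide with $S_F(z)$.

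\emph{Main obstacle.} The delicate point is the case where another global minimizer $S'$ lies outside $\Ecal_d(\lambda_0/2)$, since strong convexity is only available inside that set. I would first read Assumption~\ref{assump:correlation-mineigen}(ii) as applying to every minimizer, which is how $S_F(z)$ is defined, and this closes the gap directly. If that reading is not accepted, I would fall back on the following argument. Consider the segment from $S_F(z)$ to $S'$. Along its initial portion, which stays in $\Ecal_d(\lambda_0/2)$ by continuity of $\lambda_{\min}$, $F$ strictly increases by strong convexity. I would then try to show that $F$ is convex along the whole segment. One route is to note that pointwise convexity of $\Bcal^2(S,\cdot)$ (Step~2 used only $S\succeq\lambda_0I$ and a bound on $\lambda_{\max}(\s)$) extends to all of $\Ecal_d^{++}$. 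The other route is to accept the weaker conclusion that the minimizer is unique within $\Ecal_d(\lambda_0/2)$, which is all the subsequent theory actually uses.
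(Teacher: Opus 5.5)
Your proposal is correct and follows the paper's own derivation essentially step for step: the same Hessian identity, the same pointwise curvature bound $\lambda_0^2/(2d^3)$, and the same Cauchy--Schwarz control of $z^\top\E[Z\,\kappa_{\s,M}(S)]$; uniqueness is settled the same way, by reading Assumption~\ref{assump:correlation-mineigen}(ii) as applying to every minimizer, which the paper does when it writes $S_F(z)\subset\Ecal_d(\lambda_0)$. Do not rely on your fallbacks, though: the first fails because pointwise convexity of $\Bcal^2(S,\cdot)$ on $\Ecal_d^{++}$ does not make $F(z,\cdot)$ convex there, since $s(Z,z)$ can be negative and $\varepsilon_{\mathrm{dep}}$ is only controlled on $\Ecal_d(\lambda_0/2)$; the second fails because uniqueness within $\Ecal_d(\lambda_0/2)$ is not all the later theory uses, as Lemma~\ref{lemma:uniform-separation} needs $F(z,\s)>F(z,S_F(z))$ for every $\s\in\Ecal_d$ bounded away from $S_F(z)$, singular ones included.
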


This result also suggests that, by setting $B$ small enough so that the domain of $z$ is near the mean $\E[Z]=0$, we can similarly verify Assumptions~\ref{assump:correlation-uniqueness} and \ref{assump:local-convexity-radius}.

\section{Auxiliary simulation results}\label{sec:supp-auxiliary-simulation}

Following the same settings as in Section~\ref{sec:simulations}, we evaluate the performance of the regression fits from NPT-FR, Marginal-FR, and Gaussian-FR using the Wasserstein distance in the MSPE on an independent test set. The corresponding MSPE is defined as
$$\text{MSPE}_{\text{wass}} = \frac{1}{n_{\text{te}}} \sum_{l=1}^{n_{\text{te}}} d_W^2(\hat  \omega_F(Z_l^{\text{te}}), \omega_l^{\text{te}}),$$
which is not decomposable into marginal and correlation components. Since the Wasserstein distance has no closed form computation, we approximate each $d_W^2(\hat  \omega_F(Z_l^{\text{te}}), \omega_l^{\text{te}})$ by generating $N_{err} = 2000$ i.i.d. samples from each $\omega_F(Z_l^{\text{te}})$ and $\omega_l^{\text{te}}$, followed by applying the \texttt{ot.emd2} function in the Python Optimal Transport library \citep{flamaryPOTPythonOptimal2021}.

\begin{figure}
    \centering
    \includegraphics[width=0.98\linewidth]{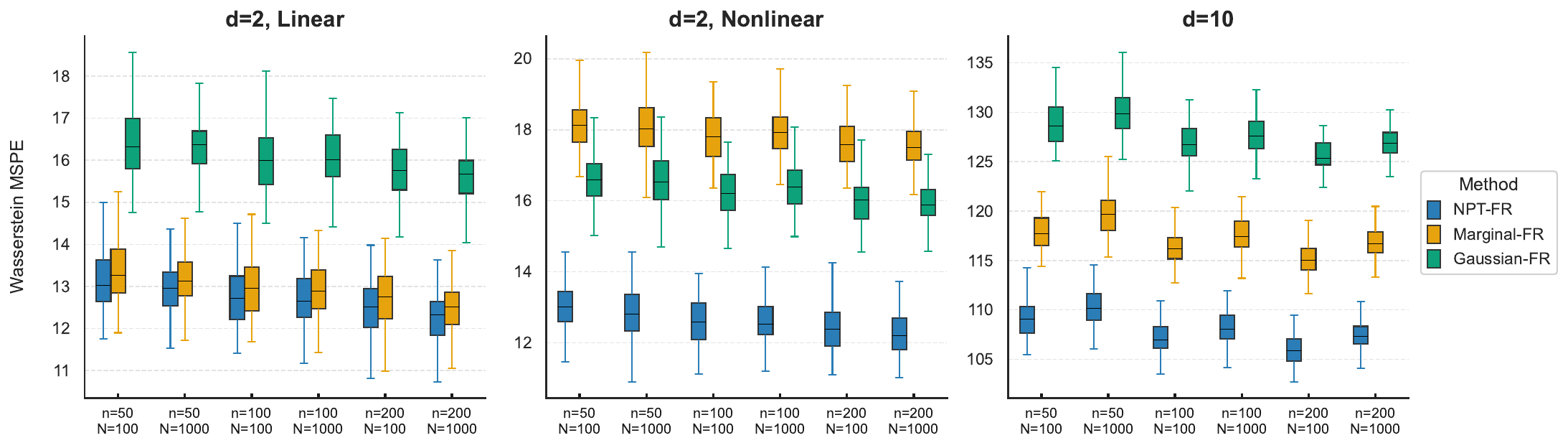}
    \caption{Out-of-sample mean squared prediction errors in $d_W$ ($\text{MSPE}_{\text{wass}}$) on $n^{\text{te}}=500$ test samples over $n_{\text{rep}}=100$ Monte Carlo replicates. Columns correspond to $d=2$ with linear correlation, $d=2$ with nonlinear correlation, and $d=10$. 
    }
    \label{fig:simulation-wasserstein-boxplots}
\end{figure}

Figure~\ref{fig:simulation-wasserstein-boxplots} presents boxplots of $\text{MSPE}_{\text{wass}}$. Overall, NPT-FR continues to outperform the other methods, with Marginal-FR performing comparably in the linear bivariate case due to a small correlation error introduced in that setting. Performance generally improves as $n$ increases across all scenarios. Unexpectedly, we observe a slight deterioration in performance as $N$ increases when $d=10$, despite the improvement in $\text{MSPE}_{\text{corr}}$ for NPT. This inconsistent deterioration suggests less numerical stability and inflated sampling error in the $d=10$ setting, where the Wasserstein distance suffers from the curse of dimensionality and is thus less reliable.

\putbib
\end{bibunit}

\end{document}